\def\ps@pprintTitle{%
 \let\@oddhead\@empty
 \let\@evenhead\@empty
 \def\@oddfoot{\centerline{\thepage}}%
 \let\@evenfoot\@oddfoot}
\newtheorem{theorem}{Theorem}
\newtheorem{lemma}{Lemma}
\newtheorem{corollary}{Corollary}
\newcommand{\NP}{\ensuremath{\mathtt{NP}}\xspace}
\newcommand{\PSPACE}{\ensuremath{\mathtt{PSPACE}}\xspace}
\newcommand{\Wtwo}{\ensuremath{\mathtt{W}[2]}\xspace}
\newcommand{\scal}{\ensuremath{\mathcal{S}}\xspace}
\newcommand{\gcal}{\ensuremath{\mathcal{G}}\xspace}
\newcommand{\ecal}{\ensuremath{\mathcal{E}}\xspace}
\newcommand{\tcal}{\ensuremath{\mathcal{T}}\xspace}
\newcommand{\tuple}[1]{\ensuremath{\langle {#1} \rangle}\xspace}
\newcommand{\tmax}{\ensuremath{t_{\max}}\xspace}
\newcommand{\activeVertex}{\ensuremath{\mathtt{active}}\xspace}
\newcommand{\activeCounter}{\ensuremath{\mathtt{counter}}\xspace}
\newcommand{\problemone}{\textsc{MaxSpread}\xspace}
\newcommand{\problemtwo}{\textsc{MaxViral}\xspace}
\newcommand{\problemthree}{\textsc{MaxViralTstep}\xspace}
\newcommand{\problemfour}{\textsc{MinNonViralTime}\xspace}
\newcommand{\poly}{\ensuremath{\mathtt{poly}}\xspace}
\newcommand{\setcover}{\textsc{SetCover}\xspace}
\newcounter{mycounter} 
\newcommand\showmycounter{\stepcounter{mycounter}\themycounter}
\newif\iflong
\newif\ifshort
\begin{document}

\begin{frontmatter}



\title{Being an Influencer is Hard: The Complexity of Influence Maximization in Temporal Graphs with a Fixed Source}



\author[1]{Argyrios Deligkas}
\ead{argyrios.deligkas@rhul.ac.uk}
\author[2]{Michelle D\"oring}
\ead{michelle.doering@hpi.de}
\author[1]{Eduard Eiben}
\ead{eduard.eiben@rhul.ac.uk}
\author[1]{Tiger-Lily Goldsmith}
\ead{tigerlily.goldsmith@gmail.com}
\author[2]{George Skretas}
\ead{georgios.skretas@hpi.de}

\address[1]{Royal Holloway, University of London, Egham, United Kingdom}
\address[2]{Hasso Plattner Institute, University of Potsdam, Potsdam, Germany}



\begin{abstract}
We consider the influence maximization problem over a temporal graph, where there is a single fixed source.
We deviate from the standard model of influence maximization, where the goal is to choose the set of most influential vertices.
Instead, in our model we are given a fixed vertex, or source, and the goal is to find the best time steps to transmit so that the influence of this vertex is maximized.
We frame this problem as a spreading process that follows a variant of the susceptible-infected-susceptible (SIS) model and we focus on four objective functions.
In the \problemone objective, the goal is to maximize the total number of vertices that get infected at least once.
In the \problemtwo objective, the goal is to maximize the number of vertices that are infected at the same time step.
In the \problemthree objective, the goal is to maximize the number of vertices that are infected at a given time step.
Finally, in \problemfour, the goal is to maximize the total number of vertices that get infected every $d$ time steps.
We perform a thorough complexity theoretic analysis for these four objectives over three different scenarios: (1) the unconstrained setting where the source can transmit whenever it wants; (2) the window-constrained setting where the source has to transmit at either a predetermined, or a shifting window; (3) the periodic setting where the temporal graph has a small period.
We prove that all of these problems, with the exception of \problemone for periodic graphs, are intractable even for very simple underlying graphs. 

\end{abstract}



\begin{keyword}
Influence Maximization; Social Networks; Temporal Graphs; Computational Complexity; Parameterized Complexity

\end{keyword}

\end{frontmatter}

\newcommand{\BibTeX}{\rm B\kern-.05em{\sc i\kern-.025em b}\kern-.08em\TeX}

\section{Introduction}
{\em ``When is the right time to post our content online? Which days should we place our advertisements so our video goes viral? What's the best strategy to maximize the word-of-mouth effect?''} These are some of the questions advertisers, political parties, and individual influencers want to answer. In every case, the goal is the same: maximize their {\em influence} over their social network.

By now, influence maximization is a well-established problem in Computer Science. The seminal paper of \cite{kempe2003maximizing} introduced the basic mathematical model of influence maximization and became the foundation for a plethora of follow-up  models. There, the input consists of a static network and the task is to find a set of initial spreaders, or sources, that maximizes the expected outbreak size of a spreading process occurring over the network. 


In real-world networks though, many situations and applications exhibit an inherent {\em temporal structure}: a person will check their social media just a few times during their day; a user will go through their favorite web pages every couple of days; people meet with their friends every few days. Furthermore, the effectiveness of word-of-mouth for a specific post/product/advert deteriorates as time passes if there are no interactions, or discussions, about it. Arguably, if a person has not recently seen a post from their favourite artist, it is highly unlikely that the artist will become a topic of conversation with their friends.
Motivated by the increase of social media influence in marketing strategies and the belief that the data provided will become more detailed in the future, the base model of~\cite{kempe2003maximizing} has to be suitably augmented in order to capture the above mentioned scenarios.

Temporal graphs~\cite{holme2012temporal,kempe02-temporal} form a solid basis that naturally captures the temporal structure of the aforementioned instances;
in a temporal graph there is a fixed set of vertices and the connections between them change between consecutive time steps.
Furthermore, when a temporal graph is coupled with a susceptible-infected-susceptible (SIS) spreading process, then it becomes an excellent framework to analyze the influence maximization problems described previously. 
So, now the task is not only to choose the vertices that will become the sources, but in addition decide {\em when}, i.e., at which time step, each source will become active~\cite{gayraud2015diffusion}. 
In fact, marketing companies are very interested about the {\em ``when-to-post"} problem, since different times of posting new content can yield vastly different results. This problem has already been studied both within computer science \cite{spasojevic2015post,zarezade2017redqueen} and in the field of marketing \cite{kanuri2018scheduling}.

Since the influence maximization problem on temporal graphs is a generalization of the base model of~\cite{kempe2003maximizing}, the \NP-hardness results derived for the more constrained setting immediately apply here too. 
On the other hand, all the models so far assume that we can choose each vertex only {\em once}~\cite{gayraud2015diffusion}. Thus, the implied intractability of the problem comes from the choice of the sources and not from the activation times. 
In reality though, the set of possible sources someone can choose from is rather limited and not the whole set of vertices of the graph. In fact, there are several cases where there is only a single, {\em fixed} source; for example ``youtubers'' are independent and the only power they have is just to choose the time they will release their videos online. In this case, the \NP-hardness from~\cite{kempe2003maximizing} does not apply any more. Our goal is to remedy this situation by establishing the tractability frontier for this scenario.

\subsection{Our contribution}
Our contribution is two-fold. Firstly, we formally define 
\ifshort three \fi \iflong four \fi different influence-maximization objectives on temporal graphs with a fixed source, where each variant captures a different aspect of the problem. Then, we perform a complexity-theoretic analysis for them, under three different variants that arise in real-life scenarios.

The input in each problem consists of a temporal graph, known in advance, a fixed source, and a budget of allowed ``posts'', or {\em transmissions}. The goal is to choose a {\em transmission schedule}, i.e., the time steps the source transmits, in order to maximize an influence-related objective. 
We follow an SIS-style spreading process: an {\em inactive}, i.e., not currently influenced, vertex becomes {\em active} at time step $t+1$ if at time step $t$, it is adjacent to an active vertex. Every active vertex is associated with a {\em counter} which shows for how many time steps it will remain active. If at time step $t$ a vertex is adjacent to an active vertex, then its counter resets to a fixed number $\delta$. The parameter $\delta$ resembles the maximum time an agent will be under the influence of the source, and thus the vertex will diffuse the information it has about the source.

The first objective, termed \problemone, aims to maximize the number of different vertices that have become influenced at some point in time. 
This is probably the most natural objective, since it aims to maximize the exposure of the source to different customers.
The goal of the second objective, termed \problemtwo, is to create a ``viral effect'': maximize the number of simultaneously-active vertices.
\ifshort Finally, the target of the objective \problemthree is to maximize the number of active vertices at a predetermined time step. For example, this objective is desirable when a political party wants to maximize the number of active voters on elections day, or a company wants to maximize its active customers on a product-launch day. \fi
\iflong Next, the target of the objective \problemthree is to maximize the number of active vertices at a predetermined time step. For example, this objective is desirable when a political party wants to maximize the number of active voters on elections day, or a company wants to maximize its active customers on a product-launch day. 
Finally, the goal of the objective \problemfour, is to maximize the number of different vertices that became influenced at some point in time, while ensuring that no vertex is inactive for more than a predefined amount of time steps. This objective is appropriate for established products in the market that want to remain "fresh" in customer's minds. 
\fi

The first scenario we study is when there are no constraints on the transmission schedule, i.e., the source can transmit at any time step. In the second scenario, we consider transmission schedules that have to follow {\em window constraints}, a type of constraints that actually occur in real life. 
When a company chooses an influencer, they specify strict times and dates where each individual post has to be done. The timing mainly depends on abstract analytics that specify the influencer's highest engagement in the day but the data currently provided are not pinpoint and they are bundled in 3-hour slots.
We consider two cases of window constraints. In the {\em fixed} window case, the time is split into intervals of length $w$ and the source has to transmit exactly once in each window. In the $(x,y)$-{\em shifting} window case, any two consecutive transmissions of the source have to be at least $x$ time steps apart and at most $y$ time steps apart.
In the last scenario, we consider {\em periodic} graphs. In this case the temporal graph has infinite lifetime with period $\tmax$. This means that the edges that are available at time step $i \in [1, \tmax]$, appear at time step $i + j \cdot \tmax$, for every $j\in \mathbb{N}$.

\medskip
\noindent {\bf Our results.}
We perform a thorough complexity-theoretic study for the four objectives in each of the aforementioned scenarios. 
With the exception of \problemone and \problemfour for periodic graphs, we prove that the 
problems are intractable even for very restricted settings! In particular, we prove the following results. 
For the unconstrained setting, all four problems are \NP-complete and \Wtwo-hard when parameterized by the number of transmissions of the source, even when the underlying graph is a tree with maximum degree $4$.
For the fixed window setting, the first three problems are \NP-complete even when the window has size two. 
In addition, for \problemone the underlying graph is a star, at most three edges appear at any time step, and every edge appears at most twice.
For \problemtwo and \problemthree the underlying graph is a subdivision of a star. 
%
For the shifting window setting, we prove that the first three problems are \NP-complete for every $(2\delta,4\delta)$-shifting window and the underlying graph is a star, or a subdivision of a star.
Finally, for periodic graphs our results are as follows. For \problemone, we derive a fixed parameter tractable algorithm parameterized by $\tmax$; in other words, the problem is polynomial-time solvable when the period is constant. For problems \problemtwo and \problemthree, we prove that they are \NP-hard even when $\tmax = 2$, i.e., the graph has period 2.


\subsection{Related Work}
After the seminal paper of~\cite{kempe2003maximizing}, influence maximization problems have received a tremendous amount of attention; see for example the surveys~\cite{li2018influence-survey,arora2017debunking-survey2,chen2013information-survey11,guille2013information-survey40,sun2011-survey96,tejaswi2016diffusion-survey101,zhang2014recent-survey113} and the references therein. More related, but still quite different, to our model are {\em time-aware} diffusion models, that study both discrete-time settings~\cite{chen2012time-temp14,kim2014ct-temp60,lin2015learning-temp71}, or continuous-time models~\cite{xie2015dynadiffuse-temp111,rodriguez2011uncovering-temp90,scaman2015anytime-temp93}.

Closer to our work are the papers that studied networks that change over time - these networks/graphs can be either temporal, like ours, or graphs that evolve according to different procedures. 
\cite{aggarwal2012influential} propose efficient heuristics for finding a set of sources that maximises influence at a later time step. 
\cite{zhang2013maximizing} assume that the temporal graph is unknown and changes can only be detected periodically.
\cite{zhu2014maximizing} introduce a continuous-time Markov chain model, an information diffusion model based on the independent cascade model \cite{goldenberg2001talk,goldenberg2001using} and their experiments illustrate how to find a small set of sources. 
\cite{wang2018modeling} focus on tracking the diffusion and aggregation of influence through the network in the context of viral marketing. They provide heuristics for finding a good initial set of influencers (referred to in their work as adopters).
Another line of research considers location-aware influence maximisation, where the diffusion is focused on vertices located in certain areas rather than a general spread. \cite{zhou2015location} focus on a distance-aware weighted model and go beyond generic influence maximisation algorithms, providing a location-based influence maximisation algorithm.

Recently, a new line of work established to capture the nature of real world social networks moving from static graphs to temporal graphs \cite{ijcai2021p7}. They study the complexity of competitive diffusion games~\cite{alon2010note,fukuzono2020two} and Voronoi games~\cite{durr2007nash,ahn2004competitive}.

In more recent work, \cite{erkol2020influence} studies influence maximisation under the susceptible-infected-recovered (SIR) model and analyses the performance of approximation algorithms over several temporal networks; in addition they study a special temporal network setting where the influence function is not submodular.
In~\cite{erkol2022effective}, the same set of authors answer some of their open questions and they find that greedy optimization is an effective method for finding a set of sources that has very high performance.


\section{Preliminaries}
\label{sec:prelims}
For $n\in \mathbb{N}$, we denote $[n]:= \{1,2, \ldots, n\}$.
A {\em temporal graph} $\gcal := \tuple{G, \ecal}$ is defined by an {\em underlying graph} $G=(V,E)$ and a sequence of edge-sets $\ecal = (E_1, E_2, \ldots, E_{\tmax})$. It holds that $E = E_1\cup E_2\cup \dots \cup E_{t_{max}}$. The {\em lifetime} of $\gcal$ is \tmax.
An edge $e \in E$ has label $i$, if it is available at time step $i$, i.e., $e \in E_i$. In addition, an edge has $k$ {\em labels}, if it appears in $k$ edge-sets.

\medskip
\noindent {\bf Spreading process.}
We follow a spreading process that resembles the {\em Susceptible Infected Susceptible} (SIS) model. In our model, we are given a fixed vertex $s \in V$, called the {\em source}. 
In addition, at every time step each vertex has a {\em state}, which is {\em active} or {\em inactive} and it is determined by a {\em counter} that gets integer values in $[0,\delta]$, for some given natural $\delta > 0$. The state of a vertex is active if $\delta > 0$ and inactive otherwise.
Initially, every vertex is inactive. 
The source can choose which time steps to {\em transmit}. If $s$ decides to transmit at time step $t$, then it sets its counter to $\delta$, i.e. becomes active and it remains active until time step $t+\delta$. 
The remaining vertices evolve as follows.
\begin{itemize}
    \item If at time step $t$ there is a vertex $v \neq s$ that has an edge with an active vertex $u$, i.e., $uv \in E_t$, then at time step $t+1$ the counter of $v$ is set to $\delta$.
    \item If at time step $t$ there is a vertex $v$ that is active and all of its adjacent vertices are inactive, i.e., all $u \in V$ with $uv \in E_t$ are inactive, then at time step $t+1$ the counter of vertex $v$ decreases by 1.
\end{itemize}
Observe that the procedure above allows ``renewal'' of the active state for a vertex. In other words, it resets the counter of a vertex to $\delta$ after any time step it is adjacent to an active vertex. In contrast, the standard SIS model does not allow renewals. We discuss the differences between our model and SIS in Section~\ref{sec:discuss}.

A {\em transmission schedule} $\tcal= (\tau_1, \tau_2, \ldots, \tau_b)$ is a set of time steps where the source transmits. Let $(\delta,\tcal)-\activeVertex_t(\gcal,s)$ denote the set of active vertices at time step $t$ under the transmission schedule $\tcal$ for source $s$, when the counter is $\delta$.
Furthermore, for any vertex $v\in V$ let $A_v = [t \colon v\in (\delta,\tcal)-\activeVertex_t(\gcal,s)]$ be an ordered list of the time steps at which vertex $v$ is active.

\medskip
\noindent {\bf Problems.}
We study \ifshort three \fi \iflong four \fi problems. In every case, the input is a temporal graph $\gcal$, a source $s$, a budget $b$ and positive integers $\delta$, and $k$.
\begin{itemize}
    \item {\bf \problemone.} Is there a transmission schedule $\tcal$, such that $\left|\bigcup_{t\in[\tmax]}(\delta,\tcal)-\activeVertex_t(\gcal,s)\right| \geq k$ and $|\tcal| \leq b$?
    Put simply, the goal for \problemone is to maximize the number of different vertices that become active.
    \item {\bf \problemtwo.} Is there a transmission schedule $\tcal$, such that $\max_{t \in [\tmax]}$ $\left| (\delta,\tcal)-\activeVertex_t(\gcal,s)\right| \geq k$ and $|\tcal| \leq b$?
    So, the goal for \problemtwo is to maximize the number of active vertices at any time step.
    \item {\bf \problemthree.} Here, we have in addition a time step $t^* \in [\tmax]$.  Is there a transmission schedule $\tcal$, such that $\left|(\delta,\tcal)-\activeVertex_{t^*}(\gcal,s)\right| \geq k$ and $|\tcal| \leq b$?
    In other words, the goal for \problemthree is to maximize the number of active vertices at a given time step $t^*$.
    \iflong
    \item {\bf \problemfour.} Here, we additionally have a positive integer $d$.
    Is there a transmission schedule $\tcal$, such that $\left|\bigcup_{t\in[\tmax]}(\delta,\tcal)-\activeVertex_t(\gcal,s)\right| \geq k$, $|\tcal| \leq b$, and $\max_{0\leq i<\lvert A_v\rvert}\lvert A_v[i+1]-A_v[i]\rvert \leq d$ for all vertices $v\in V$?
    Thus, the goal for \problemfour is to maximize the number of active vertices at any time step, while ensuring that no vertex is inactive for more than $d$ time steps after it is activated for the first time. \fi 
\end{itemize}

\noindent{\textbf{Parameterized complexity.}} 
We refer to the standard books for a basic overview of parameterized complexity theory~\cite{CyganFKLMPPS15,DowneyFellows13}. At a high level, parameterized complexity studies the complexity of a problem with respect to its input size, $n$,  and the size of a parameter $k$. A problem is {\em fixed parameter tractable} by $k$, if it can be solved in time $f(k)\cdot \poly(n)$, where $f$ is a computable function. If a problem is \Wtwo-hard with respect to $k$, then it is unlikely to be fixed parameter tractable by $k$.

\noindent{\bf \setcover.} 
An instance of \setcover consists of a collection $S=\{S_1, S_2, \ldots, S_m\}$ of subsets of a set $N$ of $n$ elements, and a positive integer $b$. The task is to decide if there are $b$ sets $T_1, T_2, \ldots, T_b$ in $S$, such that $T_1 \cup T_2 \cup \cdots \cup T_b = N$.
\setcover is known to be \Wtwo-hard when parameterized by $b$~\cite{downey1995fixed}.

\section{Unconstrained Schedules}
\label{sec:unconstrained}

In this section we study the complexity of the three objectives under unconstrained transmission schedules. We show that all of them are intractable, even on trees of maximum degree three. For clarity of exposition, when we write that ``vertex $u$ influences vertex $y$'' we mean that vertex $y$ became active via a sequence of vertex activations that includes vertex $s$.

\subsection{\problemone}
We prove that \problemone is \NP-complete and \Wtwo-hard when parameterized by $b$. Containment in \NP is straightforward, since given a transmission schedule, we can simulate the process and check whether every vertex becomes active at least once.
Now, we describe the construction and prove some key properties our construction satisfies. Then we provide the proof of the main theorem of the section.


\medskip
\noindent {\bf Construction.} 
We reduce from \setcover. In what follows we will assume, that $n=2^k$ for some positive integer $k$. Observe that this is without loss of generality, since we can augment any instance by adding a dummy set that contains the required number of elements, and by asking for a solution of size $b+1$.
We construct a perfect binary tree $P_1$ with $n-1$ leaves, where the root of the binary tree is the source, $s$, and its leaves are  $x_1,x_2,\ldots,x_n$ from ``left to right" and let $h$ be the current height of the tree. 
Note that such a tree always has $2n$ vertices and is uniquely defined. 

Let set $C$ contain every vertex of the current graph (this will be useful later). 

For every $j\in [m]$ and for every $l\in [h]$, we add label $(l-1)\delta+(j-1)(\delta+1)+1$ to every edge of the tree between levels $l$ and $l+1$. Then we construct vertices $y_1,y_2,\ldots, y_n$ and for every $i\in [n]$, we add edge $x_iy_i$. Note now that $P_1$ has $h+1$ levels. For every $i\in [n]$ and $j\in [m]$, if $i\in S_j$, we add label $h\delta+(j-1)(\delta+1)+1$ to edge $x_iy_i$; see Figure~\ref{fig:objective1}.

\begin{figure}
    \centering
    \includegraphics[width=0.70\textwidth]{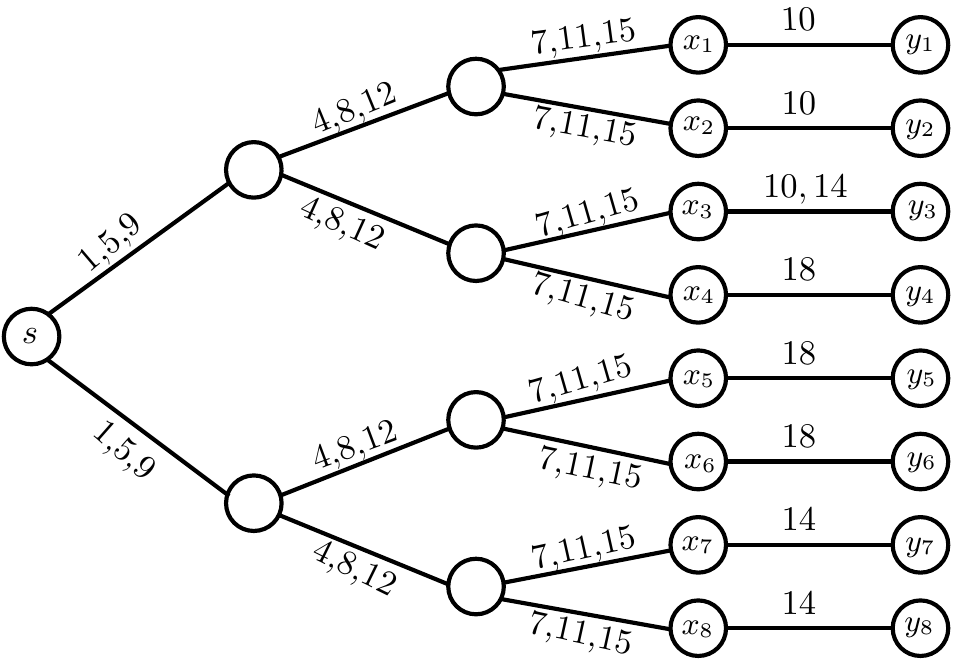}
    \caption{An example of the construction used in Theorem~\ref{thm:spread-uncon1}, when $\delta=3$. The subsets of the \setcover problem are $S_1=\{u_1,u_2,u_3\}$, $S_2=\{u_3,u_7,u_8\}$, $S_3=\{u_4,u_5,u_6\}$ and $\delta=3$.}
    \label{fig:objective1}
\end{figure}

The goal of this construction is to have the following properties:
\begin{itemize}
    \item Each transmission of vertex $s$ acts like a monotone wave, that goes from the vertex to the leaves, i.e., no vertex can influence its parent.
    \item For every $i\in [n]$ and $j\in [m]$, a transmission of vertex $s$ will influence a vertex $y_i$ only if $s$ transmits at a time step $t_j=(j-1)(\delta+1)+1$, such that set $S_j$ includes element $i$ in the \setcover instance.
\end{itemize}

\ifshort
In the next part, we formally list these properties which are vital for the proof of theorem \ref{thm:spread-uncon1}. 
\fi

\iflong
In the next part we formally prove these properties. Then we use them for the proof of Theorem \ref{thm:spread-uncon1}.
\fi

\begin{lemma}\label{lemma:flooding}
Consider constructed graph $P_1$. Let $t_j=(j-1)(\delta+1)+1$. If vertex $s$ transmits between time steps $t_j-(\delta-1)$ and $t_j$, then every vertex $v\in C$ at level $l$ will become active at time step $(l-2)\delta+(j-1)(\delta+1)+2$. Additionally the vertices become active once per such a transmission and an active vertex $v$ in level $l_1$ can never influence a vertex $w$ in level $l_2$, such that $l_2<l_1$ (the transmission goes from the root of the tree towards the leaves).
\end{lemma}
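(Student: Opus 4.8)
The plan is to track the spreading process level by level, using the carefully chosen labels. Fix $j \in [m]$ and recall that the edges between levels $l$ and $l+1$ of the tree $P_1$ carry the label $(l-1)\delta + (j-1)(\delta+1)+1 =: \ell_{l,j}$. First I would observe that within a single ``block'' of time associated to $S_j$, the labels $\ell_{1,j} < \ell_{2,j} < \dots < \ell_{h,j}$ are spaced exactly $\delta$ apart, and then the edges $x_i y_i$ carry label $h\delta + (j-1)(\delta+1)+1 = \ell_{h+1,j}$, continuing the arithmetic progression. This is the structural fact that makes the wave propagate.

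Now suppose $s$ transmits at some time step $t \in [t_j - (\delta-1), t_j]$. I would prove by induction on the level $l$ that every vertex $v \in C$ at level $l$ becomes active exactly at time step $(l-2)\delta + (j-1)(\delta+1) + 2$. For the base case, the source $s$ is at level $1$ (or one treats level $2$ as the base, depending on indexing): having transmitted at time $t \le t_j$, its counter is $\delta$, so $s$ is active at time $t_j$ since $t_j - t \le \delta - 1 < \delta$; hence the edges with label $\ell_{1,j} = t_j$ at level~$1$ are ``live'' while $s$ is active, and the level-$2$ vertices get their counters set to $\delta$ at time $t_j + 1 = (2-2)\delta + (j-1)(\delta+1)+2$. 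For the inductive step: a level-$l$ vertex $v$ is active starting at time $a_l := (l-2)\delta + (j-1)(\delta+1)+2$, and since no incident edge fires again in the meantime, its counter decreases each step and it remains active through time $a_l + \delta - 1$. The edge from $v$ down to its level-$(l+1)$ child carries label $\ell_{l,j} = (l-1)\delta + (j-1)(\delta+1)+1 = a_l + \delta - 1$, which falls inside the active window $[a_l, a_l+\delta-1]$; therefore the child's counter is set to $\delta$ at time $\ell_{l,j}+1 = l\delta + (j-1)(\delta+1)+2 = a_{l+1}$, completing the induction. The same computation with $l = h$ and the label $\ell_{h+1,j}$ on $x_i y_i$ shows the wave reaches the $y_i$ layer.

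Next I would argue the two remaining claims. For \emph{``active once per transmission''}: each vertex $v$ at level $l$ has exactly one incident edge going ``up'' (to its parent) and the relevant down-edges; within block $j$ the only label on the up-edge of $v$ that can fire is $\ell_{l-1,j}$, and it does so only once, so $v$'s counter is set to $\delta$ exactly once during the block and then simply counts down — giving a single maximal interval of active time steps per transmission. For the \emph{monotonicity / ``no upward influence''} claim: I would show that a level-$l_1$ vertex, once activated at time $a_{l_1}$, is active precisely on $[a_{l_1}, a_{l_1}+\delta-1]$, whereas its parent (level $l_1 - 1$) is active on $[a_{l_1-1}, a_{l_1-1}+\delta-1] = [a_{l_1}-\delta, a_{l_1}-1]$; these two windows are disjoint and the parent's window ends exactly one step before the child's begins, so at no time step is the child active while the parent is inactive-and-adjacent in a way that would reset the parent. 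More carefully: the shared edge has label $\ell_{l_1-1,j} = a_{l_1}-1$, which lies in the parent's window but is the step just before the child becomes active; at every later step the child is active but the edge $\ell_{l_1-1,j}$ is not present (the up-edge of the child only has label $\ell_{l_1-1,j}$ within this block), so the parent is never reset by the child. Iterating across all $m$ blocks — which are pairwise disjoint in time by the $(\delta+1)$ spacing — gives the global statement.

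The main obstacle I anticipate is bookkeeping the off-by-one constants (the ``$+2$'' in the activation time, the half-open vs.\ closed active intervals, and the fact that $s$ may transmit as early as $t_j - (\delta-1)$, so one must check the counter of $s$ is still positive at $t_j$ but has not ``wasted'' the window). A clean way to handle this is to first prove the normalized statement for a transmission exactly at $t_j$, observe that transmitting earlier only makes $s$ active on a superset of $\{t_j\}$ within the block (so the level-$2$ activation time is unchanged, since it is governed by the edge label $\ell_{1,j} = t_j$, not by $t$), and then run the level induction once. I would also need to check there is no interference between the down-wave of block $j$ and the up-edges that belong to block $j' \ne j$: since all labels of block $j'$ are at least $\delta+1$ away from those of block $j$ and an activated vertex stays active for only $\delta$ steps, no cross-block reset can occur, which is exactly where the $(\delta+1)$ gap in the label scheme is used.
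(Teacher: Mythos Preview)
Your proposal is correct and follows essentially the same approach as the paper: a level-by-level induction tracking activation times against the arithmetic progression of edge labels, together with a timing argument showing the active window of a child begins only after the parent--child edge has fired (so no upward reset is possible). The only cosmetic difference is order --- the paper establishes the no-backward claim before running the induction, whereas you run the induction first and then derive the disjoint-window/no-backward claim; your treatment of the early-transmission case and cross-block interference is in fact more explicit than the paper's.
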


\iflong
\begin{proof}
First we will prove that no active vertex $v$ in level $l_2$ can ever influence a vertex $w$ in level $l_1$, such that $l_2>l_1$. This guarantees that the influence of a transmission cannot travel "backwards". Let us assume that the opposite is true i.e., there exists a vertex $v$ in level $l_2$ that can influence vertex $w$ in level $l_1$ from some transmission $\tau_1$ of vertex $s$ at time step $t_{j_1}=(j_1-1)(\delta+1)+1$. First, note that any such transmission $\tau_1$ from vertex $s$ must first reach vertex $v$ from its parent vertex $v^{\prime}$ via edge $vv^{\prime}$ at time step $[(l_1-1)-1]\delta+t_{j_1}$ and vertex $v$ becomes active at time step $[(l_1-1)-1]\delta+t_{j_1}+1$. Note now, that the next time step that edge $vv^{\prime}$ becomes available is $[(l_1-1)-1]\delta+t_{j_1+1}$. Therefore, the time difference between $v$ becoming active and the availability of edge $vv^{\prime}$ is $[((l_1-1)-1)\delta+t_{j_1+1}]-[((l_1-1)-1)\delta+t_{j_1}]=\delta$. This means that $v$ is active for $\delta$ time steps and in the next time step where edge $vv^{\prime}$ is available, $v$ is inactive. The same argument can be used to show that the children of $v$ cannot influence $v$ via transmission $\tau_1$. Thus, we have proven that a transmission cannot travel "backwards" and this also implies that every vertex can become active once per transmission (the graph is a tree and has no loops).
    
We will now prove the first part of our claim by using induction over the levels that each vertex belongs to. The base step is trivially shown by noticing that if vertex $s$ starts transmitting between time steps $t_{j_1}-(\delta-1)$ and $t_{j_1}$, then it will reach the vertices on $l=2$ using its adjacent edges that have label $t_{j_1}$. Now assume that vertex $s$ reaches every vertex $v$ until level $l_v$ via transmission $\tau_1$ at time step $t_{j_1}$. Since we showed that transmission cannot travel "backwards", a vertex $v$ at level $l_v$ can be influenced by transmission $\tau_1$ only via its parent $v^{\prime}$ via edge $vv^{\prime}$ which is available at time step $[(l_{v}-1)-1]\delta+t_{j_1}$ (by construction). Therefore, vertex $v$ becomes active at time step $[(l_{v}-1)-1]\delta+t_{j_1}+1$ and the next available edges with its children at level $l_v+1$ are available at time step $(l_{v}-1)\delta+t_{j_1}$. The time difference between the availability of the edges with the children and the time step where $v$ becomes active is $[(l_{v}-1]\delta+t_{j_1}]-[((l_{v}-1)-1]\delta+t_{j_1}+1]=\delta-1$. Thus, $v$ will still be active once the edges with its children become available and $v$ will influence them. This completes the proof. 
\end{proof}
\fi



\begin{lemma}\label{lemma:sets}

Consider constructed graph $P_1$. For every $j\in [m]$ and $i\in[n]$, a vertex $y_i$ will become active by the transmission of vertex $s$, only if the transmission starts between time steps $t_j-(\delta-1)$ and $t_j$, and only if $i\in S_j$.
\end{lemma}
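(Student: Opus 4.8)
\emph{Approach.} The plan is to prove the statement one transmission at a time: fix a single transmission of $s$, at time step $t$, assume it causes some $y_i$ to become active, and deduce that $t\in[t_j-(\delta-1),t_j]$ and $i\in S_j$ for some $j\in[m]$. The first thing to record is that the only neighbour of $y_i$ is $x_i$, so $y_i$ can become active only at a time step immediately following one at which the edge $x_iy_i$ is present and $x_i$ is active; and by construction $x_iy_i$ carries exactly the labels $h\delta+t_{j'}$ over the $j'$ with $i\in S_{j'}$. Hence everything reduces to understanding the set of time steps at which $x_i$ can be active and matching it against the labels $h\delta+t_{j'}$.

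\emph{Step 1: a transmission outside the windows never leaves the root.} I would first argue that if $s$ transmits at a time $t\notin\bigcup_{j\in[m]}[t_j-(\delta-1),t_j]$, then no vertex is influenced by it. Indeed $s$ is active exactly on $[t,t+\delta-1]$, while every edge incident to $s$ (an edge between levels $1$ and $2$) carries only labels from $\{t_{j'}:j'\in[m]\}$, and $t_{j'}\in[t,t+\delta-1]$ is equivalent to $t\in[t_{j'}-(\delta-1),t_{j'}]$; so $s$ never borders an active vertex along an edge that is present, and the wave dies at $s$. (The windows are pairwise disjoint, since $t_{j+1}-t_j=\delta+1$.) So the transmission we are analysing must start in some window $[t_{j''}-(\delta-1),t_{j''}]$, and the wave then propagates as described by Lemma~\ref{lemma:flooding}.

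\emph{Step 2: pinning the index.} Next I would apply Lemma~\ref{lemma:flooding}: the leaf $x_i$, at level $h+1$, acquires counter $\delta$ at time $(h-1)\delta+t_{j''}+1$, and it is not renewed by either of its neighbours before decaying (its parent is active only through $(h-1)\delta+t_{j''}$, and $y_i$ is not yet active and, once it is, the $\delta+1$ gap between the labels of $x_iy_i$ prevents it from renewing $x_i$), so $x_i$ is active exactly on $[(h-1)\delta+t_{j''}+1,\ h\delta+t_{j''}]$. For $y_i$ to be activated we need $x_i$ active at the time $h\delta+t_{j'}$ at which $x_iy_i$ appears, i.e.
\[
(h-1)\delta+t_{j''}+1\ \le\ h\delta+t_{j'}\ \le\ h\delta+t_{j''}.
\]
Substituting $t_j=(j-1)(\delta+1)+1$, the right inequality gives $j'\le j''$, and the left gives $(j''-j')(\delta+1)=t_{j''}-t_{j'}\le\delta-1<\delta+1$, hence $j''=j'$. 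Therefore $i\in S_{j'}=S_{j''}$ and the transmission starts in $[t_{j''}-(\delta-1),t_{j''}]$; taking $j:=j''$ finishes the proof.

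\emph{Main obstacle and a caveat.} The hard part is the bookkeeping in Steps 1--2: one must verify that the chosen edge labels together with the spacing $\delta+1$ between the start times $t_j$ make every downward wave exactly $\delta$ steps wide at each level and allow it to cross the terminal edge $x_iy_i$ only when the wave's index equals the label's index. A secondary point to handle is that several transmissions of $s$ may coexist, but Lemma~\ref{lemma:flooding} already guarantees that each behaves as an independent downward wave (no vertex is influenced from a deeper level, and each vertex activates once per transmission), and the $\delta+1$ spacing between consecutive windows makes the active intervals of $x_i$ coming from distinct transmissions disjoint (separated by at least one inactive step), so they never merge and the per-transmission analysis above suffices.
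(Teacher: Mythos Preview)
Your proof is correct and follows essentially the same route as the paper's: invoke Lemma~\ref{lemma:flooding} to pin down the activity interval of $x_i$ after a transmission in window $j''$, and compare it with the labels $h\delta+t_{j'}$ on $x_iy_i$. If anything, your argument is more complete than the paper's (your Step~1 rules out transmissions outside every window, and your Step~2 explicitly derives $j'=j''$ rather than leaving it implicit), but the underlying idea is identical.
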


\iflong
\begin{proof}
From Lemma \ref{lemma:flooding}, we know that, for every $j\in [m]$ a transmission between time steps $t_j-(\delta-1)$ and $t_j$ will influence vertex $x_i\in C$ and make $x_i$ active at time step $(h-1)\delta+t_j+1$. By construction, edge $x_iy_i$ is available at time step $h\delta+(t_j+1$ if and only if $i\in S_j$. Assume that $i\in S_j$. The time difference between $x_i$ becoming active and $x_iy_i$ being available is $(h\delta+t_j)-((h-1)\delta+t_j)=\delta$. Thus, $x_i$ will still be active once $x_iy_i$ becomes available and $y_i$ becomes active. Otherwise, if $i\notin S_j$, $x_i$ will become inactive before edge $x_iy_i$ become available.
\end{proof}
\fi

\begin{theorem}\label{thm:spread-uncon1}
For any $\delta \geq 1$, \problemone is \NP-hard and \Wtwo-hard when~parameterized~by~$b$ on tree graphs with degree $3$.
\end{theorem}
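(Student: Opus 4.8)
\emph{Proof proposal.}
The plan is to finish the reduction from \setcover that the construction above sets up, using Lemmas~\ref{lemma:flooding} and~\ref{lemma:sets} as the engine, and then to check that it is a parameter-preserving reduction in $b$. Given a \setcover instance with ground set $N$, $|N| = n = 2^k$ (after the padding normalisation, with budget still denoted $b$), sets $S_1,\dots,S_m$, and budget $b$, I would build the temporal tree $P_1$ exactly as described, keep the budget $b$, and set the \problemone target to $k := |V(P_1)|$; that is, we ask whether \emph{every} vertex can be activated with at most $b$ transmissions. Containment in \NP was already argued, so only hardness remains.

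For the forward direction, suppose $S_{j_1},\dots,S_{j_{b'}}$ with $b' \le b$ cover $N$, and let the source transmit precisely at the times $t_{j_1},\dots,t_{j_{b'}}$, where $t_j = (j-1)(\delta+1)+1$. By Lemma~\ref{lemma:flooding}, the single transmission at time $t_{j_1}$ already floods the whole binary tree and activates every vertex of $C$; by Lemma~\ref{lemma:sets}, for each element $i \in N$ and any index $j_p$ with $i \in S_{j_p}$, the transmission at time $t_{j_p}$ activates $y_i$. Hence all $|V(P_1)|$ vertices become active using $b' \le b$ transmissions. For the converse, suppose a schedule $\tcal$ with $|\tcal| \le b$ activates every vertex, in particular every $y_i$. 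For $j \in [m]$ put $W_j := \{\,t_j - (\delta-1),\dots,t_j\,\}$ and $J := \{\, j \in [m] : \tcal \cap W_j \ne \emptyset \,\}$. A direct check shows that consecutive windows $W_j$ and $W_{j+1}$ are separated by the single time step $t_j + 1$, so the $W_j$ are pairwise disjoint; hence each transmission lies in at most one $W_j$ and $|J| \le |\tcal| \le b$. By Lemma~\ref{lemma:sets}, each active $y_i$ forces some $j \in J$ with $i \in S_j$, so $\{\, S_j : j \in J \,\}$ is a set cover of size at most $b$. The two instances are therefore equivalent.

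Finally, the construction is computable in time polynomial in $n$ and $m$ (with $\delta$ a fixed constant), and it leaves the parameter $b$ untouched, so it is a polynomial-time parameter-preserving reduction; since \setcover is \Wtwo-hard, and hence \NP-hard, parameterized by $b$, the same holds for \problemone. Moreover $P_1$ is a tree whose root $s$ has degree $2$, whose internal binary-tree nodes have degree $3$, whose vertices $x_i$ have degree $2$, and whose vertices $y_i$ have degree $1$, so its maximum degree is $3$. The one point that needs genuine care is the disjointness of the windows $W_j$ in the converse direction: this is exactly what guarantees that a single transmission can ``pay for'' at most one set of the cover, which makes the budget transfer tight; everything else is routine bookkeeping layered on top of the two lemmas.
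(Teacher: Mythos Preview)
Your proposal is correct and follows essentially the same reduction as the paper: use the construction of $P_1$, set the target so that every vertex (in particular every $y_i$) must be activated, and translate between set covers and transmission schedules via Lemmas~\ref{lemma:flooding} and~\ref{lemma:sets}. The only difference is cosmetic: you make the window-disjointness argument explicit (each $\tau$ lies in at most one $W_j$, hence $|J|\le b$), whereas the paper simply asserts ``by definition, the size of $T$ is at most $b$''; your version is slightly more careful here, but the underlying idea is identical.
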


\begin{proof}

We claim that there exists a solution to \problemone on constructed graph $P_1$, such that $s$ influences $3n$ vertices, if $s$ transmits at $|b|$ time steps at most in the constructed graph, if and only if there is a solution $T = T_1 \cup T_2 \cup \cdots \cup T_b$ to the original \setcover instance.

Let $t_j=(j-1)(\delta+1)+1$. Assume that we have solution $T$ to the \setcover set problem. We can construct a solution for \problemone by having vertex $s$ transmit at each time step $t_j$, such that $S_j\in T$. We can guarantee that this is a solution for \problemone since every $i\in [n]$ is included in at least one set $S_j$ and by Lemma \ref{lemma:sets} every vertex $y_i$ will become active by at least one transmission of vertex~$s$.

For the reverse direction, consider that we have a solution to \problemone which is a transmission schedule $\tcal=(\tau_1, \tau_2,\dots, \tau_b)$. This means that if vertex $s$ transmits at time steps $\tau_1, \tau_2,\dots, \tau_b$, then every vertex in the graph will become active including vertices $y_1,y_2,\ldots, y_n$. To construct a solution for the \setcover problem we do the following. For every $t\in[b]$ and every $j\in [m]$, if $t_j-(\delta-1)\leq \tau_t\leq t_j$, we add set $S_j$ to solution $T$ of the \setcover problem (ignoring duplicate additions). By definition, the size of $T$ is at most $b$. Note also that every element $i\in [n]$ is included in $T$ since every vertex $y_i$ becomes active by at least one transmission of vertex $s$ and by Lemma \ref{lemma:sets} this only happens when $i\in S_j$. This completes the proof.
\end{proof}

\subsection{\problemtwo and \problemthree}

We will now show that \problemtwo is \NP-complete and \Wtwo-hard when~parameterized~by~$b$ on tree graphs with degree $3$. Containment in NP is straightforward, since given a transmission schedule, we can simulate the process and check the maximum active vertices for any time step. 
We will use a similar construction to the previous theorem and prove some key properties that we need; the construction differentiates in several parts in order to accommodate the different objectives. The proof will be again via a reduction from \setcover. Note also that the following construction/proofs can also be easily modified to show that \problemthree is \NP-hard and \Wtwo-hard and as such, we will not provide one.

\medskip
\noindent {\bf Construction.} 
We reduce from \setcover.
We construct a perfect binary tree $P_2$ with $n$ leaves, where the root of the binary tree is called $s$ and the leaves of the binary tree are called $x_1,x_2,\ldots,x_n$ from ``left to right" and let $h=\log 2n$ be the height of the current tree. Note that we will not update the value of $h$ once more vertices are added to the tree.  Also, note that such a tree always has $2n-1$ vertices and is uniquely defined. Let set $C$ contain every vertex of the current graph. For every $j\in [m+1]$ and for every $l\in [h]$, we add label $(l-1)\delta+h(j-1)(\delta+1)+1$ to every edge of the tree between levels $l$ and $l+1$. Then we construct vertices $y_1,y_2,\ldots, y_n$ and vertices $z_1,z_2,\ldots, z_n$ and for every $i\in [n]$, we add edge $x_iy_i$ and $y_i,z_i$. For every $i\in [n]$ and $j\in [m+1]$, if $i\in S_j$, we add label $h\delta+h(j-1)(\delta+1)+1$ to edge $x_iy_i$.  For every $i\in [n]$ and $j\in [h(m+1)]$, we add labels $(j-1)(\delta+1)+1,(j-1)(\delta+1)+2$ to edge $y_iz_i$. This completes the construction; see Fig.~\ref{fig:objective2} for an example.

\begin{figure}
    \centering
    \includegraphics[width=0.8\textwidth]{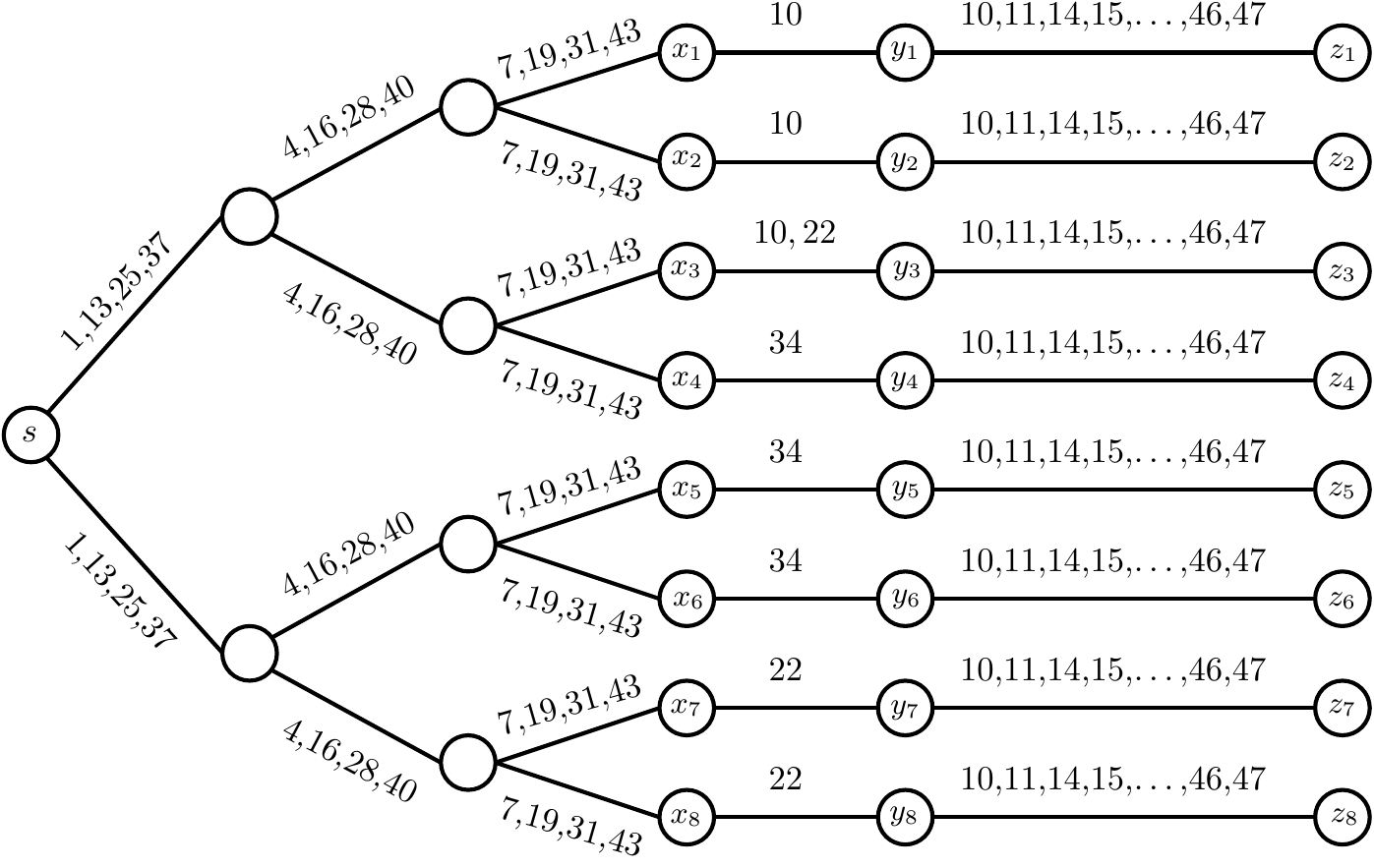}
    \caption{An example of the construction used to prove that \problemtwo is \Wtwo-hard, when $\delta=3$. The subsets of the \setcover problem are $S_1=\{u_1,u_2,u_3\}$, $S_2=\{u_3,u_7,u_8\}$, $S_3=\{u_4,u_5,u_6\}$.}
    \label{fig:objective2}
\end{figure}

The goal of this construction is to have the following properties:
\begin{itemize}
    \item Each transmission of vertex $s$ acts like a monotone wave, that goes from the vertex to the leaves, i.e., no vertex can influence its parent.
    \item For every $i\in [n]$ and $j\in [m]$, a transmission of vertex $s$ will influence a vertex $y_i$ only if $s$ transmits at a time step $t_j=h(j-1)(\delta+1)+1$, such that set $S_j$ includes element $i$ in the \setcover instance.
    \item For every $i\in [n]$ and $j\in [m+1]$, once vertex $y_i$ is influenced at some time step $h\delta+h(j_1-1)(\delta+1)+1$ by a transmission from vertex $s$, vertex $y_i,z_i$ will always be active at time step $h\delta+h(j-1)(\delta+1)+3$, where $j_1<j$.
\end{itemize}

\ifshort
Next, we formally list these properties which are vital for the proof of Theorem \ref{thm:spread-uncon2}. 
\fi

\iflong
In the next part we formally prove these properties. Then we use them for the proof of theorem \ref{thm:spread-uncon2}.
\fi

\begin{lemma}\label{lemma:flooding2}

Consider constructed graph $P_2$. Let $t_{j_1}=h(j-1)(\delta+1)+1$. If vertex $s$ transmits between time steps $t_{j_1}-(\delta-1)$ and $t_{j_1}$, then every vertex $v\in C$ at level $l$ will become active at time step $(l-2)\delta+t_{j_1}+2$. Additionally vertices $v\in C$ become active once per such a transmission and an active vertex $v$ in level $l_1$ can never influence a vertex $w$ in level $l_2$, such that $l_2<l_1$ (the transmission goes from the root of the tree towards the leaves).
\end{lemma}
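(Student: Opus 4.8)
The plan is to mirror the proof of Lemma~\ref{lemma:flooding} almost verbatim, since the tree $P_2$ is built the same way as $P_1$ (a perfect binary tree whose inter-level edges between levels $l$ and $l+1$ carry the labels $(l-1)\delta + h(j-1)(\delta+1)+1$), only with the spacing between successive ``waves'' rescaled from $\delta+1$ to $h(\delta+1)$. Because that rescaling only \emph{increases} the gap between consecutive appearances of any edge, all the timing inequalities used before still hold, with room to spare.

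First I would establish the ``no backward influence'' part. Fix a transmission $\tau_1$ of $s$ that first reaches some vertex $v$ at level $l_1$ via the edge $vv'$ from its parent; by construction this happens at time step $[(l_1-1)-1]\delta + t_{j_1}$, so $v$ becomes active at the next step. The next time edge $vv'$ is available is $[(l_1-1)-1]\delta + t_{j_1+1}$, i.e.\ exactly $t_{j_1+1}-t_{j_1} = h(\delta+1) > \delta$ time steps later. Since $v$ stays active for only $\delta$ steps in the absence of renewal, its counter has already hit $0$ before $vv'$ reappears, so $v$ cannot re-infect $v'$; the identical computation handles the edges to $v$'s children and to the $y_i$ it may be adjacent to (here one also needs the $y_iz_i$ labels, which repeat every $\delta+1$ steps, but these do not touch the $C$-vertices). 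As $P_2$ is a tree, this also gives that every $v\in C$ is activated at most once per transmission.

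Next I would prove the ``flooding'' claim by induction on the level $l$. Base case: a transmission of $s$ between $t_{j_1}-(\delta-1)$ and $t_{j_1}$ uses the edges adjacent to $s$ (labelled $t_{j_1}$, i.e.\ $l=1$), so every level-$2$ vertex becomes active at step $t_{j_1}+1 = (2-2)\delta + t_{j_1}+2 - 1$; matching the claimed $(l-2)\delta + t_{j_1}+2$ for $l=2$ after accounting for the one-step delay in the activation rule. Inductive step: assuming every level-$l$ vertex $v$ becomes active at step $(l-2)\delta + t_{j_1}+2$, note the edges from $v$ to its level-$(l+1)$ children are available at step $(l-1)\delta + t_{j_1}$, and the difference $(l-1)\delta + t_{j_1} - [(l-2)\delta + t_{j_1}+2] = \delta-2 \ge 0$ (using $\delta\ge 1$; for $\delta=1$ one checks the degenerate case directly, as in the earlier lemma), so $v$ is still active when those edges appear and passes the infection down, making the children active at step $(l-1)\delta + t_{j_1}+2 = ((l+1)-2)\delta + t_{j_1}+2$, completing the induction.

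The only genuine subtlety — and the step I expect to need the most care — is checking that the extra labels present in $P_2$ but not in $P_1$, namely the two labels $(j-1)(\delta+1)+1,(j-1)(\delta+1)+2$ on each edge $y_iz_i$, do not interfere with the propagation through $C$: one must confirm that no $y_i$ gets activated ``early'' via its $z_i$-neighbour in a way that could push influence back up into $x_i$ and the binary tree. This is ruled out because $z_i$ is a leaf and is never activated except through $y_i$, so the $y_iz_i$ edge only ever carries influence outward from $y_i$; combined with the backward-influence bound above applied to the edge $x_iy_i$, no perturbation reaches $C$. With that observed, the statement follows exactly as in Lemma~\ref{lemma:flooding}.
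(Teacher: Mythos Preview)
Your approach is the same as the paper's---first establish ``no backward influence'' via timing gaps, then prove the downward flooding by induction on the level---and your observation that the inter-wave spacing in $P_2$ is $h(\delta+1)$ rather than $\delta+1$ (so the backward argument works with room to spare) is exactly right.

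However, there is an arithmetic slip in your induction. You take the statement's activation time $(l-2)\delta + t_{j_1} + 2$ at face value, and from it you compute the gap to the child edge as $\delta - 2$. That fails for $\delta = 1$, and your appeal to a ``degenerate case directly, as in the earlier lemma'' doesn't help, because Lemma~\ref{lemma:flooding} has no such degenerate case. The edge between levels $l-1$ and $l$ carries label $(l-2)\delta + t_{j_1}$, so the level-$l$ vertex actually becomes active at step $(l-2)\delta + t_{j_1} + 1$; the paper's proof uses this value and obtains the gap $\delta - 1$, which is nonnegative for all $\delta\ge 1$. Your base-case paragraph already exposes the tension: you correctly compute the level-$2$ activation time as $t_{j_1}+1$ but then try to force it to match $t_{j_1}+2$ ``after accounting for the one-step delay'', which it does not. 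Recomputing from the edge labels rather than from the stated formula fixes everything.

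A smaller point: your closing discussion of the $y_iz_i$ edges is not needed here (the lemma concerns only vertices in $C$), and your claim that the ``backward-influence bound above applied to the edge $x_iy_i$'' rules out feedback into $C$ isn't justified, since that bound was derived from the specific label spacing on tree edges, not on $x_iy_i$. The paper handles that concern separately in Lemma~\ref{lemma:stop}.
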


\iflong
\begin{proof}
First we will prove that no active vertex $v$ in level $l_2$ can never reach a vertex $w$ in level $l_1$, such that $l_2>l_1$. This guarantees that the influence of a transmission cannot travel "backwards". Let us assume that the opposite is true i.e., there exists a vertex $v$ in level $l_2$ that can influence vertex $w$ in level $l_1$ from some transmission $\tau_1$ of vertex $s$ at time step $t_{j_1}$. First, note that any such transmission $\tau_1$ from vertex $s$ must first reach vertex $v$ from its parent vertex $v^{\prime}$ via edge $vv^{\prime}$ at time step $[(l_1-1)-1]\delta+t_{j_1}$, for some $j\in[m]$, and vertex $v$ becomes active at time step $[(l_1-1)-1]\delta+t_{j_1}+1$. Note now, that the next time step that edge $vv^{\prime}$ becomes available is $[(l_1-1)-1]\delta+t_{j_1+1}$. Therefore, The time difference between $v$ becoming active and the availability of edge $vv^{\prime}$ is $[((l_1-1)-1)\delta+t_{j_1}]-[((l_1-1)-1)\delta+t_{j_1+1}+1]=\delta$. This means that $v$ is active for $\delta$ time steps and in the next time step where edge $vv^{\prime}$ is available, $v$ is inactive. The same argument can be used to show that the children of $v$ cannot influence $v$ via transmission $\tau_1$. Thus, we have proven that a transmission cannot travel "backwards" and this also implies that every vertex can become active once per transmission (the graph is a tree and has no loops).
    
We will now show the first part of our claim by using induction over the levels that each vertex belongs to. The base step is trivially shown by noticing that for every $j\in[m+1]$, if vertex $s$ starts transmitting between time steps $t_j-(\delta-1)$ and $t_j$, then it will reach the vertices on $l=2$ using its adjacent edges that have label $t_j$. Now assume that vertex $s$ reaches every vertex $v$ until level $l_v$ via transmission $\tau_1$ at time step $t_{j_1}$. Since we showed that transmission cannot travel "backwards", a vertex $v$ at level $l_v$ can be influenced by transmission $\tau_1$ only via its parent $v^{\prime}$ via edge $vv\prime$ which is available at time step $[(l_{v}-1)-1]\delta+t_{j_1}$ (by construction). Therefore, vertex $v$ becomes active at time step $[(l_{v}-1)-1]\delta+t_{j_1}+1$ and the next available edges with its children at level $l_v+1$ are available at time step $(l_{v}-1)\delta+t_{j_1}$. The time difference between the availability of the edges with the children and the time step where $v$ becomes active is $[(l_{v}-1)\delta+t_{j_1}]-[((l_{v}-1)-1)\delta+t_{j_1}+1]=\delta-1$. Thus, $v$ will still be active once the edges with its children become available and $v$ will influence them. This completes the proof.  
\end{proof}
\fi

\begin{corollary}\label{corollary:levels}

Consider any pair of vertices $v,w\in C\setminus(s)$, where $v,w$ belong to different levels of $P_2$. There exists no such pair of vertices such that both vertices can be active at the same time step.  
\end{corollary}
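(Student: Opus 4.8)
The plan is to derive Corollary~\ref{corollary:levels} directly from Lemma~\ref{lemma:flooding2} by a timing argument. First I would recall that in the construction of $P_2$, every transmission of the source $s$ that is "useful" starts in a window $[t_{j_1}-(\delta-1), t_{j_1}]$ for some $j_1$, and that (by the argument inside Lemma~\ref{lemma:flooding2}) a transmission wave never travels backwards. Hence the \emph{only} way a vertex $v\in C$ at level $l$ can be active is because some transmission wave reached it, and by Lemma~\ref{lemma:flooding2} such a wave makes $v$ active exactly at time step $(l-2)\delta + t_{j_1} + 2$, and $v$ then stays active for precisely $\delta$ consecutive steps, namely the interval
\[
I_{l,j_1} \;=\; \bigl[(l-2)\delta + t_{j_1} + 2,\; (l-1)\delta + t_{j_1} + 1\bigr],
\]
after which it is inactive until the next wave (since vertices in $C$ have no other incident labels that could renew them early — this needs a one-line check against the construction). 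So the set of active times of any $v\in C\setminus\{s\}$ at level $l$ is the disjoint union $\bigcup_{j_1} I_{l,j_1}$.

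The heart of the proof is then to show that for two distinct levels $l_1 \neq l_2$ the families $\{I_{l_1,j}\}_j$ and $\{I_{l_2,j'}\}_{j'}$ are pairwise disjoint. I would do this by writing the left endpoint of $I_{l,j}$ as $(l-2)\delta + h(j-1)(\delta+1) + 3$ (substituting $t_j = h(j-1)(\delta+1)+1$) and observing that the blocks for a fixed level are spaced $h(\delta+1)$ apart, which is a multiple of $\delta+1$ and strictly larger than the block length $\delta$; meanwhile two different levels $l_1<l_2$ shift the block by $(l_2-l_1)\delta$, an amount strictly between $0$ and $(h-1)\delta < h(\delta+1)$ since $l_2 - l_1 \le h-1$. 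A short case analysis on residues modulo $h(\delta+1)$ (or simply on the integer parts) then shows $I_{l_1,j}\cap I_{l_2,j'}=\emptyset$ for all $j,j'$: the within-level spacing is too coarse for a cross-level block to fit into a gap, yet the cross-level offset is too large to land inside the same block. This yields the claim that no vertex at level $l_1$ and no vertex at level $l_2$ are ever simultaneously active.

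The main obstacle I anticipate is not conceptual but bookkeeping: one must be careful that the $y_i$ and $z_i$ vertices (which carry extra labels $(j-1)(\delta+1)+1, (j-1)(\delta+1)+2$ on $y_iz_i$) are \emph{not} in $C$, so the corollary only concerns the binary-tree vertices, and indeed the statement is restricted to $v,w \in C\setminus\{s\}$. I would also double-check the edge case $l=2$ (the children of $s$), whose activation interval is $[t_{j_1}+2, \delta + t_{j_1}+1]$, to confirm the general formula applies, and confirm that a vertex at level $l$ cannot be re-activated earlier than the next wave by an already-active child, which is exactly the backward-travel impossibility established in Lemma~\ref{lemma:flooding2}. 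Once those are pinned down, the arithmetic comparison of the intervals is routine and the corollary follows.
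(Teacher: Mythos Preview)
Your proposal is correct and follows essentially the same timing argument as the paper: derive from Lemma~\ref{lemma:flooding2} the activation time of a level-$l$ vertex and observe that the $\delta$-step offset between consecutive levels forces the $\delta$-length active intervals to be disjoint. Your treatment is in fact more complete than the paper's own proof, which only explicitly handles the case where $v$ and $w$ are activated by the \emph{same} transmission $t_{j_1}$ and leaves the cross-transmission disjointness (your spacing argument for $j\neq j'$) implicit.
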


\iflong
\begin{proof}
Consider any two vertices $v,w$ at levels $l_v$ and $l_w$ and a transmission from vertex $s$ at time step $t_{j_1}$. W.l.o.g. assume that $l_v<l_w$. By Lemma \ref{lemma:flooding2}, the minimum difference between the time steps where $v$ and $w$ become active by such a transmission is $\min_{v,w} |[(l_w-2)\delta+t_{j_1}+2]-[(l_v-2)\delta+t_{j_1}+2]|=\delta$. Therefore, assuming w.l.o.g. that $v$ becomes active before $w$, by the time step $w$ becomes active, at least $\delta$ time steps have passed since $v$ became active, and thus, $v$ is inactive.
\end{proof}
\fi




\begin{lemma}\label{lemma:stop}
For every $i\in[n]$, vertex $y_i$ cannot influence vertex $x_i$.
\end{lemma}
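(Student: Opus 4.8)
The plan is to analyze the timing of edge availabilities on the path $y_i z_i$ and $x_i y_i$ relative to when $y_i$ can become active, and show that whenever $y_i$ is active, the edge $x_i y_i$ either is not available or, if it is, the wave on $x_i y_i$ is traveling toward $y_i$ rather than away from it, so $x_i$'s counter has already expired by the time an $x_i y_i$-edge would let $y_i$ influence $x_i$. Concretely, I would first recall from Lemma~\ref{lemma:sets} (adapted to $P_2$) and the construction that $y_i$ can only be activated via the edge $x_i y_i$ at a time step of the form $h\delta + h(j-1)(\delta+1)+2$ for some $j$ with $i \in S_j$, or its counter can be refreshed via the edge $y_i z_i$ at time steps of the form $(j'-1)(\delta+1)+2$ and $(j'-1)(\delta+1)+3$. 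In either case I would pin down the exact set of time steps at which $y_i$ is active.

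The key step is then a parity/offset comparison: the edge $x_i y_i$ carries only labels of the form $h\delta + h(j-1)(\delta+1)+1$, i.e.\ it is available exactly one step before $y_i$ becomes active through it, never one step after. For $y_i$ to influence $x_i$, we would need $y_i$ active at some time step $t$ with $x_i y_i \in E_{t}$ and $x_i$ inactive at time $t$; but by Lemma~\ref{lemma:flooding2} applied to $x_i$ (level $h+1$ in $C$), whenever $x_i y_i$ is available, $x_i$ has just been (or is about to be) freshly activated by the wave from $s$, so $x_i$ cannot be re-activated "backwards" by $y_i$ — this is exactly the backward-travel argument already established in Lemma~\ref{lemma:flooding2}, which I would invoke rather than redo. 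I would also check the refresh times coming from $y_i z_i$: those times $(j'-1)(\delta+1)+2$ and $+3$ never coincide with an availability time $h\delta + h(j-1)(\delta+1)+1$ of $x_i y_i$ (one sees this by looking mod $\delta+1$ together with the $h\delta$ shift, so $x_i y_i$ is never available when $y_i$ happens to be active only due to a $z_i$-refresh), so that case contributes nothing either.

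Putting it together: $y_i$ is active precisely at (a) times reachable from $s$ through the tree-then-$x_iy_i$ wave, during which the wave is moving away from $x_i$ and $x_i$'s counter is already $0$ on those edges' availability times that point back, and (b) times forced by $y_iz_i$-refreshes, which never line up with $x_iy_i$ being available. Hence at no time step is ($y_i$ active and $x_iy_i$ available and $x_i$ inactive) simultaneously true, so $y_i$ never influences $x_i$.

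The main obstacle I expect is bookkeeping the exact activation windows of $y_i$ once the $y_i z_i$ refreshes are taken into account — $y_i$ may be active for long stretches due to repeated $z_i$-refreshes, so I must be careful to argue that, although $y_i$ is active at many time steps, none of those time steps is one where $x_i y_i$ is available with $x_i$ simultaneously inactive; the cleanest way is the modular-arithmetic disjointness of the label set of $x_iy_i$ from the refresh times, combined with the already-proven "no backward travel" statement of Lemma~\ref{lemma:flooding2} for the one family of time steps where $x_iy_i$ is available.
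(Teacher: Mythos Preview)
Your handling of the $z_i$–refresh case is where the argument breaks. You test whether the \emph{reset moments} of $y_i$ via $z_i$, namely times $\equiv 2$ or $3 \pmod{\delta+1}$, coincide with an $x_iy_i$–label $h\delta+h(j-1)(\delta+1)+1\equiv 1-h \pmod{\delta+1}$. But that is not the relevant condition: after a reset at time $T'$, $y_i$ is active throughout $[T',T'+\delta-1]$, i.e.\ at every residue mod $\delta+1$ except $T'-1$. Hence a $z_i$–refresh at residue $2$ or $3$ leaves $y_i$ inactive only at residue $1$ or $2$, and for generic $h$ this is \emph{not} the residue $1-h$ of the $x_iy_i$–labels, so $y_i$'s active window can cover an $x_iy_i$–label. ``Refresh times disjoint from label times'' simply does not give ``$y_i$ inactive at label times''. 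What you actually need is that the single residue at which $y_i$ is inactive in the eventual $y_i$–$z_i$ bounce equals $1-h$; that residue is fixed by the \emph{phase} of the bounce, which is determined by the precise first activation time $T_0=h\delta+h(j_0-1)(\delta+1)+2$ and the subsequent interaction with the two $y_iz_i$–labels per period. Your plan never pins this phase down. Two further issues: Lemma~\ref{lemma:flooding2} is stated only for vertices in $C$, and $y_i\notin C$, so you cannot invoke it for the edge $x_iy_i$; and the claim ``whenever $x_iy_i$ is available, $x_i$ has just been freshly activated by the wave from $s$'' is false, since $s$ need not transmit at every $t_j$.

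The paper does not attempt a residue argument. It argues by contradiction on a first occurrence: suppose $y_i$ is active at some $x_iy_i$–label time. Consecutive $x_iy_i$–labels are more than $\delta$ apart, so the preceding $x_i\!\to\! y_i$ activation has already expired; hence $z_i$ must have refreshed $y_i$ just before. But $z_i$ can only ever be activated through $y_i$, and unwinding that chain forces $y_i$ to have been active at an \emph{earlier} such time, contradicting minimality. If you want your approach to go through, you must replace the modular disjointness step by an explicit trace of the $y_i$–$z_i$ dynamics from $T_0$ showing that $y_i$'s unique inactive residue is $1-h$; merely comparing label sets modulo $\delta+1$ is not enough.
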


\iflong
\begin{proof}
Since edge $x_iy_i$ is only available at time steps $h\delta+h(j-1)(\delta+1)+1$, for every $i\in[n]$ and $j\in[h(m+1)]$, we just have to show that vertex $y_i$ can never be active in those time steps. Because $\delta$ is smaller than the time difference between any two consecutive labels of edge $x_iy_i$, if vertex $x_i$ influences $y_i$, then $y_i$ cannot influence $x_i$ via the next time step that edge $x_iy_i$ is available (since $y_i$ would have become inactive by then), unless $z_i$ influences $y_i$ at some time step $h\delta+(j-1)(\delta+1)+2$. We will show that this is impossible by contradiction. Assume that vertex $z_i$ influences $y_i$ at some time step $h\delta+(j_y-1)(\delta+1)+2$ from some transmission of vertex $s$ that happened at time step $t_{j_1}$. Note now that, by construction, in order for $y_i$ to be active at time step $h\delta+(j_y-1)(\delta+1)+2$, vertex $x_i$ must have been active at some time step $h\delta+(j_x-1)(\delta+1)+1$, where $j_x<j_y$ and vertex $z_i$ has not influenced vertex $x_i$ yet. But we have already noticed that cannot that vertex $x_i$ cannot have been active at any time step $h\delta+(j_x-1)(\delta+1)+1$, unless $z_i$ has influenced vertex $x_i$ at least once.
\end{proof}
\fi

\begin{lemma}\label{lemma:sets2}

Consider constructed graph $P_2$. Let $t_j=h(j-1)(\delta+1)+1$. For every $j\in [m+1]$ and $i\in[n]$, a vertex $y_i$ will become active at time step $h\delta+t_{j_1}+3$ by the transmission of vertex $s$, only if the transmission starts between time steps $t_{j_1}-(\delta-1)$ and $t_{j_1}$, where $j_1<j$, and only if $i\in S_j$. Additionally, once vertex $y_i$ becomes active at time step $h\delta+t_{j_1}+3$ by the transmission of vertex $s$, vertices $y_i,z_i$ will be active at every time step $h\delta+t_{j_1}+3>h\delta+t_j+2$, for every $j\in [m+1]$.
\end{lemma}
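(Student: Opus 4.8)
The idea is to exploit that the pair $\{y_i,z_i\}$ is attached to the rest of $P_2$ only through the single edge $x_iy_i$: the only neighbour of $z_i$ is $y_i$, and by Lemma~\ref{lemma:stop} the vertex $y_i$ never influences $x_i$. In particular $z_i$, whose only neighbour is $y_i$, cannot be active before $y_i$ is, so the first time $y_i$ becomes active this must be caused by $x_iy_i$ being present while $x_i$ is active. I would therefore split the proof into two parts. Part~(1): pin down exactly when $x_i$ can be active --- which Lemma~\ref{lemma:flooding2} hands us --- and intersect this with the (few) times $x_iy_i$ is present, obtaining the ``only if $i\in S_{j}$'' statement together with the activation time step. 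Part~(2): show that once $y_i$ is switched on, the densely-labelled edge $y_iz_i$ makes $y_i$ and $z_i$ re-activate each other indefinitely, which yields the ``additionally'' statement.

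For part~(1), I would first observe that a transmission of $s$ lying outside every window $[\,t_{j_1}-(\delta-1),t_{j_1}\,]$ dies immediately, since the edges leaving $s$ carry only the labels $t_{j}$ and two consecutive such labels are $h(\delta+1)>\delta$ apart; hence every useful transmission lies in exactly one such window. By Lemma~\ref{lemma:flooding2}, a transmission in the window of $t_{j_1}$ makes $x_i$ active during a single length-$\delta$ interval depending only on $j_1$, and by Lemma~\ref{lemma:stop} $x_i$ is active at no other time. Since $x_iy_i$ is present only at the steps $h\delta+t_{j}$ with $i\in S_{j}$, and distinct $t_{j}$ differ by multiples of $h(\delta+1)$, a short calculation shows that $h\delta+t_{j}$ meets that interval only when $j=j_1$; hence $y_i$ gets switched on by a transmission near $t_{j_1}$ only if $i\in S_{j_1}$, and then reading off the formulas gives that $y_i$ first becomes active at step $h\delta+t_{j_1}+3$.

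For part~(2), the key point is that the labels of $y_iz_i$, namely the numbers $(j'-1)(\delta+1)+1$ and $(j'-1)(\delta+1)+2$ for $j'\in[h(m+1)]$, have consecutive gaps equal to $1$ (inside a pair) or $\delta$ (between pairs), so every $\delta$ consecutive time steps up to the last label contain a copy of $y_iz_i$ --- and the last label lies past every step $h\delta+t_{j}+3$ that matters. Thus whenever $y_i$ is active the next copy of $y_iz_i$ appears before its counter expires and re-activates $z_i$, and symmetrically an active $z_i$ re-activates $y_i$ within $\delta$ steps. Starting from the activation found in part~(1) and inducting on $j'$, I would show that $y_i$ is active at every step $(j'-1)(\delta+1)+2$, that $z_i$ is then active throughout the following length-$\delta$ block, and hence that $y_i$ and $z_i$ are simultaneously active at every step of the form $(j'-1)(\delta+1)+3$ --- among which lie exactly the steps $h\delta+t_{j}+3$ with $j\ge j_1$.

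The hard part will be the phase bookkeeping in part~(2): because the gaps between consecutive $y_iz_i$-labels alternate between $1$ and $\delta$, one must verify that the implication $y_i$ active at $(j'-1)(\delta+1)+2$ $\Rightarrow$ $z_i$ active on the next block $\Rightarrow$ $y_i$ active at $j'(\delta+1)+2$ reproduces itself exactly, so that the two activation patterns stay phase-locked and their common active times are precisely the steps $(j'-1)(\delta+1)+3$. Once this is set up, the rest is routine substitution into the closed-form time steps supplied by Lemmas~\ref{lemma:flooding2} and~\ref{lemma:stop}, together with the trivial check that $y_iz_i$ carries enough labels to reach all relevant time steps, which one can always arrange by padding the \setcover instance.
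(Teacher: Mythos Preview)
Your plan is correct and follows essentially the same two-part structure as the paper: use Lemma~\ref{lemma:flooding2} to pin down the unique length-$\delta$ interval in which $x_i$ is active after a transmission in the $j_1$-th window, intersect with the labels of $x_iy_i$ to obtain the ``only if $i\in S_{j_1}$'' conclusion, and then argue by induction that the $y_iz_i$ edge keeps $y_i$ and $z_i$ re-activating each other. The one difference worth noting is granularity: the paper's inductive step jumps between times $h\delta+t_{j'}+3$ (spacing $h(\delta+1)$) and leans on Lemma~\ref{lemma:stop} to assert that $z_i$ ``must have been active'' at the intermediate step $h\delta+t_{j'}+1$, whereas you induct over the individual $y_iz_i$ label pairs $(j'-1)(\delta+1)+1,\,(j'-1)(\delta+1)+2$ (spacing $\delta+1$) and track the $1/\delta$ alternating gaps directly. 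Your finer induction is in fact what is needed to bridge the $h(\delta+1)$-sized gaps rigorously, so your version is the more careful of the two; otherwise the arguments coincide.
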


\iflong
\begin{proof}
From Lemma \ref{lemma:flooding2}, we know that, for every $j\in [m+1]$, a transmission between time steps $t_j-(\delta-1)$ and $t_j$ will influence vertex $x_i\in C$ and make $x_i$ active at time step $(h-1)\delta+t_{j}+1$. Assume such a transmission at time step $t_{j_1}$. By construction, edge $x_iy_i$ is available at time step $h\delta+t_{j_1}$ if and only if $i\in S_j$. Assume that $i\in S_j$. The time difference between $x_i$ becoming active and $x_iy_i$ being available is $[h\delta+t_{j_1}]-[(h-1)\delta+t_{j_1}+1]=\delta-1$. Thus, $x_i$ will still be active once $x_iy_i$ becomes available and $y_i$ becomes active. Otherwise, if $i\notin S_j$, $x_i$ will become inactive before edge $x_iy_i$ becomes available. For the second part of the lemma, we are going to use induction. First, for the base step, note that vertex $z_i$ will become active at time step $h\delta+t_{j_1}+3$ since $y_i$ becomes active at time step $h\delta+t_{j_1}+2$ and edge $y_iz_i$ is available at time step $h\delta+t_{j_1}+2$ by construction. Assume now that, for some arbitrary $j_1^\prime>j_1$, both $y_i$ and $z_i$ are active at time step $h\delta+t_{j_1'}+3$. By the proof of Lemma \ref{lemma:stop}, we know that $y_i$ was not active at time step $h\delta+t_{j'}+1$, or else, $y_i$ would influence $x_i$. Therefore, at time step $h\delta+t_{j'}+1$, $z_i$ must have been active (or else both $y_i,z_i$ are inactive and the i-hypothesis does not hold). By construction, edge $y_iz_i$ is available at time step $h\delta+t_{j'}+1$ and $y_i$ becomes active at time step $h\delta+t_{j'}+2$. By construction, edge $y_iz_i$ is also available at time step $h\delta+t_{j'}+2$ and $z_i$ becomes active at time step $h\delta+t_{j'}+3$, and $y_i$ is still active. This completes the proof.
\end{proof}
\fi

\begin{theorem}\label{thm:spread-uncon2}
For any $\delta \geq 1$, \problemtwo is \NP-hard and \Wtwo-hard when~parameterized~by~$b$ on tree graphs with degree $3$.
\end{theorem}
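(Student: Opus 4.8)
The plan is to mimic the proof of Theorem~\ref{thm:spread-uncon1}, reducing from \setcover, but using the modified construction $P_2$ already described in the text (the perfect binary tree of $C$-vertices, the $y_i$-pendants attached to the leaves $x_i$, and the $z_i$-pendants attached to the $y_i$, together with the stated labellings). The target count will be $k = 2n$ at a suitably chosen ``snapshot'' time step: we want to arrange that every pair $y_i,z_i$ is simultaneously active at a common late time step $t^\star$, and this happens exactly when each $y_i$ has been activated at least once by some earlier transmission, which by Lemma~\ref{lemma:sets2} happens exactly when the source transmits at a time $t_{j_1}$ with $i \in S_{j_1}$.

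Concretely, I would proceed as follows. First, fix the snapshot $t^\star := h\delta + t_{m+1} + 3 = h\delta + h\,m(\delta+1)+4$, which is after all the ``set time steps'' $t_1,\dots,t_{m+1}$ have passed; note $t_{m+1}$ corresponds to the extra dummy set $S_{m+1}$ which I can take to equal $N$ so that $P_2$'s last layer of edges exists, but whose selection I will forbid by the budget (alternatively handle it as in Theorem~\ref{thm:spread-uncon1} by asking for budget $b+1$ and adding a dummy set covering everything). Second, for the forward direction, given a set cover $T = \{S_{j_1},\dots,S_{j_b}\}$, let the source transmit at the times $t_{j_1},\dots,t_{j_b}$. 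By Lemma~\ref{lemma:sets2}, for each $i$ there is some selected $S_{j}$ with $i\in S_j$, so $y_i$ becomes active at $h\delta+t_{j}+3$, and then the ``self-sustaining'' oscillation of Lemma~\ref{lemma:sets2} keeps both $y_i$ and $z_i$ active at every time step of the form $h\delta+t_{j'}+3 > h\delta+t_j+2$; in particular both are active at $t^\star$. Corollary~\ref{corollary:levels} and Lemma~\ref{lemma:flooding2} ensure the internal tree vertices contribute nothing extra at $t^\star$ (at that late, ``between-waves'' time step they are all inactive), so the active set at $t^\star$ is exactly $\{y_i,z_i : i\in[n]\}$, of size $2n = k$.

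Third, for the converse, suppose a transmission schedule $\tcal$ of size $\le b$ makes at least $2n$ vertices active at $t^\star$. By Corollary~\ref{corollary:levels} at most one level of the $C$-tree can contribute at $t^\star$, and since $|C\setminus\{s\}| $ at any single level is at most $n$ while the $y_i,z_i$ give another $2n$ candidates, in fact $2n$ can only be reached by having every $y_i$ and every $z_i$ active at $t^\star$. By Lemma~\ref{lemma:stop} and Lemma~\ref{lemma:sets2}, $y_i$ (hence $z_i$) can only be active at $t^\star$ if $y_i$ was activated from $x_i$ at some earlier time $h\delta + t_{j_1}+3$, which by Lemma~\ref{lemma:sets2} forces a transmission in the window $[t_{j_1}-(\delta-1), t_{j_1}]$ with $i\in S_{j_1}$. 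Collect $T := \{S_j : \tcal \text{ has a transmission in } [t_j-(\delta-1),t_j]\}$; then $|T|\le |\tcal|\le b$ and $T$ covers $N$. This establishes the reduction, and the graph $P_2$ is a tree of maximum degree $3$, and it works for every fixed $\delta\ge 1$; the $b$-parameterized \Wtwo-hardness is inherited from \setcover exactly as in Theorem~\ref{thm:spread-uncon1}. The \problemthree variant is immediate by declaring $t^\star$ to be the designated time step in the problem input.

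The main obstacle I anticipate is the converse's counting step: one must be genuinely careful that the $2n$ figure cannot be met by mixing a partial set of $y_i/z_i$ with a full level of internal tree vertices that happen to be active at $t^\star$. This is why the snapshot $t^\star$ must be chosen as a time step strictly between consecutive ``waves'' (so that, by the timing computations in Lemma~\ref{lemma:flooding2}, no level of $C\setminus\{s\}$ is active then at all, or at most a negligible set), and why the $y_iz_i$ edge carries \emph{two} consecutive labels in each block — that is precisely what powers the self-sustaining oscillation making $y_i,z_i$ permanently ``latch'' once activated, so the only freedom the solver has is which $S_j$'s to fire. Making these timing facts airtight (essentially: at $t^\star$ the only possibly-active vertices are the latched $y_i,z_i$ pairs) is the crux; everything else is bookkeeping parallel to Theorem~\ref{thm:spread-uncon1}.
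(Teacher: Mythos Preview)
Your converse for \problemtwo has a genuine gap. In \problemtwo the adversary, not you, chooses the time step at which the count is taken, so you cannot insist that the witnessing time be your ``between-waves'' snapshot $t^\star$. Concretely: suppose the schedule covers only a subset $I\subsetneq[n]$ with $|I|\ge n/2$. After these transmissions the pairs $(y_i,z_i)$ for $i\in I$ are latched, and then one more transmission (or even the last one of the schedule) sends a wave that makes \emph{all} $x_i$ simultaneously active at some step $t$; at that same $t$ the $2|I|$ latched vertices are still active (the paper's own proof uses exactly this overlap to reach $3n$). This gives $n+2|I|\ge 2n$ active vertices at $t$ without a set cover, so $k=2n$ is too small. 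Your final ``obstacle'' paragraph anticipates this, but the proposed fix (pick $t^\star$ between waves) only rescues \problemthree; for \problemtwo it is irrelevant. Separately, your suggestion to take $S_{m+1}=N$ outright breaks the reduction: a single transmission at $t_{m+1}$ then latches every $y_i,z_i$, so budget~$1$ always achieves $2n$; there is no meaningful way to ``forbid its selection by the budget''. The reference to Theorem~\ref{thm:spread-uncon1}'s padding is a misreading --- that dummy set was to make $n$ a power of two, not a universal set.

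The paper's proof avoids both pitfalls by asking for $k=3n$ rather than $2n$. Then at any witnessing time step, Corollary~\ref{corollary:levels} caps the contribution from $C$ at $n$, so at least $2n$ vertices must come from $\{y_i,z_i:i\in[n]\}$, which forces \emph{every} $y_i$ to have been activated and hence yields a set cover. The price is that the forward direction must now exhibit a time step at which all $x_i$ are active together with all $y_i,z_i$; this is why the paper does a case split on $\delta$ (for small $\delta$ one extra transmission is spent solely to bring the $x_i$-wave to the snapshot). If you want to keep $k=2n$, you would need to redesign $P_2$ so that the $x_i$-level can never be active at any time when latched $y_i,z_i$ are --- a different construction, not the given one.
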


\begin{proof}

We claim that there exists a solution to \problemtwo on constructed graph $P_2$, such that we maximize the number of vertices that are active at any one time step $t$ if $s$ transmits at $|b|$ time steps at most in the constructed graph, if and only if there is a solution $T = T_1 \cup T_2 \cup \cdots \cup T_b$ to the original \setcover instance.

First note, that due to Corollary \ref{corollary:levels}, the maximum vertices that can be influenced in the graph is exactly the set that contains vertices $x_i,y_i,z_i$. Let $t_j=h(j-1)(\delta+1)+1$. Assume that we have solution $T$ to the \setcover set problem. We split, the analysis into two cases: (i) $1\leq \delta \leq 2$ and (ii) $\delta \geq 2$. For case (i), we can construct a solution for \problemtwo by having vertex $s$ transmit at each time step $t_j$, such that $S_j\in T$. We can guarantee that this is a solution for \problemtwo since every $i\in [n]$ is included in at least one set $S_j$ and by Lemma \ref{lemma:sets2}, for every $i\in [n]$, every vertex $y_i$ will become active by at least one transmission of vertex $s$. Additionally, due to Lemma \ref{lemma:sets2}, every vertex $x_i,z_i$ will be active at every time step $h\delta+t_{j_1}+3$ for $t_{j+1}>t_j$. Finally, $s$ performs one final transmission at $(m+1-1)(\delta+1)+1$, so that all vertices $x_i,y_i,z_i$ are active at time step $h\delta+h(m+1-1)(\delta+1)+3$. For case (ii), assume that the last set is called $S_f$. We create the same solution as case (i) but we do not add the final transmission. This is because once vertex $s$ transmits at time step $(f-1)(\delta+1)+1$, at time step $h\delta+h(f-1)(\delta+1)+3$, every vertex $x_i$ will be active by the transmission at time step $(f-1)(\delta+1)+1$, or by a previous transmission. This was not true for case (i), because influence time is small, and vertices $x_i$ will have stopped being active at time step $h\delta+h(f-1)(\delta+1)+3$ by the transmission that fired at time step $(f-1)(\delta+1)+1$.

For the reverse direction, we again split the analysis in two cases: (i) $1\leq \delta \leq 2$ and (ii) $\delta \geq 2$. Consider that we have a solution to \problemtwo which is a transmission strategy $\tcal=(\tau_1, \tau_2,\dots, \tau_b)$. For case (i) this means that if vertex $s$ transmits at time steps $\tau_1, \tau_2,\dots, \tau_b$, then due to Lemma \ref{lemma:sets2}, vertices $x_i,y_i,z_i$ in the graph will be active at time step $h\delta+h(m+1-1)(\delta+1)+3$. To construct a solution for the \setcover problem we do the following. For every $t\in[b-1]$ and every $j\in [m]$, if $t_j-(\delta-1)\leq \tau_t\leq t_j$, we add set $S_j$ to solution $T$ of the \setcover problem (ignoring duplicate additions). We do not add the last set $b$ because this transmission was used only to influence vertices $x_i$. By definition the size of $T$ is at most $b$. Note also that every element $i\in [n]$ is included in $T$ since every vertex $y_i,z_i$ becomes active by at least one transmission of vertex $s$ and by Lemmas \ref{lemma:stop},\ref{lemma:sets2} this only happens when $i\in S_j$. For case (ii), we do the exactly the same but we also add set $b$ to the solution.
\end{proof}

\iflong
\subsection{\problemfour}
We prove that \problemfour is NP-complete and W[2]-hard when parameterized by $b$ on trees of maximum degree 4.
Containment in NP is straightforward, since given a transmission schedule, we can simulate the process and check whether the inactivation time of each node is at most $d$.
Our reduction is independent of the choice of $d$, as we can assure all vertices to be inactive for at most $d=1$ time step, and any efficient algorithm for $d>1$, it would also solve the constructed problem, thus providing a way to compute \textsc{SetCover} efficiently.
First, we describe the construction and prove some key properties our construction satisfies, before we provide the proof of the theorem.\\

\renewcommand{\P}{\ensuremath{P}}
\noindent \textbf{Construction.} We reduce from \textsc{SetCover}.
In what follows we will assume $n=2^k-1$ for some positive integer $k$.
Observe that this is without loss of generality, since we can augment any \textsc{SetCover} instance by adding a dummy set that contains the required number of elements.
Refer to \Cref{fig:objective3} for an example of our construction. 

We construct a perfect binary tree $P$ with $n+1$ leaves, where the root of the tree is the source $s$ and its leaves are $x_1,x_2,\dots,x_n,x_a$ from ``left to right''. Let $h=\log 2n$ be the current height of the tree.
Note that such a tree always has $2n$ vertices and is uniquely defined.
For every set from the \textsc{SetCover} instance $j\in[m]$ and every level $l\in [h]$, we add labels $l$ and $(l-1) + 2j(\delta + 1)$ to every edge of the tree between level $l$ and $l + 1$. 
Then, we construct vertices $y_1,y_2,\dots,y_n,y_a$, and for every $i\in [n]\cup \{a\}$ we add edge $x_iy_i$. 
For every $i\in[n]$ and $j\in[m]$, if $i\in S_j$ we add label $h+2j(\delta+1) $ to edge $x_iy_i$.
To edge $x_ay_a$ we add label $h+1$.
Next, we construct vertices $w_1,w_2,\dots,w_n,w_a$, and for every $i\in[n]\cup \{a\}$ we add edge $y_iw_i$. 
We add labels $h+2, h+3,\dots,h+2m(\delta+1)+1=t_{max}$ to every edge $y_iw_i$ with $i\in[n]\cup \{a\}$. 
Note that the constructed tree now has $h+2$ levels. 
Lastly, for every vertex $v$ on level $l$ of the original perfect binary tree \P, we construct two vertices $z_v$ and $z_v'$, and add edges $vz_v$ and $z_vz_v'$.
We add labels $l$ and $t_i=(l-1)+i(\delta+1)$ for all $i\in\mathbb{N}_{>0}$ with $t_i\leq t_{max}$ to the edge $vz_v$.
And to edge $z_vz_v'$ we add labels $l+1,l+2,\dots,t_{max}$.
This completes the construction.
\begin{figure}
    \centering
    \includegraphics[width=\columnwidth]{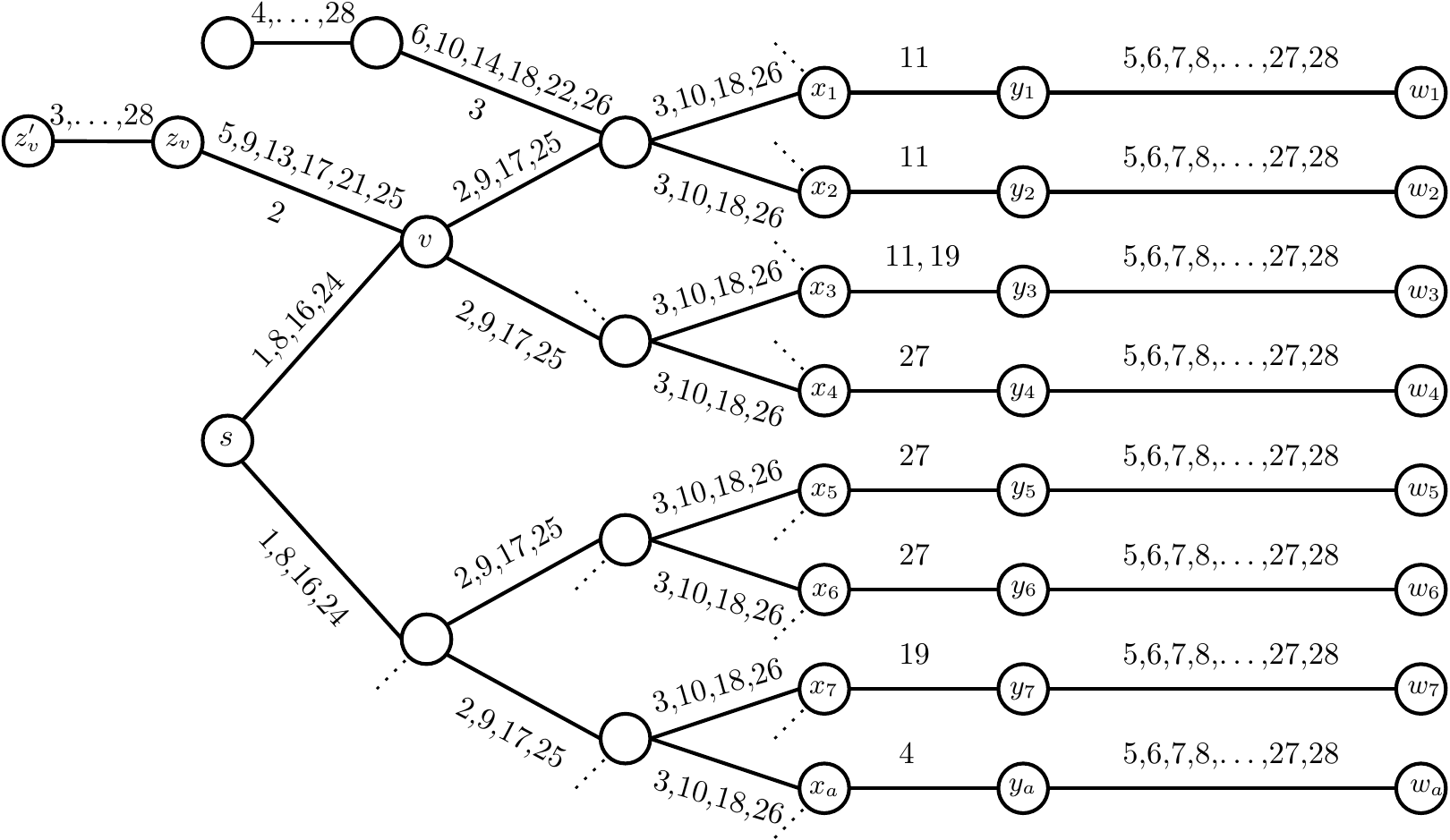}
    \caption{An example of the construction used to prove that \problemfour is \Wtwo-hard, when $\delta=3$, $d=1$. The subsets of the \textsc{SetCover} problem are $S_1=\{u_1,u_2,u_3\}$, $S_2=\{u_3,u_6,u_7\}$ and $S_3=\{u_4,u_5\}$.}
    \label{fig:objective3}
\end{figure}

The goal of this construction is to have the following properties:
\begin{itemize}
    \item Source $s$ has to transmit at time step 1 be able to influence $y_a$ and $w_a$, i.e., maximize the number of active vertices.
    \item Each transmission of $s$ acts like a monotone wave with one rebound that goes from the source to the leaves, i.\,e., no vertex can influence the parent of its parent.
    \item Once any vertex $v$ of the original binary tree $P$ is active for the first time, it is never inactive for more than one consecutive time step.
    In particular, $v$ enters a recurring sequence of activation: $v$ becomes active, stays active for $\delta$ time steps, is inactive after $\delta+1$, active again for the following $\delta$ time steps and after a total of $2(\delta+1)$ time steps it is active if and only if $s$ transmits at a time step $t_j=2j(\delta +1)$, for some set $S_j$ of the \textsc{SetCover} instance.
    \item For every $i\in[n]$ and $j\in[m]$, a transmission of $s$ will influence $y_i$ only if $s$ transmits at a time step $t_j=2j(\delta +1)$, such that set $S_j$ contains element $i$ in the \textsc{SetCover} instance.
    \item Once $y_i$ or $y_a$ is active for the first time, $w_i$, resp. $w_a$, will be active in the next time step. Both will stay active until $t_{max}$.
\end{itemize}
Next, we formally prove these properties vital for the proof of \Cref{thm:MNVT_NPhard}.
Recall, $t_{max}=h+2m(\delta+1)+1$ in our construction.

\begin{lemma}\label{lem:first_wave}
    Source $s$ has to transmit at time step 1 to maximize the overall number of activated vertices. In particular, there is no transmission schedule $T$ activating $y_a$ and $w_a$ where $1\notin T$.
\end{lemma}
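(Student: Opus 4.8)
The plan is to analyze when the edge $x_ay_a$ and the edges $y_aw_a$ can ever transmit activity to $y_a$ and $w_a$, and show this forces a transmission at time step $1$. First I would observe that, by construction, the path from the source to $y_a$ consists of the edges of the perfect binary tree between consecutive levels (which between level $l$ and $l+1$ carry labels $l$ and $(l-1)+2j(\delta+1)$ for $j\in[m]$) followed by the single edge $x_ay_a$, which carries only the one label $h+1$. So the only way $y_a$ can become active is if $x_a$ is active at time step $h$ (making $y_a$ active at time $h+1$ via the label-$(h+1)$ edge). I would then argue, by the monotone-wave behaviour established analogously to Lemma~\ref{lemma:flooding} (the ``no vertex influences the parent of its parent'' property listed just before this lemma), that the only transmissions of $s$ that reach $x_a$ exactly at time step $h$ using the tree edges with labels $1,2,\dots,h$ are those starting at time step $1$: a transmission of $s$ at time $1$ propagates down level by level along the labels $1,2,\dots,h-1$ and activates $x_a$ at time $h$. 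Any later transmission of $s$ would only travel along the ``$(l-1)+2j(\delta+1)$'' labels, reaching $x_a$ at a time of the form $(h-1)+2j(\delta+1)$, never at time $h$ (since $2j(\delta+1)\geq 2(\delta+1) > 1$).

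Next I would handle $w_a$: the edge $x_ay_a$ has its unique label $h+1$, so $y_a$ can be active only at times in the window $[h+1,h+\delta]$ unless it is ``refreshed'' by $w_a$; but the edges $y_iw_i$ carry labels $h+2,h+3,\dots,t_{max}$, so as soon as $y_a$ is active once (at time $h+1$), $w_a$ becomes active at time $h+2$, and thereafter $y_a$ and $w_a$ keep refreshing each other every time step up to $t_{max}$. Conversely, if $y_a$ is never activated, then neither is $w_a$. Hence $y_a$ and $w_a$ are activated if and only if $1\in\tcal$, and this adds two vertices that otherwise can never be active.

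Finally I would close the ``maximize'' phrasing: since $y_a$ and $w_a$ contribute two vertices to $\bigcup_t(\delta,\tcal)-\activeVertex_t(\gcal,s)$ that are unreachable without a transmission at step $1$, and no transmission at step $1$ is ever ``wasted'' in the sense of blocking another activation (the wave started at time $1$ is exactly the wave that the later copies $2j(\delta+1)$ would start, just shifted, and all other gadget vertices $z_v,z_v'$ and the subtree vertices are reached by the same wave), any schedule achieving the maximum overall number of activated vertices must include time step $1$; in particular no transmission schedule with $1\notin\tcal$ activates both $y_a$ and $w_a$. The main obstacle I anticipate is carefully verifying the monotone-wave-with-one-rebound claim — i.e., that a transmission of $s$ at a later time $t_j$ reaches $x_a$ only along the shifted labels and hence never exactly at time $h$ — which requires re-running the induction of Lemma~\ref{lemma:flooding} with the two label families, tracking that $\delta$ is strictly smaller than the spacing $2(\delta+1)$ between successive available labels so that activity never lingers long enough to ride a ``wrong'' label backwards or skip ahead.
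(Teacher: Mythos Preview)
Your proposal is correct and follows essentially the same route as the paper: both arguments hinge on the single observation that the edge $x_ay_a$ carries exactly one label, so any activation of $y_a$ (and hence of $w_a$) must come from the unique temporal path through the tree that starts with a transmission at time~$1$. The paper's proof is literally a two-sentence statement of this fact; you supply the surrounding details (the wave propagation along the labels $1,2,\ldots$, the refresh of $y_a$ by $w_a$, and the ``nothing is lost by adding time~$1$'' remark for the maximization clause) that the paper leaves implicit. One cosmetic slip worth fixing: with label $h+1$ on $x_ay_a$, the spreading rule requires $x_a$ to be active at time step $h+1$ (not $h$), making $y_a$ active at $h+2$; this does not affect the logic, and indeed the paper itself is inconsistent on this point.
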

\begin{proof}
    The claim follows, as any temporal path from $s$ to $y_a$ and $w_a$ has to take the edge $x_ay_a$ which only has the label $h$. Therefore, there is only the temporal path starting at time step 1.
\end{proof}

\begin{lemma} \label{lem:wave}
    Consider the originally constructed binary tree \P.
    If source $s$ transmits at a time step $t=1$ or $t=2j(\delta+1)$ for some $j\in[m]$, every vertex $v\in V(\P)$ at level $l$ is active at time step $(l-1)+t$.
    In particular, every $v\in V(\P)$ is active for the first time at time step $l$.
\end{lemma}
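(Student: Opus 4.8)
The plan is to prove the statement by induction on the level $l$ of the perfect binary tree $\P$, mirroring the inductive arguments already used in Lemmas~\ref{lemma:flooding} and~\ref{lemma:flooding2}, but now taking into account the two labels $l$ and $(l-1)+2j(\delta+1)$ on each tree edge and the "rebound" edges $vz_v$, $z_vz_v'$ that were added to $\P$. First I would fix a transmission time $t$ which is either $t=1$ or $t=2j(\delta+1)$ for some $j\in[m]$, and observe that in either case the quantity $(l-1)+t$ is, by construction, exactly one of the two labels placed on every tree edge between levels $l$ and $l+1$: for $t=1$ it is the label $l$ (after reindexing), and for $t=2j(\delta+1)$ it is the label $(l-1)+2j(\delta+1)$. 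This is the key bookkeeping identity that makes the wave propagate cleanly.

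For the base case $l=1$, the source $s$ sets its counter to $\delta$ at time step $t$, hence is active at time $t=(1-1)+t$. For the inductive step, suppose every vertex $u$ at level $l$ is active at time step $(l-1)+t$; let $v$ be a child of $u$ at level $l+1$. The edge $uv$ carries the label $l+1$ and the label $l+2j(\delta+1)$, and since $u$ is active at time $(l-1)+t$, at the next time step $(l-1)+t+1 = ((l+1)-1)+t$, which equals one of these two labels, $v$'s counter is set to $\delta$, so $v$ is active at time $((l+1)-1)+t$ as claimed. To get the "first time" part of the statement, I would note that the smallest label on any edge between levels $l-1$ and $l$ is $l$ (for $t=1$ the wave reaches level $l$ at time $l$, and for $t=2j(\delta+1)$ it reaches it strictly later, at time $(l-1)+2j(\delta+1) \ge l$), and that — by the monotone-wave property, which I would establish exactly as in Lemma~\ref{lemma:flooding}, arguing that a vertex goes inactive before the edge to its parent reappears because consecutive labels on tree edges differ by at least $2(\delta+1) > \delta$ — no activation can travel back up the tree. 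Hence the earliest any level-$l$ vertex can be active is time $l$, via the very first transmission at time step $1$ guaranteed by Lemma~\ref{lem:first_wave}.

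The one genuinely new ingredient compared to the earlier lemmas is the presence of the rebound gadget ($z_v$, $z_v'$) and the second label on each tree edge, so the main obstacle is to check carefully that these additions do not let influence propagate upward or arrive at a tree vertex earlier than level $l$ would dictate. Concretely, I would verify that when $v$ at level $l$ becomes active at time $(l-1)+t$, the edge $vz_v$ (which carries labels $l$ and $(l-1)+i(\delta+1)$ for all valid $i$) next becomes available at time $\ge (l-1)+t+1$, so $z_v$ becomes active one step later; then $z_vz_v'$ (labels $l+1,\dots,t_{max}$) passes activation on to $z_v'$, which is a leaf of the gadget and cannot rebound anywhere — so the gadget absorbs the wave rather than reflecting it, and the consecutive-label gap of $\delta+1>\delta$ on $vz_v$ ensures $z_v$ is already inactive whenever $vz_v$ could in principle carry influence back to $v$. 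With the monotone-wave and no-backward-travel facts in hand, the induction closes and the lemma follows.
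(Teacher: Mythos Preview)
Your approach—induction on the level $l$ using the fact that the edge between levels $l$ and $l+1$ carries a label equal to $(l-1)+t$—is essentially the paper's argument, which simply observes that the labels $1,2,\ldots,h$ (respectively $2j(\delta+1), 1+2j(\delta+1),\ldots$) form a temporal path from $s$ down to the leaves. Two corrections are in order, however.

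First, you slip by one in the inductive step: the edge $uv$ between $u$ at level $l$ and $v$ at level $l+1$ carries the labels $l$ and $(l-1)+2j(\delta+1)$, not $l+1$ and $l+2j(\delta+1)$. The correct timing is that $u$ is active at step $(l-1)+t$, the edge $uv$ is available \emph{at that same step} $(l-1)+t$, and hence $v$ is active at step $(l-1)+t+1=((l+1)-1)+t$. The same index shift appears in your ``first time'' paragraph, where the smallest label on the edge between levels $l-1$ and $l$ is $l-1$, not $l$.

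Second, and more substantively, your claim that ``the consecutive-label gap of $\delta+1>\delta$ on $vz_v$ ensures $z_v$ is already inactive whenever $vz_v$ could in principle carry influence back to $v$'' is false: the entire purpose of the $z_v,z_v'$ gadget (see \Cref{lem:inactive_time}) is that $z_v$ and $z_v'$ reactivate each other indefinitely once triggered, so $z_v$ \emph{does} carry influence back to $v$ at every subsequent label on $vz_v$. This does not damage the present lemma, but for a different reason than you give: $z_v$'s only route from $s$ passes through $v$, so $z_v$ is inactive until strictly after $v$'s first activation and therefore cannot influence when $v$ is \emph{first} active. The same ``only path from $s$ goes through the parent'' observation disposes of the children of $v$ as well, and it is what you should use in place of the no-backward-travel argument from \Cref{lemma:flooding}, which does not go through unchanged once the reactivation gadgets are present.
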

\begin{proof}
Let $P$ be the originally constructed binary tree, i.e., the height of $P$ is $h$ and its leaves are $x_1,x_2,\dots, x_n,x_a$.
Assume source $s$ transmits at time step $t=1$. 
    There is a temporal path from $s$ to every leaf $x_i$ with $i\in [n]\cup \{a\}$, following the $l$ labels on every edge between level $l$ and $l+1$.
    Thus, every vertex at level $2$ is active at time step $2$ because of the transmission of $s$ at time step $1$.
    These vertices influence their children at level $3$, and so on.
    Overall, every vertex $v\in V(\P)$ at level $l$ is first active at time step $(l-1)+1$.
Now, assume $s$ transmits at time step $t=2j(\delta+1)$ for some $j\in[m]$.
    There is a temporal path from $s$ to every leaf $x_i$ with $i\in[n]\cup\{a\}$, following the $(l-1) + 2j(\delta + 1)$ labels on every edge between level $l$ and $l+1$.
    Thus, every vertex at level $2$ is active at time step $1+ 2j(\delta+1)$ because of the transmission of $s$ at time step $2j(\delta+1)$.
    These vertices influence their children at level $3$, and so on.
    Overall, every vertex $v\in V(\P)$ at level $l$ is active at time step $(l-1)+2j(\delta+1)$.
\end{proof}

\begin{lemma} \label{lem:inactive_time}
    Consider the originally constructed binary tree \P.
    After being active for the first time at time step $l$, every vertex $v\in V(\P)$ at level $l$ is always inactive at time step $t_i=(l-1)+i(\delta+1)$ for all uneven $i>0$ with $t_i<t_{max}$.
    For all even $i$ with $t_i<t_{max}$, vertex $v$ is inactive at $t_i$ if and only if $s$ does not transmit at time step $i(\delta+1)$.
    At~all other time steps vertex $v$ is active.
    This reactivation of $v$ is achieved by $z_v$ which is active at all time steps from $l+1$ until $t_{max}$.
\end{lemma}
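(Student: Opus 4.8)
\emph{Proof proposal.}
The plan is to first pin down the behaviour of the pendant gadget $v$-$z_v$-$z_v'$ hanging off each vertex $v$ of the binary tree $P$, and then to follow the counter of $v$ time step by time step, using this gadget together with \Cref{lem:first_wave} and \Cref{lem:wave}. Throughout I will use that, by \Cref{lem:first_wave}, the source transmits at time step~$1$, so by \Cref{lem:wave} every $v\in V(P)$ at level $l$ is active for the first time exactly at time step~$l$, and that $t_{i+1}=t_i+(\delta+1)$ where $t_i=(l-1)+i(\delta+1)$.

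First I would settle the gadget. Since the edge $vz_v$ carries the label $l$ and $v$ is active at time step~$l$, vertex $z_v$ becomes active at time step $l+1$. The crucial point is that $z_vz_v'$ is available at \emph{every} time step of $\{l+1,\dots,t_{max}\}$: once $z_v$ is active it activates $z_v'$ at $l+2$, and from then on the two vertices keep resetting each other's counter to $\delta$, so a short induction gives that $z_v$ is active at every time step from $l+1$ to $t_{max}$ when $\delta\ge 2$. (When $\delta=1$ the two merely alternate, but since $t_i-l$ is always odd one checks that $z_v$ is still active at every $t_i$, which is all that is used below.) In particular $z_v$ is active at every time step $t_i\le t_{max}$.

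Next I would track the counter of $v$, inducting first over the levels of $P$ from the top down -- the base being the source $s$, whose activity is governed directly by its transmission schedule and $z_s$ -- and within a level over $i\ge 0$ (write $t_0:=l-1$, so $t_0+1=l$). The inductive claim is: $v$ is reset to counter $\delta$ at time step $t_i+1$ (base case $i=0$: the first activation at time step~$l$), hence is active throughout $[t_i+1,\,t_i+\delta]=[t_i+1,\,t_{i+1}-1]$ and has counter $0$ at $t_{i+1}$; the next reset, at $t_{i+1}+1$, is forced because $z_v$ is active at $t_{i+1}$ and $vz_v\in E_{t_{i+1}}$. The real work is to rule out interference: the edge $vz_v$ is present only at time $l$ and at the $t_j$, the edges from $v$ to its children (including the single ``rebound'' label a child may use on the way back up) only at $l$ and at the $t_{2j}$, and the edge from $v$ to its parent only at $l-1$ and at the $t_{2j}-1$; in each case the far endpoint is inactive at the decisive time unless that coincides with one of the events above, which is precisely where the ``monotone wave with one rebound'' behaviour gets checked, using the induction hypothesis for the parent at level $l-1$. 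The upshot is that the only time steps after $l$ at which $v$ can be inactive are $t_1,t_2,\dots$, and these are isolated, so $v$ is never inactive for more than one consecutive step.

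Finally I would decide which $t_i$ are genuinely inactive. By \Cref{lem:wave}, a transmission of $s$ at time $\tau$ makes every level-$l$ vertex active at time $(l-1)+\tau$; since $s$ transmits only at time step~$1$ and at times $2j(\delta+1)$, a wave reaches $v$ exactly at time $l$ (from $\tau=1$, which is not a $t_i$) and at the times $t_{2j}=(l-1)+2j(\delta+1)$, and only for those $j$ at which $s$ in fact transmits. Hence for odd $i$ no wave ever reaches $v$ at $t_i$, so $v$ has counter $0$ there and is inactive; for even $i=2j$, a wave resets $v$ to $\delta$ at $t_i$ iff $s$ transmits at $i(\delta+1)=2j(\delta+1)$, and $v$ is inactive at $t_i$ otherwise. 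Combined with the previous step this is exactly the claimed pattern, with every reactivation of $v$ carried out by $z_v$. I expect the main obstacle to be the third step: showing that the $z_v$-driven reactivations and the recursive, level-by-level pattern of the between-levels labels interlock precisely, so that no edge incident to $v$ reactivates it ``off schedule'' and a wave arriving at an even-indexed $t_i$ plugs exactly that one gap without shifting another; the $\delta=1$ parity check for the gadget is a minor extra case.
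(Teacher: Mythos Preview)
Your proposal is correct and follows essentially the same line as the paper's proof: analyse the pendant gadget $z_v,z_v'$ to see that $z_v$ stays available to reactivate $v$, then induct on $i$ and check, via the label structure of the edges incident to $v$, that within the interval $[t_{i-1}+1,\,t_i-1]$ only the parent edge (at label $t_i-1$, and only when $i$ is even) can reset $v$'s counter, so the status at $t_i$ is determined by whether a wave from $s$ reaches $v$ then.

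Two minor remarks. Your added outer induction over the levels of $P$ is actually a genuine tightening: the paper appeals to \Cref{lem:wave} to decide when the parent $p_v$ is \emph{in}active at $t_i-1$, but \Cref{lem:wave} only gives the ``if'' direction, so the ``only if'' really needs the current lemma for $p_v$ at level $l-1$; your level-by-level scheme makes this dependency explicit. Likewise, your $\delta=1$ parity check for the gadget is a real case the paper glosses over (it asserts $z_v$ is active at \emph{all} times from $l+1$, which with $\delta=1$ is only true modulo the alternation you describe; what is actually needed, and what you verify, is that $z_v$ is active at each $t_i$). The phrase ``single `rebound' label a child may use'' is a slight misreading of the construction---children edges carry the labels $l$ and $t_{2j}$ only---but this does not affect the argument, since none of these fall in the critical interval $[t_{i-1}+1,\,t_i-1]$ anyway.
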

\begin{proof}
    Let $v$ be a vertex at level $l$. 
    Refer to \Cref{fig:objective3_cutout} for an illustration of all edges adjacent to such a vertex together with their labels.
    By \Cref{lem:first_wave}, we know $v$ is first active at time step $l$.
    We prove the claim via induction over $i$.
    \begin{figure}
        \centering
        \includegraphics[width=0.75\columnwidth]{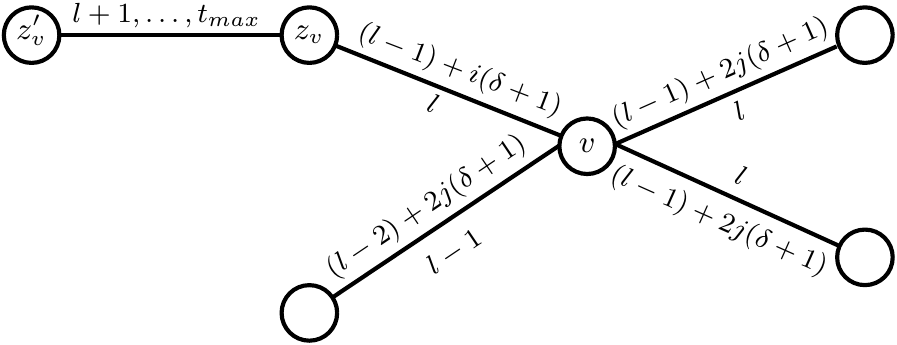}
        \caption{Cutout from the construction showing one of the vertices $v\in V(\P)$ at level $l$ of the originally constructed binary tree \P, together with its parent to the left and children to the right, as well as the additional leaves $z_v,z_v'$. On the edges, the corresponding labels are displayed.}
    \label{fig:objective3_cutout}
    \end{figure}
    
    Let $i=1$. As $v$ becomes active at time step $l$, it stays active until time step $l+\delta-1$ (included).
    Because of label $l$ on the edge $vz_v$, $z_v$ is active for the first time at time step $l+1$ and influences $z_v'$, which is active for the first time at time step $l+2$.
    As there are no adjacent edges of $v$ with a label in $[l+1,l+\delta-1]$, $v$ could only be reactivated by an edge with label $l$, i.e., either by one of its children in $P$ or by $z_v$.
    But those vertices are inactive until time step $l$, thus they cannot influence $v$ in time step $l$.
    Therefore, $v$ is inactive at time step $l+\delta$. 
    Because the edge $z_vz_v'$ has every label from $l+1$ to $t_{max}$, $z_v$ and $z_v'$ reactivate each other constantly starting at time step $l+1$ and stay active until $t_{max}$.
    Therefore, because of label $(l-1)+(\delta+1)$ on $vz_v$, $z_v$ will reactivate $v$ at time step $l+(\delta+1)$.

\newcommand{\actime}{\ensuremath{t_{act}}}
    Now, assume the claim holds for any vertex $v$ of $P$ at level $l$ for $0<j<i$.
    Meaning, $v$ is always inactive at $t_j=(l-1)+j(\delta+1)$ for uneven $0<j<i$ with $t_j<t_{max}$, and for even $j$ if and only if $s$ does not transmit at time step $j(\delta+1)$.
    In particular, $v$ is definitely active at $\actime=(l-1)+(i-1)(\delta+1)+1$, because $z_v$ is active and since label $(l-1)+(i-1)(\delta+1)$ is on $vz_v$.
    Therefore, $v$
    \begin{itemize}
        \item stays active until $\actime+\delta-1$, and
        \item will become inactive at time step $\actime+\delta=(l-1)+i(\delta+1)$,
        \item if not reactivated between $\actime$ and $\actime+\delta-1$.
    \end{itemize}
    The only edge adjacent to $v$ with a label in $[\actime,\actime+\delta-1]$ is the edge $p_vv$ to its parent $p_v$ with label $\actime+\delta-1$.
    By \Cref{lem:wave}, $p_v$ is active at time step $\actime+\delta-1=(l-2)+i(\delta+1)$ only if $i$ is even and source $s$ transmits at time step $i(\delta+1)$.
    Otherwise, $p_v$ is inactive at time step $\actime+\delta-1$ and therefore cannot influence $v$.
    Thus, $v$ is definitely inactive at time steps $(l-1)+i(\delta+1)$ for uneven $0<j<i$ with $t_j<t_{max}$, and for even $j$, $v$ is inactive if and only if $s$ does not transmit at time step $i(\delta+1)$. 
\end{proof}

\begin{lemma} \label{lem:setcover_correlation}
    For every $i\in[n]$, any $y_i$ can only be active for the first time at a time step $h+2j(\delta+1)$ for some $j\in[m]$ with $i\in S_j$. This happens if and only if source $s$ transmits at time step $2j(\delta+1)$.
    Additionally, once a vertex $y_i$ with $i\in[n]\cup\{a\}$ is active, the vertices $y_i$, $w_i$ are active at every time step until $t_{max}$.
\end{lemma}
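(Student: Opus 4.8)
\medskip
\noindent\textbf{Proof plan.}
The plan is to isolate, for each index $i$, the unique way $y_i$ (and then $w_i$) can become active, and to read the activation times directly off \Cref{lem:inactive_time}. The key structural fact is that in the construction $y_i$ has only the two neighbours $x_i$ (edge $x_iy_i$) and $w_i$ (edge $y_iw_i$), while $w_i$ has the single neighbour $y_i$; hence $w_i$ can never be active before $y_i$, so the first activation of $y_i$ must be caused by $x_i$ along $x_iy_i$, i.e.\ at one of the labels of that edge --- which for $i\in[n]$ are exactly the steps $h+2j(\delta+1)$ with $i\in S_j$.

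First I would settle the case $i\in[n]$. By construction the label $h+2j(\delta+1)$ on $x_iy_i$ coincides with the $2j$-th reactivation checkpoint of $x_i$ in the sense of \Cref{lem:inactive_time} (the labels on $x_iy_i$ were placed precisely to align with the even-indexed checkpoints of the leaves of $P$). Since \Cref{lem:inactive_time} pins down the active/inactive status of $x_i$ at every step regardless of the edges to $y_i$, $w_i$ and $z_{x_i}$, it tells us that $x_i$ is active at step $h+2j(\delta+1)$ if and only if $s$ transmits at time step $2j(\delta+1)$. Therefore $x_i$ delivers influence to $y_i$ along $x_iy_i$ at step $h+2j(\delta+1)$ exactly when both $i\in S_j$ and $s$ transmits at $2j(\delta+1)$, and this is the only route by which $y_i$ can first become active; the first activation then occurs one step later. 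For $y_a$, the edge $x_ay_a$ carries only the label $h+1$, so by \Cref{lem:first_wave} any schedule activating $y_a$ and $w_a$ must transmit at time step $1$, and by \Cref{lem:wave} the leaf $x_a$ is then active at step $h+1$; hence $y_a$ becomes active at step $h+2$.

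Next I would prove the persistence claim. Suppose $y_i$, for some $i\in[n]\cup\{a\}$, first becomes active at step $t_0$; from the previous paragraph $t_0\ge h+2$, and by construction the edge $y_iw_i$ carries \emph{every} label from $h+2$ up to $t_{max}$. Thus at step $t_0$ the edge $y_iw_i$ is available and $y_i$ is active, so $w_i$ is active one step later, after which $y_i$ and $w_i$ lie on a permanently available edge and keep resetting each other's counter. A short induction over the time steps up to $t_{max}$, using $\delta\ge 1$, then shows both $y_i$ and $w_i$ are active at every step from $t_0+1$ onward (the only degenerate situation is $\delta=1$, where $y_i$ and $w_i$ strictly alternate, so each is merely inactive for at most one consecutive step --- which is precisely what \Cref{thm:MNVT_NPhard} needs, as $d=1$).

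The step I expect to be the main obstacle is the middle one. One has to check carefully that the labels $h+2j(\delta+1)$ on $x_iy_i$ hit \emph{exactly} the even checkpoints of $x_i$ from \Cref{lem:inactive_time} and never an ``interior'' step of one of $x_i$'s activity windows, that this is genuinely the only temporal route into $y_i$, and that the mutual reactivation of $y_i$ and $w_i$ (and its harmless echo back onto $x_i$) leaves the activation pattern already established in $P$ intact; the remaining bookkeeping is routine.
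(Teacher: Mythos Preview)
Your proposal is correct and follows essentially the same route as the paper: both argue that the only temporal path into $y_i$ is via the edge $x_iy_i$, then invoke \Cref{lem:wave}/\Cref{lem:inactive_time} to determine exactly when $x_i$ is active at the label times $h+2j(\delta+1)$, and finally use the fully-labelled edge $y_iw_i$ for the persistence claim. Your write-up is in fact more careful than the paper's on two points --- you justify explicitly why the first activation of $y_i$ cannot come from $w_i$, and you flag the $\delta=1$ alternation between $y_i$ and $w_i$, which the paper's proof silently absorbs into ``active at every time step''.
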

\begin{proof}
Let $i\in[n]$. The only temporal paths from source $s$ to $y_i$ take the edge $x_iy_i$ which has labels $h+2j(\delta+1)$ for all $j\in[m]$ with $i\in S_j$.
If $s$ transmits at time step $2j(\delta+1)$, $x_i$ is active at time step $h+2j(\delta+1)$ by \Cref{lem:wave}.
Now, only if $i\in S_j$, $x_i$~can influence $y_i$, which is then active at time step $h+2j(\delta+1)+1$.
If, on the other hand, $s$ does not transmit at time step $2j(\delta+1)$ for some $j$ with $i\in S_j$, by \Cref{lem:inactive_time}, $x_i$ is inactive at $h+2j(\delta+1)$ and cannot influence $y_i$. 

The second claim follows immediately as the edge between $y_i$ and $w_i$, for any $i\in[n]\cup\{a\}$, has all labels in $[h+2,t_{max}]$, and the earliest that any $y_i$ is active is $h+2$.
\end{proof}

\begin{theorem} \label{thm:MNVT_NPhard}
    For any $\delta\geq 1$ and $u\geq 1$, \problemfour is NP-hard and W[2]-hard when parameterized by the budget $b$ on tree graphs with degree 4.
\end{theorem}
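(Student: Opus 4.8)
The plan is to follow the template of Theorems~\ref{thm:spread-uncon1} and~\ref{thm:spread-uncon2} and reduce from \setcover. Given an instance with sets $S_1,\dots,S_m$ over a ground set $N$ and budget $b$, I would take the temporal tree constructed above, set the transmission budget to $b+1$, set the threshold $k$ to the total number of vertices of the constructed graph, and fix $d=1$. Arguing only for $d=1$ is enough: the forward direction below yields a schedule under which no vertex is ever inactive for two consecutive steps, so it is feasible for every $d\ge 1$, while the reverse direction never inspects the value of $d$. As in Theorem~\ref{thm:spread-uncon1}, assuming $n=2^k-1$ is without loss of generality, by padding $N$ with the missing elements, placing exactly those into one fresh dummy set, and raising the budget by one.

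For the forward direction, from a set cover $T=T_1\cup\dots\cup T_b$ I would let $s$ transmit at time step $1$ and at each $t_j=2j(\delta+1)$ with $S_j\in T$; this is a schedule of size at most $b+1$. By Lemma~\ref{lem:first_wave} and Lemma~\ref{lem:wave}, the time-$1$ transmission activates $y_a$, $w_a$, and every vertex of the binary tree $P$, and by Lemma~\ref{lem:inactive_time} each gadget vertex $z_v$ (and hence $z_v'$) becomes active and stays active from time step $l+1$ (resp.\ $l+2$) until $t_{max}$. Since $T$ covers $N$, Lemma~\ref{lem:setcover_correlation} guarantees that every $y_i$ is activated at the corresponding time step and that $w_i$ is activated one step later, both remaining active until $t_{max}$. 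Hence all $k$ vertices become active. For the inactivity constraint I would invoke Lemma~\ref{lem:inactive_time}: every $v\in V(P)$ is inactive only at the isolated time steps $(l-1)+i(\delta+1)$, which are $\delta+1\ge 2$ apart with active steps in between, so $v$ is never inactive twice in a row; the vertices $z_v,z_v',y_i,y_a,w_i,w_a$ are uninterruptedly active from their first activation until $t_{max}$. So the constructed instance is a yes-instance for $d=1$, and therefore for every $d\ge 1$.

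For the reverse direction, let $\tcal$ be a schedule of size at most $b+1$ that activates all $k$ vertices and respects the inactivity constraint. Since $y_a$ must be activated, Lemma~\ref{lem:first_wave} forces $1\in\tcal$. For every $i\in[n]$, $y_i$ is activated, so by Lemma~\ref{lem:setcover_correlation} there is a $j$ with $i\in S_j$ and $2j(\delta+1)\in\tcal$. Setting $T:=\{\,S_j : 2j(\delta+1)\in\tcal\,\}$, these are pairwise distinct time steps, all different from $1$, so $|T|\le b$, and by the above $T$ covers $N$; any transmission of $\tcal$ that is neither $1$ nor of the form $2j(\delta+1)$ can simply be discarded. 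This produces a set cover of size at most $b$.

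Finally, the construction is computable in time polynomial in $n$ and $m$ for any fixed $\delta$, which together with membership in \NP (noted above) gives \NP-hardness; and since it maps the parameter $b$ to $b+1$, it is a parameterized reduction from \setcover parameterized by~$b$, giving \Wtwo-hardness. The step I expect to be the main obstacle is verifying feasibility of the inactivity constraint in the forward direction: one must confirm that the refresh gadgets $z_v,z_v'$ and the pendant vertices $w_i$ jointly ensure that no vertex is inactive for two consecutive steps, irrespective of which subset of the $t_j$ is transmitted. This has essentially been front-loaded into Lemma~\ref{lem:inactive_time}, so what remains for the theorem is bookkeeping combining Lemmas~\ref{lem:first_wave}--\ref{lem:setcover_correlation}.
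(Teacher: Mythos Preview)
Your proposal is correct and follows essentially the same approach as the paper: reduce from \setcover using the constructed tree, set the budget to $b+1$ and $k$ to all vertices, use Lemma~\ref{lem:first_wave} to force the transmission at time~$1$, and use Lemma~\ref{lem:setcover_correlation} to translate between covering $N$ and activating all $y_i$. Your explicit remark that establishing the case $d=1$ suffices for all $d\ge 1$ (forward direction gives inactivity at most~$1$; reverse direction ignores~$d$) makes the independence from $d$ cleaner than the paper's phrasing, but the underlying argument is the same.
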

\begin{proof}
We claim that there exists a solution to \problemfour on the constructed graph, such that we maximize the number of overall activated vertices if $s$ transmits at $\lvert b\rvert+1$ time steps in the constructed graph, if and only if there is a solution $T= T_1\cup T_2\cup \cdots\cup T_b$ to the original \textsc{SetCover} instance.

First note that, due to \Cref{lem:first_wave}, \Cref{lem:wave} and \Cref{lem:setcover_correlation}, the whole set of vertices can be influenced in the graph if there is a solution to the \textsc{SetCover} instance.
Assume a solution $T$ to the \textsc{SetCover} instance and let $t_j=2j(\delta +1)$.
We construct a solution for \problemfour by having source $s$ transmit at time step 1 and each time step $t_j$ such that $S_j\in T$.

We can guarantee that this is a solution for \problemfour, since every $i\in[n]$ is included in at least one set $S_j$, and by \Cref{lem:setcover_correlation}, for every $i\in[n]$, every vertex $y_i$ and with it $w_i$ will be influenced by at least one transmission of source $s$.
The transmission at time step 1 guarantees $s$ influencing $y_a$ and $w_a$, see \Cref{lem:first_wave}.
Lastly, all vertices $v$ of the binary tree $P$ and their artificial leaves $z_v$ are influenced by $s$, due to \Cref{lem:wave} and \Cref{lem:inactive_time}.
Additionally, by \Cref{lem:inactive_time,lem:setcover_correlation} we know that no vertex is inactive for more than one time step in a row.
Therefore, the transmission schedule $T$ is a solution to the \problemfour problem.

For the reverse direction, consider a solution to \problemfour which is a transmission schedule $\mathcal{T}=(1,\tau_1,\tau_2,\dots,\tau_{b})$.
The transmission schedule has to start with time step 1 due to \Cref{lem:first_wave} which implies together with \Cref{lem:wave} that all vertices on level $l$ of the binary tree $P$ are active for the first time at time step $l$ and, due to \Cref{lem:inactive_time}, are inactive for at most one consecutive time step.
Additionally, as source $s$ transmits at time steps $\tau_1,\tau_2,\dots,\tau_{b},$ due to \Cref{lem:setcover_correlation}, vertices $y_i,w_i$ in the graph are active at time step $h+2j(\delta+1)+1$, resp. $h+2j(\delta+1)+2$, for some $j$.
To construct a solution $T$ for the \textsc{SetCover} instance, we do the following.
For every $t\in[b]$ and every $j\in[m]$, if $\tau_t=2j(\delta+1)$, we add set $S_j$ to solution $T$.
By definition the size of $T$ is $b$. Also note that every element $i\in[n]$ is included in $T$ since every vertex $y_i$ is influenced by at least one transmission of vertex $s$, and by \Cref{lem:wave} and \Cref{lem:setcover_correlation} this happens if and only if $i\in S_j$.
Therefore, $T$ is a set cover and thus a solution to the \textsc{SetCover} instance.
\end{proof}
\fi

\subsection{Approximation Algorithm}\label{subsec:ApproxAlgo}

In this subsection, we show that all of the problems we study admit a constant factor approximation. That is, while the problems are \NP-hard, we will show that for each of the problems there exists a polynomial time algorithm that either finds a transmission schedule $\tcal=(\tau_1, \tau_2,\dots, \tau_b)$ that achieves at least $ 1-{\frac {1}{e}}\approx 0.632$ fraction of the influence target $k$, or we correctly output that the influence target $k$ cannot be achieved. 
\iflong 
In order to show this, we prove that our studied problems are not only harder that \setcover problem  -- or its maximization version \textsc{MaximumCoverage} which is formally defined later -- as we showed in the previous subsection, but they are actually equivalent to this problem. 
\fi
More precisely, we give a reduction from each of our problems to \textsc{MaximumCoverage} \ifshort (the maximization version of \setcover) \fi that preserves both influence target $k$ and budget $b$; here $b$ will be the number of sets we wish to select and $k$ the number of elements of the universe we wish to cover. Moreover, we show that there is one-to-one correspondence between transmission schedules and the collections of selected sets.

The following lemma serves as the main tool in this section\iflong and consecutive results are mostly relatively straightforward applications of this result\fi. It basically states that if we have some transmission schedule $\tcal$, then the set of all vertices active at some time step $t$ is precisely the union over all $\tau\in \tcal$ of the vertices active at time step $t$ if we only transmit at time step $\tau$. 
\iflong 
The proof is relatively standard induction on $t$, however, we need to prove something slightly stronger, that is the counter of a vertex at time step $t$ when we follow the transmission schedule $\tcal$ is exactly the maximum counter value at that vertex if we transmit separately at each of possible time steps in $\tcal$.

\fi

\begin{lemma}\label{lem:ApproxMain}

Given a temporal graph $\gcal := \tuple{G, \ecal}$ with lifetime $t_{\max}$, a source vertex $s$, a time step $t\in [t_{\max}]$, integer $\delta > 0$, a transmission schedule $\tcal$, it holds that $(\delta, \tcal)$-$\activeVertex_t(\gcal,s) = \bigcup_{\tau\in \tcal} (\delta, \tau)$-$\activeVertex_t(\gcal,s)$. 
\end{lemma}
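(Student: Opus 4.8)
The plan is to prove a slightly stronger statement by induction on $t$: for every vertex $v$ and every time step $t$, the counter value of $v$ at time step $t$ under the schedule $\tcal$ equals $\max_{\tau\in\tcal} c^\tau_t(v)$, where $c^\tau_t(v)$ denotes the counter value of $v$ at time step $t$ when the source transmits only at time step $\tau$. Once this is established, the lemma follows immediately: $v$ is active at time step $t$ under $\tcal$ iff its counter is positive iff $\max_{\tau} c^\tau_t(v) > 0$ iff some $c^\tau_t(v) > 0$ iff $v$ is active at time step $t$ for some singleton schedule $\{\tau\}$. I would carry out the argument in roughly three steps.

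First I would set up notation: write $c_t(v)$ for the counter of $v$ at time step $t$ under $\tcal$, and $c^\tau_t(v)$ for the counter under the singleton schedule $(\tau)$. I would recall the update rule explicitly — the counter resets to $\delta$ at time $t+1$ whenever $v$ is adjacent at time $t$ to an active vertex (or $v=s$ and $s$ transmits at $t$), and otherwise decreases by one (floored at $0$). For the base case I would take $t=1$: every vertex has counter $0$ under every schedule except possibly $s$ itself if $1\in\tcal$, in which case $c_1(s)=\delta = \max_\tau c^\tau_1(s)$ as well; so the claim holds at $t=1$.

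For the inductive step, assume $c_t(v) = \max_{\tau\in\tcal} c^\tau_t(v)$ for all $v$. Fix a vertex $v$ and consider computing $c_{t+1}(v)$. I would split into cases. If $v=s$: if $t+1\in\tcal$ then $c_{t+1}(s)=\delta$ and also $c^{t+1}_{t+1}(s)=\delta$, so the max is $\delta$; otherwise the source's counter just decreases, and here I need to also check that the reset-by-neighbour rule doesn't apply to $s$ (per the model, only $v\neq s$ gets reset by neighbours), so $c_{t+1}(s)=\max(c_t(s)-1,0)$ and correspondingly on each singleton side, and the max commutes with the monotone operation $x\mapsto\max(x-1,0)$. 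If $v\neq s$: if $v$ has a neighbour $u$ with $uv\in E_t$ that is active under $\tcal$ at time $t$, i.e. $c_t(u)>0$, then by the induction hypothesis $c^\tau_t(u)>0$ for some $\tau\in\tcal$, so for that $\tau$ we get $c^\tau_{t+1}(v)=\delta$, hence $\max_\tau c^\tau_{t+1}(v)=\delta=c_{t+1}(v)$. If $v$ has no such active neighbour under $\tcal$, then by the induction hypothesis every neighbour $u$ with $uv\in E_t$ has $c^\tau_t(u)=0$ for all $\tau\in\tcal$, so no singleton schedule resets $v$ either, and on both sides the counter just performs the decrement $x\mapsto\max(x-1,0)$; again monotonicity of this operation lets the max pass through: $\max_\tau \max(c^\tau_t(v)-1,0) = \max(\max_\tau c^\tau_t(v)-1,0)=\max(c_t(v)-1,0)=c_{t+1}(v)$.

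The main obstacle, such as it is, is being careful that the ``reset'' operation genuinely commutes with taking the pointwise maximum over $\tau$ — the subtle point is that the trigger for the reset (having an active neighbour) is itself a max/disjunction over the singleton schedules, so one must argue that the set of time steps at which $v$ gets reset under $\tcal$ is exactly the union of the reset-times over the singleton schedules, and this is precisely what the strengthened induction hypothesis delivers. The decrement step is monotone and the reset sets a constant $\delta$, so both are compatible with $\max$; I would make sure to phrase the inductive hypothesis about counters (not just about the active/inactive status) so that this disjunction-of-triggers argument goes through cleanly. No new ideas beyond this are needed.
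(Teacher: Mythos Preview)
Your proposal is correct and follows essentially the same approach as the paper: both strengthen the statement to one about counter values, namely $c_t(v)=\max_{\tau\in\tcal}c^\tau_t(v)$, and prove it by induction on $t$ with the same case split (source versus non-source, reset versus decrement). Your write-up is in fact a bit more explicit than the paper's about why the decrement commutes with $\max$ and why the reset trigger is a disjunction over singleton schedules, which is exactly the right emphasis.
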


\iflong
\begin{proof}
Let us begin by defining some auxiliary sets.
Given a temporal graph $\gcal$ with lifetime $t_{\max}$, a source vertex $s$, a time step $\tau \in [t_{\max}]$, and integer $\delta > 0$, we define $(\delta, \tau)$-$\activeCounter_t(\gcal,s,v)$ to be the value of the counter of the vertex $v$ if $s$ transmits at time step $\tau$. Similarly, for a transmission schedule $\tcal$, we define $(\delta, \tcal)$-$\activeCounter_t(\gcal,s,v)$ to be the value of the counter of the vertex $v$ if $s$ transmits at each of the time steps in $\tcal$.

Now we are ready to prove the lemma. In fact, we will actually show something stronger: for every vertex $v\in V(G)$ it holds that $(\delta, \tcal)$-$\activeCounter_t(\gcal,s,v)= \max_{\tau\in \tcal}(\delta, \tau)$-$\activeCounter_t(\gcal,s,v)$. 
Because the vertex is active if and only if its counter is non-zero, the moreover part follows straightforwardly by applying the first part of the lemma on all vertices in $V(G)$. We prove that 
$(\delta, \tcal)$-$\activeCounter_t(\gcal,s,v)= \max_{\tau\in \tcal}(\delta, \tau)$-$\activeCounter_t(\gcal,s,v)$ for all $v\in V(G)$ (including $s$) by induction on $t$. 
The statement is clearly true at the time step $1$. Since $1$ is the first step we can transmit, $(\delta, \tcal)$-$\activeCounter_1(\gcal,s,v) = (\delta, \tau)$-$\activeCounter_1(\gcal,s,v)= 0$ for all $v\in V(G)\setminus \{s\}$ and $(\delta, \tcal)$-$\activeCounter_1(\gcal,s,s) = \delta$ if $1\in \tcal$ and $(\delta, \tcal)$-$\activeCounter_1(\gcal,s,s) = 0$ otherwise. Now let us assume that for all vertices $v\in V(G)$ and some time step $t$ it holds that $(\delta, \tcal)$-$\activeCounter_t(\gcal,s,v)= \max_{\tau\in \tcal}(\delta, \tau)$-$\activeCounter_t(\gcal,s,v)$. Let us consider a time step $t+1$. By the definition of the spreading process,  $(\delta, \tcal)$-$\activeCounter_{t+1}(\gcal,s,v) = \delta$ if and only if either $v=s$ and $t+1\in \tcal$ or $v\neq s$ and there exists a vertex $u\in V(G)$ such that $uv\in E_t$ and $(\delta, \tcal)$-$\activeCounter_{t}(\gcal,s,u) > 0$. However, by the inductive hypothesis $(\delta, \tcal)$-$\activeCounter_{t}(\gcal,s,u) = \max_{\tau\in \tcal}(\delta, \tau)$-$\activeCounter_{t}(\gcal,s,u)$ and there exists $\tau\in \tcal$ such that $(\delta, \tau)$-$\activeCounter_{t}(\gcal,s,u)>0$ and consecutively $(\delta, \tau)$-$\activeCounter_{t+1}(\gcal,s,v) = \delta$. Now, if $v=s$ and $t+1\notin \tcal$, then  $(\delta, \tcal)$-$\activeCounter_{t+1}(\gcal,s,s) = (\delta, \tcal)$-$\activeCounter_t(\gcal,s,s) - 1$, but the counter of $s$ also decreases in every individual transmission in $\tcal$, so $(\delta, \tcal)$-$\activeCounter_{t+1}(\gcal,s,s)= \max_{\tau\in \tcal}(\delta, \tau)$-$\activeCounter_{t+1}(\gcal,s,s)$. Finally, if for all $u\in V(G)$ such that $uv\in E_t$ we have that  $(\delta, \tcal)$-$\activeCounter_t(\gcal,s,u)=0$, i.e., $u$ is inactive, then the counter on $v$ decreases (or stays $0$), but by the induction hypothesis, $(\delta, \tcal)$-$\activeCounter_t(\gcal,s,u)=0\max_{\tau\in\tcal}(\delta, \tau)$-$\activeCounter_t(\gcal,s,u)=0$ and the counter on $v$ decreases by one (or stays $0$) also for each individual transmission separately as well. Therefore, the statement holds also for $t+1$, and by induction hypothesis it holds for all time steps in $[t_{\max}]$.
\end{proof}
\fi

Given the above lemma, it is rather straightforward to construct an instance of \textsc{SetCover} or \textsc{MaximumCoverage} from each of our problems. Before we do so, let us formally define the \textsc{MaximumCoverage} problem. An instance of \textsc{MaximumCoverage} consists of a collection $\mathcal{S}=\{S_1, S_2, \ldots, S_m\}$ of subsets over some universe $U$ and a positive integers $b$ and $k$, and we need to decide if there is a sub-collection $\tcal \subseteq \mathcal{S}$ of size $b$, such that $|\bigcup_{S\in \tcal} S| \ge k$. 
\iflong
To obtain our algorithms, we use the following well-known result that can be found for example in Chapter 3 of \cite{hochba1997approximation}.

\begin{theorem}\label{thm:MaxCoverageGreedy}
Given an instance $\tuple{\mathcal{S}, b, k}$ of \textsc{MaximumCoverage}, there is a polynomial time algorithm that either outputs a a collection $\tcal\subseteq \mathcal{S}$ such that $|\tcal| = b$ and $|\bigcup_{S\in \tcal}S|\ge (1-{\frac {1}{e}})k$ or correctly output that no collection of at most $b$ sets covers at least $k$ elmements. 
\end{theorem}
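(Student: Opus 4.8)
The plan is to analyze the standard greedy algorithm for \textsc{MaximumCoverage}. Starting from $\tcal=\emptyset$, I would repeat $b$ times the step of adding to $\tcal$ a set $S\in\mathcal{S}$ maximizing $|S\setminus\bigcup_{S'\in\tcal}S'|$, i.e.\ the still-unchosen set covering the largest number of currently uncovered elements; then, if the resulting $\tcal$ satisfies $|\bigcup_{S\in\tcal}S|\ge (1-1/e)k$, output $\tcal$, and otherwise report that no sub-collection of size at most $b$ covers $k$ elements. This runs in polynomial time, as it performs $b\le m$ rounds and each round scans all $m$ sets, computing for each its intersection with the current set of uncovered elements.

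For correctness I would go through the classical averaging argument. Let $\mathrm{OPT}$ be the maximum number of elements covered by any sub-collection of size $b$; this equals the maximum over sub-collections of size \emph{at most} $b$, since padding a collection with further sets never shrinks its union. Write $c_i$ for the number of elements covered after $i$ greedy rounds (so $c_0=0$) and $a_i=c_i-c_{i-1}$ for the gain in round $i$. The key inequality is $a_{i+1}\ge (\mathrm{OPT}-c_i)/b$: an optimal collection's $b$ sets cover at least $\mathrm{OPT}$ elements, hence at least $\mathrm{OPT}-c_i$ elements that are still uncovered after round $i$, so by pigeonhole one of those $b$ sets contains at least $(\mathrm{OPT}-c_i)/b$ uncovered elements, and greedy's choice is at least as good. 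Rearranging, $\mathrm{OPT}-c_{i+1}\le (1-1/b)(\mathrm{OPT}-c_i)$, and unrolling over $b$ rounds gives $\mathrm{OPT}-c_b\le (1-1/b)^b\,\mathrm{OPT}\le \mathrm{OPT}/e$, that is $c_b\ge (1-1/e)\mathrm{OPT}$.

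The two-sided guarantee then follows at once. If $\mathrm{OPT}\ge k$ then $c_b\ge (1-1/e)k$, so the greedy collection is a valid output. Contrapositively, if greedy covers fewer than $(1-1/e)k$ elements then $\mathrm{OPT}<k$, so indeed no collection of at most $b$ sets covers $k$ elements and the failure report is correct. If at some round every unchosen set would cover no new element, then the already-covered set contains every $S\in\mathcal{S}$ and hence covers at least $\mathrm{OPT}$ elements, and I would simply pad $\tcal$ with arbitrary unchosen sets until $|\tcal|=b$, leaving the union unchanged.

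I do not expect a genuine obstacle, since this is the textbook $(1-1/e)$-approximation for \textsc{MaximumCoverage}; the only point that needs care is the decision-versus-optimization framing. The algorithm never learns $\mathrm{OPT}$, so it commits to the greedy run and uses the proved ratio to decide between outputting $\tcal$ and declaring $k$ infeasible, and one must note that this ``infeasible'' verdict also excludes collections of size strictly less than $b$, by monotonicity of the coverage function.
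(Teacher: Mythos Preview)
Your argument is the standard greedy analysis for \textsc{MaximumCoverage} and is correct, including the handling of the decision framing and the padding case. The paper does not actually prove this theorem; it simply quotes it as a well-known result and cites Chapter~3 of \cite{hochba1997approximation}, which contains precisely the argument you sketch, so your proposal matches what the cited reference provides.
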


Given Lemma~\ref{lem:ApproxMain}, we can easily obtain the following two corollaries that establish the connection between our problems and \textsc{MaximumCoverage}.
\begin{corollary}\label{cor:MaxSpreadApproxRed}
Given a temporal graph $\gcal := \tuple{G, \ecal}$ with lifetime $t_{\max}$, a source vertex $s$, and integer $\delta > 0$, one can in polynomial time construct a collection $\mathcal{S} = \{S_1, S_2,\ldots, S_{t_{\max}}\}$ over $U=V(G)\setminus\{s\}$ such that for every $\tcal\subseteq [t_{\max}]$ it holds that $|\bigcup_{t\in[\tmax]}(\delta,\tcal)-\activeVertex_t(\gcal,s)| = |\bigcup_{i\in \tcal}S_i|$.
\end{corollary}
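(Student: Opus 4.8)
The plan is to use Lemma~\ref{lem:ApproxMain} as the bridge between the dynamic process and a static set system. First I would, for each time step $i\in[t_{\max}]$, define the set $S_i := (\delta, i)\text{-}\activeVertex_{\text{ever}}(\gcal,s) = \bigcup_{t\in[t_{\max}]} (\delta, \{i\})\text{-}\activeVertex_t(\gcal,s)$, i.e.\ the set of all vertices that are ever active if the source transmits only at time step $i$. Since $s$ itself is always active when it transmits, I would intersect each $S_i$ with $V(G)\setminus\{s\}$ so that the universe is exactly $U = V(G)\setminus\{s\}$ as claimed; this does not affect the counts below because $s$ contributes the same $+1$ to both sides whenever $\tcal\neq\emptyset$, or I can simply note the source is never a target vertex of interest and drop it. Each $S_i$ is computable in polynomial time by simulating the spreading process for the single transmission at time $i$ over the $t_{\max}$ time steps, and there are only $t_{\max}$ such sets, so the whole collection $\mathcal{S}=\{S_1,\dots,S_{t_{\max}}\}$ is constructed in polynomial time.

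The core of the argument is then a short computation. Fix any $\tcal\subseteq[t_{\max}]$. By Lemma~\ref{lem:ApproxMain}, for each fixed $t$ we have $(\delta,\tcal)\text{-}\activeVertex_t(\gcal,s) = \bigcup_{\tau\in\tcal}(\delta,\tau)\text{-}\activeVertex_t(\gcal,s)$. Taking the union over all $t\in[t_{\max}]$ and swapping the order of the two unions gives
\[
\bigcup_{t\in[t_{\max}]}(\delta,\tcal)\text{-}\activeVertex_t(\gcal,s)
= \bigcup_{t\in[t_{\max}]}\bigcup_{\tau\in\tcal}(\delta,\tau)\text{-}\activeVertex_t(\gcal,s)
= \bigcup_{\tau\in\tcal}\Bigl(\bigcup_{t\in[t_{\max}]}(\delta,\tau)\text{-}\activeVertex_t(\gcal,s)\Bigr)
= \bigcup_{i\in\tcal} S_i .
\]
Taking cardinalities of both sides yields exactly the claimed identity $|\bigcup_{t\in[\tmax]}(\delta,\tcal)\text{-}\activeVertex_t(\gcal,s)| = |\bigcup_{i\in\tcal}S_i|$. (If one keeps $s$ in the sets, one observes both sides gain the same $\{s\}$ when $\tcal\neq\emptyset$ and the empty schedule is trivial, so restricting the universe to $V(G)\setminus\{s\}$ is harmless.)

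I do not expect a serious obstacle here: the only real content is Lemma~\ref{lem:ApproxMain}, which has already been established, and the rest is the elementary associativity/commutativity of union together with the observation that the ``ever active'' set is itself a union over $t$. The one point requiring a word of care is bookkeeping around the source vertex $s$ — making sure the universe is $V(G)\setminus\{s\}$ and that dropping $s$ does not change any count — and the (trivial) edge case $\tcal=\emptyset$, where both sides are $0$. After this corollary, the approximation algorithm for \problemone follows immediately by composing the reduction with the greedy algorithm of Theorem~\ref{thm:MaxCoverageGreedy}, and analogous corollaries for \problemtwo, \problemthree (and \problemfour) are obtained the same way, using a single fixed time step $t^*$ instead of the union over $t$ for \problemthree, and a maximum over $t$ for \problemtwo.
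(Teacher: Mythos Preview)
Your proposal is correct and follows essentially the same approach as the paper: define $S_i$ as the set of vertices ever activated by the single transmission at time $i$, apply Lemma~\ref{lem:ApproxMain} at each fixed $t$, and swap the order of the two unions. The paper's proof is terser (it omits the discussion of $s$ and the polynomial-time simulation), but the argument is identical; one small caveat is that your parenthetical remark about \problemfour is not supported by the paper, which does not give an approximation for that objective.
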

\begin{proof}
We can simply let $S_i = \bigcup_{t\in[\tmax]}(\delta,i)-\activeVertex_t(\gcal,s)$. By Lemma~\ref{lem:ApproxMain}, we get that $\bigcup_{t\in[\tmax]}(\delta,\tcal)-\activeVertex_t(\gcal,s) = \bigcup_{t\in[\tmax]}\bigcup_{i\in\tcal}(\delta,i)-\activeVertex_t(\gcal,s)$, which is clearly the same as $\bigcup_{i\in\tcal}\bigcup_{t\in[\tmax]}(\delta,i)-\activeVertex_t(\gcal,s) = \bigcup_{i\in\tcal}S_i$.
\end{proof}

\begin{corollary}\label{cor:MaxSViralTimeStepApproxRed}
Given a temporal graph $\gcal := \tuple{G, \ecal}$ with lifetime $t_{\max}$, a source vertex $s$, a time step $t\in [t_{\max}]$, and integer $\delta > 0$, one can in polynomial time construct a collection $\mathcal{S} = \{S_1, S_2,\ldots, S_{t_{\max}}\}$ over $U=V(G)\setminus\{s\}$ such that for every $\tcal\subseteq [t_{\max}]$ it holds that $|(\delta,\tcal)-\activeVertex_t(\gcal,s)| = |\bigcup_{i\in \tcal}S_i|$.
\end{corollary}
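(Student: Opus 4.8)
The plan is to mimic the proof of Corollary~\ref{cor:MaxSpreadApproxRed} almost verbatim, replacing the ``ever active'' quantity by ``active at the fixed time step $t$''. Concretely, I would define the collection $\mathcal{S} = \{S_1,\ldots,S_{t_{\max}}\}$ over the universe $U = V(G)\setminus\{s\}$ by setting $S_i := (\delta,i)\text{-}\activeVertex_t(\gcal,s)$, i.e. $S_i$ is the set of vertices that are active at the specific time step $t$ when the source transmits only at time step $i$. This is clearly computable in polynomial time: for each $i\in[t_{\max}]$ we simulate the spreading process with the single-transmission schedule $\{i\}$ and record which vertices are active at time $t$.

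The correctness is then immediate from Lemma~\ref{lem:ApproxMain}. For any $\tcal\subseteq[t_{\max}]$, the lemma (applied at the time step $t$) gives $(\delta,\tcal)\text{-}\activeVertex_t(\gcal,s) = \bigcup_{i\in\tcal}(\delta,i)\text{-}\activeVertex_t(\gcal,s) = \bigcup_{i\in\tcal}S_i$, and taking cardinalities yields $|(\delta,\tcal)\text{-}\activeVertex_t(\gcal,s)| = |\bigcup_{i\in\tcal}S_i|$, which is exactly the claimed identity. Note that here, unlike in Corollary~\ref{cor:MaxSpreadApproxRed}, there is no outer union over all time steps, so the argument is even shorter — we do not need the trivial commuting-of-unions step, just one direct application of the lemma.

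There is essentially no obstacle: the entire content has been front-loaded into Lemma~\ref{lem:ApproxMain}, and this corollary is merely a specialization (fix one time step $t$ rather than unioning over all of them). The only thing to be a little careful about is that $\mathcal{S}$ is indexed by $[t_{\max}]$ and the correspondence $\tcal\subseteq[t_{\max}]$ is literally the identification of a transmission schedule with the sub-collection it selects; this is precisely the ``one-to-one correspondence between transmission schedules and collections of selected sets'' promised in the surrounding text, and it is what will later let us feed the instance $\tuple{\mathcal{S},b,k}$ (with $t=t^*$) into Theorem~\ref{thm:MaxCoverageGreedy} to obtain the $1-\tfrac1e$ approximation for \problemthree.

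\begin{proof}
We simply let $S_i = (\delta,i)\text{-}\activeVertex_t(\gcal,s)$ for every $i\in [t_{\max}]$; each $S_i$ is computable in polynomial time by simulating the spreading process under the single-transmission schedule $\{i\}$ and reading off the active vertices at time step $t$. By Lemma~\ref{lem:ApproxMain} applied at time step $t$, for every $\tcal\subseteq [t_{\max}]$ we have $(\delta,\tcal)\text{-}\activeVertex_t(\gcal,s) = \bigcup_{i\in \tcal}(\delta,i)\text{-}\activeVertex_t(\gcal,s) = \bigcup_{i\in \tcal}S_i$. Taking cardinalities gives $|(\delta,\tcal)\text{-}\activeVertex_t(\gcal,s)| = |\bigcup_{i\in \tcal}S_i|$, as claimed.
\end{proof}
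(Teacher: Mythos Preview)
Your proposal is correct and takes essentially the same approach as the paper: define $S_i = (\delta,i)\text{-}\activeVertex_t(\gcal,s)$ and invoke Lemma~\ref{lem:ApproxMain} directly. The paper's proof is a one-liner stating exactly this, and your version simply spells out the polynomial-time simulation and the passage to cardinalities.
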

\begin{proof}
We can simply let $S_i = (\delta,i)-\activeVertex_t(\gcal,s)$ and the statement follows directly by Lemma~\ref{lem:ApproxMain}.
\end{proof}

Corollary~\ref{cor:MaxSpreadApproxRed} together with Theorem~\ref{thm:MaxCoverageGreedy} immediately implies the following result.
\begin{theorem}
Given an instance $\tuple{\gcal,s,\delta,b,k}$ of \problemone, there is a polynomial time algorithm that either outputs a transmission schedule $\tcal$, such that $|\tcal|=b$ and $|\bigcup_{t\in[\tmax]}(\delta,\tcal)-\activeVertex_t(\gcal,s)|\ge (1-{\frac {1}{e}})k$ or correctly output that no transmission schedule of size at most $b$ leads to at least $k$ active vertices overall.
\end{theorem}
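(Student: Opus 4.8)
The plan is to compose the reduction of Corollary~\ref{cor:MaxSpreadApproxRed} with the greedy approximation algorithm for \textsc{MaximumCoverage} from Theorem~\ref{thm:MaxCoverageGreedy}. First I would invoke Corollary~\ref{cor:MaxSpreadApproxRed} on the given instance $\tuple{\gcal,s,\delta,b,k}$ to obtain, in polynomial time, a collection $\mathcal{S}=\{S_1,\dots,S_{t_{\max}}\}$ over the universe $U=V(G)\setminus\{s\}$ with the property that $|\bigcup_{t\in[\tmax]}(\delta,\tcal)-\activeVertex_t(\gcal,s)|=|\bigcup_{i\in\tcal}S_i|$ for every $\tcal\subseteq[t_{\max}]$; concretely, $S_i$ is the set of vertices that are active at some time step when $s$ transmits only at time step $i$, which can be computed by simulating the spreading process in polynomial time per index and hence in polynomial time overall.

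Next I would record a monotonicity observation that follows directly from Lemma~\ref{lem:ApproxMain}: enlarging $\tcal$ never removes an already-active vertex, so $\bigcup_{t\in[\tmax]}(\delta,\tcal)-\activeVertex_t(\gcal,s)$ only grows with $\tcal$. Consequently, if some schedule of size at most $b$ activates $k$ vertices overall, then — after the harmless normalization of replacing $b$ by $\min(b,t_{\max})$ so that a collection of $b$ distinct indices exists — there is also a schedule of size exactly $b$ that does so. Thus the decision ``does some transmission schedule of size $\le b$ reach $k$ active vertices'' is equivalent, via the size- and target-preserving correspondence above, to the \textsc{MaximumCoverage} instance $\tuple{\mathcal{S},b,k}$.

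Then I would run the algorithm of Theorem~\ref{thm:MaxCoverageGreedy} on $\tuple{\mathcal{S},b,k}$. If it outputs a collection $\tcal\subseteq\mathcal{S}$ with $|\tcal|=b$ and $|\bigcup_{S\in\tcal}S|\ge(1-\tfrac{1}{e})k$, I would interpret the chosen indices as transmission time steps and return this $\tcal$ as the schedule; by Corollary~\ref{cor:MaxSpreadApproxRed} it activates $|\bigcup_{i\in\tcal}S_i|\ge(1-\tfrac{1}{e})k$ vertices overall, as required, and $|\tcal|=b$. If instead the algorithm reports that no $b$ sets cover $k$ elements, then by the monotonicity argument together with the correspondence, no transmission schedule of size at most $b$ can reach $k$ active vertices, so I would output this negative answer, which is then correct. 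Every step runs in polynomial time, so the overall procedure does too.

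There is no real obstacle here, since the statement is essentially a packaging of the two preceding results; the only points that need a little care are (i) that the reduction of Corollary~\ref{cor:MaxSpreadApproxRed} preserves the objective \emph{exactly} (not merely approximately), which is precisely what that corollary provides, and (ii) bridging the ``$\le b$'' in \problemone with the ``$=b$'' in \textsc{MaximumCoverage}, which is handled by the monotonicity of the spread objective plus the trivial replacement $b\leftarrow\min(b,t_{\max})$.
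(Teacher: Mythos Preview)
Your proposal is correct and follows exactly the same approach as the paper, which simply states that the result is immediate from Corollary~\ref{cor:MaxSpreadApproxRed} together with Theorem~\ref{thm:MaxCoverageGreedy}. If anything you are more careful than the paper, since you explicitly address the monotonicity needed to bridge ``$\le b$'' versus ``$=b$'' and the harmless replacement $b\leftarrow\min(b,t_{\max})$, points the paper leaves implicit.
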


Combining Corollary~\ref{cor:MaxSViralTimeStepApproxRed} together with Theorem~\ref{thm:MaxCoverageGreedy} we get the following.
\begin{theorem}
Given an instance $\tuple{\gcal,s,\delta,b,k,t}$ of \problemthree, there is a polynomial time algorithm that either outputs a transmission schedule $\tcal$, such that $|\tcal|=b$ and $|(\delta,\tcal)-\activeVertex_t(\gcal,s)|\ge (1-{\frac {1}{e}})k$ or correctly output that no transmission schedule of size at most $b$ leads to at least $k$ active vertices at time step $t$.
\end{theorem}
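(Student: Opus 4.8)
The plan is to reduce \problemthree to \textsc{MaximumCoverage} in a way that preserves both the budget $b$ and the influence target $k$, and then to invoke the greedy approximation of Theorem~\ref{thm:MaxCoverageGreedy}. First I would apply Corollary~\ref{cor:MaxSViralTimeStepApproxRed} to the given instance $\tuple{\gcal,s,\delta,b,k,t}$: in polynomial time this produces a collection $\mathcal{S}=\{S_1,\dots,S_{t_{\max}}\}$ over the universe $U=V(G)\setminus\{s\}$ such that for every set of time steps $\tcal\subseteq[t_{\max}]$ we have $|(\delta,\tcal)-\activeVertex_t(\gcal,s)| = |\bigcup_{i\in\tcal}S_i|$. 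I would then pass the instance $\tuple{\mathcal{S},b,k}$ of \textsc{MaximumCoverage} to the algorithm of Theorem~\ref{thm:MaxCoverageGreedy}.

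The one extra ingredient is the correspondence between transmission schedules and selections of sets from $\mathcal{S}$. A transmission schedule is just a subset $\tcal\subseteq[t_{\max}]$ (transmitting after the lifetime $t_{\max}$ influences no vertex, so this is without loss of generality), and it corresponds to picking the sets indexed by $\tcal$; conversely every selection of sets of $\mathcal{S}$ is given by its index set $\tcal\subseteq[t_{\max}]$. Under this correspondence the objective of \problemthree equals the coverage, so by the identity above there is a schedule of size $b$ activating at least $k$ vertices at time $t$ if and only if some $b$ sets of $\mathcal{S}$ cover at least $k$ elements of $U$. Moreover, by Lemma~\ref{lem:ApproxMain} the active set at time $t$ is monotone under adding transmissions, so a schedule of size strictly less than $b$ can be padded with arbitrary additional time steps without decreasing the objective; hence ``at most $b$'' and ``exactly $b$'' coincide for feasibility.

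It then remains to read off the two cases. If Theorem~\ref{thm:MaxCoverageGreedy} returns a selection of exactly $b$ sets covering at least $(1-\frac{1}{e})k$ elements, I translate its index set back to a transmission schedule $\tcal$ with $|\tcal|=b$ and $|(\delta,\tcal)-\activeVertex_t(\gcal,s)|\ge(1-\frac{1}{e})k$, and output it. If instead it certifies that no collection of at most $b$ sets covers at least $k$ elements, then by the correspondence no transmission schedule of size at most $b$ yields at least $k$ active vertices at time step $t$, and we report that. Since both Corollary~\ref{cor:MaxSViralTimeStepApproxRed} and the greedy algorithm run in polynomial time, so does the whole procedure. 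I do not anticipate a genuine difficulty here --- this is essentially the \problemthree analogue of the \problemone theorem proved just above --- and the only point needing care is the size and indexing bookkeeping made explicit in the second paragraph.
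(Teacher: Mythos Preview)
Your proposal is correct and follows exactly the approach of the paper, which simply states that the theorem is obtained by combining Corollary~\ref{cor:MaxSViralTimeStepApproxRed} with Theorem~\ref{thm:MaxCoverageGreedy}. Your extra bookkeeping about the schedule--selection correspondence and the monotonicity padding via Lemma~\ref{lem:ApproxMain} is more explicit than the paper's one-line justification, but it is the same argument.
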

Finally, repeating the algorithm of the above theorem $t_{\max}$ times we get.
\begin{theorem}
Given an instance $\tuple{\gcal,s,\delta,b,k}$ of \problemtwo, there is a polynomial time algorithm that either outputs a transmission schedule $\tcal$, such that $|\tcal|=b$ and $|\max_{t\in[\tmax]}(\delta,\tcal)-\activeVertex_t(\gcal,s)|\ge (1-{\frac {1}{e}})k$ or correctly output that no transmission schedule of size at most $b$ leads to at least $k$ active vertices at any time step.
\end{theorem}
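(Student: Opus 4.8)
The plan is to reduce to the \problemthree algorithm obtained just above, applied once for each candidate time step $t^*$, and then take the best outcome. The key observation is that the \problemtwo objective is exactly the maximum over $t\in[\tmax]$ of the \problemthree objective at time step $t$; formally, for any transmission schedule $\tcal$ we have $\max_{t\in[\tmax]}|(\delta,\tcal)-\activeVertex_t(\gcal,s)| = \max_{t\in[\tmax]}\bigl(|(\delta,\tcal)-\activeVertex_t(\gcal,s)|\bigr)$, where the inner expression on the right is precisely the \problemthree value at $t$. So the viral target $k$ is achievable if and only if there is some $t\in[\tmax]$ such that $k$ active vertices are achievable at time step $t$.

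First I would, for each $t\in[\tmax]$, invoke the algorithm of the preceding theorem on the instance $\tuple{\gcal,s,\delta,b,k,t}$ of \problemthree. Each such call runs in polynomial time and either returns a schedule $\tcal_t$ with $|\tcal_t|=b$ and $|(\delta,\tcal_t)-\activeVertex_t(\gcal,s)|\ge(1-\tfrac1e)k$, or certifies that no schedule of size at most $b$ yields $k$ active vertices at time step $t$. If every call reports the negative answer, then by the equivalence above no schedule of size at most $b$ can make $k$ vertices simultaneously active at any time step, so we correctly output ``no''. Otherwise, at least one call, say for time step $t'$, returns a schedule $\tcal_{t'}$; we output $\tcal_{t'}$, which satisfies $|\tcal_{t'}|=b$ and $\max_{t\in[\tmax]}|(\delta,\tcal_{t'})-\activeVertex_t(\gcal,s)|\ge|(\delta,\tcal_{t'})-\activeVertex_{t'}(\gcal,s)|\ge(1-\tfrac1e)k$, as required.

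The running time is $\tmax$ times the running time of a single \problemthree call, hence still polynomial in the input size. There is essentially no hard part here: the only thing to be careful about is the logical direction of the ``correctly output no'' guarantee — one must argue that if \emph{all} $\tmax$ sub-calls fail then the global instance is genuinely infeasible, which is immediate from the fact that a feasible \problemtwo schedule witnesses feasibility of \problemthree at whichever time step realizes its maximum. Everything else is a routine wrapper around Corollary~\ref{cor:MaxSViralTimeStepApproxRed} and Theorem~\ref{thm:MaxCoverageGreedy}.
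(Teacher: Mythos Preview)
Your proposal is correct and matches the paper's approach exactly: the paper's entire proof is the single remark ``repeating the algorithm of the above theorem $t_{\max}$ times we get'' the result, which is precisely your loop over all candidate time steps $t\in[\tmax]$ invoking the \problemthree approximation. Your added justification for the ``no'' case and the running-time bound is more detailed than what the paper writes but entirely in line with it.
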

\fi
\ifshort
To obtain our results we use the well-known result that for \textsc{MaximumCoverage}, there is a polynomial-time algorithm that either outputs a collection $\tcal\subseteq \mathcal{S}$ such that $|\tcal| = b$ and $|\bigcup_{S\in \tcal}S|\ge (1-{\frac {1}{e}})\cdot k$ or correctly outputs that no collection of at most $b$ sets covers at least $k$ elements. 

\begin{theorem}
There is a polynomial-time algorithm that finds a $(1-{\frac {1}{e}})$-approximate solution for \problemone, \problemtwo, and \problemthree.
\end{theorem}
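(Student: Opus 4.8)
The plan is to reduce each of the three problems to \textsc{MaximumCoverage} in a way that preserves both the budget $b$ and the target $k$, and then invoke the standard greedy $(1-\tfrac{1}{e})$-approximation for \textsc{MaximumCoverage} (Chapter~3 of \cite{hochba1997approximation}). The crucial structural fact enabling this is Lemma~\ref{lem:ApproxMain}: since the set of vertices active at any time step $t$ under a schedule $\tcal$ is exactly $\bigcup_{\tau\in\tcal}(\delta,\tau)$-$\activeVertex_t(\gcal,s)$, transmitting is ``monotone'' in the sense that adding more transmission times only enlarges (never shrinks) the active set, and the joint effect of a schedule decomposes as a union of single-transmission effects.

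Concretely, for \problemone I would, given an instance $\tuple{\gcal,s,\delta,b,k}$, build a universe $U=V(G)\setminus\{s\}$ and, for each candidate transmission time $i\in[\tmax]$, a set $S_i=\bigcup_{t\in[\tmax]}(\delta,i)$-$\activeVertex_t(\gcal,s)$; these are computable in polynomial time by simulating the process. By Lemma~\ref{lem:ApproxMain} (and swapping the order of the two unions), for any $\tcal\subseteq[\tmax]$ we get $|\bigcup_{t\in[\tmax]}(\delta,\tcal)$-$\activeVertex_t(\gcal,s)| = |\bigcup_{i\in\tcal}S_i|$, so a size-$b$ schedule influencing $\ge k$ vertices corresponds exactly to $b$ sets covering $\ge k$ elements, and conversely. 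Running greedy on $\tuple{\mathcal{S},b,k}$ then yields $b$ sets covering $\ge(1-\tfrac1e)k$ elements, whose indices form the desired schedule, or a correct rejection. For \problemthree, the same construction works with $S_i=(\delta,i)$-$\activeVertex_{t^*}(\gcal,s)$ for the fixed target time $t^*$, again directly by Lemma~\ref{lem:ApproxMain}. For \problemtwo, I would simply run the \problemthree algorithm once for each candidate target time $t\in[\tmax]$ and return the best schedule found: if some schedule of size $b$ achieves $k$ simultaneously active vertices at some time step, then it does so at some particular $t$, and the corresponding run will either certify $\ge(1-\tfrac1e)k$ or (if no run certifies) we may safely reject; this costs only a factor $\tmax$ in running time.

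I expect the only genuinely delicate point to be the correctness of the reduction direction ``schedule $\Rightarrow$ cover'' and its converse, i.e.\ verifying that the one-to-one correspondence between transmission schedules and selected sub-collections is exact — but this is handed to us essentially for free by Lemma~\ref{lem:ApproxMain}, so there is no real obstacle here. One minor wrinkle worth a sentence: greedy for \textsc{MaximumCoverage} returns a collection of size \emph{exactly} $b$, which we need to translate back to a schedule of size \emph{at most} $b$; this is immediate since a size-$b$ collection of set-indices is itself a valid schedule of size $b\le b$, and in the rejection case we certify that no $b$ sets cover $k$ elements, hence no schedule of size $\le b$ reaches $k$ by monotonicity of the union. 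Thus all three approximation guarantees follow, and the theorem is proved.
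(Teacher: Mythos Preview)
Your proposal is correct and follows essentially the same route as the paper: reduce each problem to \textsc{MaximumCoverage} via Lemma~\ref{lem:ApproxMain}, with $S_i=\bigcup_{t}(\delta,i)\text{-}\activeVertex_t(\gcal,s)$ for \problemone and $S_i=(\delta,i)\text{-}\activeVertex_{t^*}(\gcal,s)$ for \problemthree, then invoke the greedy $(1-\tfrac{1}{e})$-approximation and iterate over all $t\in[\tmax]$ for \problemtwo. The paper packages the two reductions as separate corollaries but the argument is identical; your extra remark about ``size exactly $b$'' versus ``at most $b$'' is a harmless technicality that the paper leaves implicit.
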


\fi 

\section{Window Constrained Schedules}
\label{sec:window}
In this section we consider two types of so called ``window constraint'' schedules, where we are only interested in transmission schedules satisfying some additional constraints. 
First we study fixed-window schedules. There the lifetime of the temporal graph is split into a number of disjoint time intervals and the transmission schedule needs to have exactly one transmission in each of the intervals. 
Then, we shift our attention to $(x,y)$-shifting window schedules, where the difference between two consecutive transitions should be between $x$ and $y$. 
Both scenarios are associated with a new natural parameter: the {\em size}, $w$,  of the window. 

Note, if we do not restrict size of the window, then the results from the previous section extend rather straightforwardly. To see this, consider the fixed window case. Assume that we have an instance of \problemone for the unconstrained setting, on the temporal graph $\gcal = \tuple{G,\ecal}$.
Then, we could just create a new temporal graph $\gcal' = \tuple{G',\ecal'}$ such that $G'=G$ and $\ecal'$ is a concatenation of $b$ copies of the sequence $\ecal$.
\iflong In other words, $\gcal'$ is the temporal graph, where we repeat $\gcal$ $b$-many times. \fi
We then set the windows to have size $t_{\max}$ each, i.e., first window contains time steps $1$ to $\tmax$, second window from $\tmax$ to $2\tmax$ and so on. It is rather straightforward to see that this reduction immediately gives hardness for \problemone. 
It is also not too hard to verify that using similar reductions as in previous section would give hardness for \problemtwo and \problemthree, when the size of the window is not bounded by a constant. We can easily get hardness for shifting window schedules, using an argument that is nearly the same. This time, between the two consecutive copies of $\ecal$, we introduce $\tmax$ many time steps without any edge, and we let $x=\tmax$ and $y=2\tmax$. 

On the other hand, if both the budget, $b$, and window size, $w$, were parameters, then we would immediately get that the lifetime is bounded by $b$ times the (max) window size (or $y+1$ in the case of $(x,y)$-shifting windows), so an exhaustive search already gives an algorithm running in time $\binom{b\cdot w}{b}\cdot\poly(|V(G)|\cdot \tmax)$, where $w$ is the size of the window. 
For this reason, in the rest of the section we will consider the cases, where the budget is large (or unrestricted) and the size of the window is small. 
We will show that, unfortunately, the problem remains \NP-hard even for constant size windows. 
\subsection{Fixed Window Schedules} 
In this section we prove that all three problems are \NP-hard even for very restricted settings.

\begin{theorem}\label{thm:fixedWindowProblemOne}

    \problemone with fixed window constraints is \NP-hard for every $\delta \ge 1$ even when window size is $2\delta$, in every time step there are at most $3$ active edges, every edge is active at most twice, and the underlying graph is a star with center the source $s$.     
\end{theorem}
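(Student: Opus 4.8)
## Proof Plan

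The plan is to reduce from \setcover, using a construction where the underlying graph is a star centered at the source $s$, with one leaf per element of the universe $N=\{u_1,\dots,u_n\}$. The fixed windows will have size $2\delta$, and the key idea is to set up the time steps so that within each window there is exactly one ``good'' time step (corresponding to choosing a particular set $S_j$) whose transmission activates precisely the leaves $u_i$ with $i\in S_j$, while all other time steps in that window are ``wasted'' and activate nothing new. Concretely, for the $j$-th window I would place, at the ``good'' position, the edges $su_i$ for all $i\in S_j$ simultaneously; since these are the only edges present at that time step, at most $3$ edges appear (assuming the \setcover instance has sets of size at most $3$, which is w.l.o.g. by standard restrictions, or by subdividing into gadgets), and each leaf-edge $su_i$ appears once per set containing $i$, so we must be careful to bound the number of labels — this is where I would either restrict to instances where each element lies in at most two sets (\setcover remains hard), giving ``every edge active at most twice'', or duplicate leaves. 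The budget is irrelevant here since the fixed-window constraint forces exactly $m$ transmissions (one per window); instead the \setcover bound $b$ is encoded by asking for the influence target $k = |N| $ together with a separate penalty mechanism — more precisely, I would add $m-b$ ``dummy'' windows each of which can only be covered by a useless transmission, or alternatively make covering a non-chosen set cost an uncovered dummy vertex.

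The cleaner route, which I would actually pursue: have $m$ windows, window $j$ with a ``good'' time step activating $\{u_i : i\in S_j\}$ and a ``neutral'' time step activating nothing. Then add $b$ extra ``free'' windows at the end where every time step activates nothing. We then argue: a transmission schedule activating all $n$ element-leaves must, in at least ... no — this still does not bound the number of sets used to $b$. So instead I would make the element leaves require reinforcement: give each leaf-edge $su_i$ a \emph{second} label, positioned so that a leaf $u_i$ only \emph{stays} counted toward the objective if it is hit in two different windows — but \problemone only asks for activation at least once, so that fails too. The correct fix is to bound set-usage via the window structure itself: make only $b$ of the windows ``active'' (capable of any useful transmission) and the remaining $m-b$ windows empty; but then the adversary picks which $b$ sets, which is exactly what we want. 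Wait — that does not work either because we cannot let the solver reorder sets into windows.

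Hence the actual construction I would commit to: $b$ windows, and in \emph{each} window, for \emph{every} set $S_j$, a distinct time step whose transmission activates $\{u_i : i\in S_j\}$. Thus in window number $r\in[b]$ the solver picks one set $S_{j_r}$ to ``fire''; all $n$ element-leaves are activated iff $\{S_{j_1},\dots,S_{j_b}\}$ is a set cover. The window size is then $m\cdot(\text{something})$, not $2\delta$ — so to get window size $2\delta$ I must encode the choice of set differently: not by position within one window, but by a \emph{sequence} of binary choices across $\log m$ consecutive windows, each window offering a bit via one of two time steps, with the element-edges only appearing after the full binary address has been ``spelled out'' correctly by a chain of intermediate gadget vertices. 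But the graph must stay a star, so intermediate chain vertices are disallowed. This tension — star topology versus needing to route/decode choices — is the main obstacle, and I expect the resolution is that with $\delta$ allowing a counter to persist $\delta$ steps, a single leaf can be ``kept alive'' across a window of size $2\delta$ via a second edge-label $\delta$ steps later, and the \emph{element} leaves are activated by a wave only when a specific interleaving pattern of firings occurs; the bookkeeping of exactly which pairs of time steps are ``good'' within and across windows, so that (i) each window forces one transmission, (ii) the $2\delta$ window size is tight, (iii) at most $3$ edges per step and $\le 2$ labels per edge hold, and (iv) the reachable set equals a union of $S_j$'s, is the crux and the part I would spend the most care on, matching the parameters quoted in the statement.
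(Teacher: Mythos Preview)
Your proposal does not reach a working construction; you correctly identify the central tension but do not resolve it. The missing idea is twofold.

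First, the paper reduces from \textsc{VertexCover} on graphs of maximum degree three, not from \setcover. This is precisely the restriction you briefly mention and then drop (``each element lies in at most two sets''), and it is what makes all the tight parameters fall out simultaneously: one window per vertex $h_i$ of the input graph $H$; at the first time step of that window the edges $sv_j$ for the (at most three) edges $e_j$ incident to $h_i$ are present; at the $(\delta+1)$-st time step of the window the single edge $su_i$ is present. Since each edge $e_j$ of $H$ has two endpoints, the star edge $sv_j$ receives exactly two labels, and maximum degree three gives at most three active edges per time step.

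Second, and this is the step you were circling without finding, the budget is \emph{not} encoded by the number of windows or by making only $b$ windows useful. It is encoded in the influence target $k$. In the paper's construction the ``neutral'' option in each window is not neutral: transmitting at step $\delta+1$ activates the dedicated leaf $u_i$. So in window $i$ you face a genuine binary trade-off: transmit at step $1$ and activate the edge-leaves incident to $h_i$ (putting $h_i$ in the cover), or transmit at step $\delta+1$ and activate $u_i$ (leaving $h_i$ out). The two useful time steps are $\delta$ apart, so the source cannot be active at both, and the window size $2\delta$ is exactly enough to contain them. Setting $k = |E(H)| + |V(H)| - \ell$ then forces any successful schedule to activate all $|E(H)|$ edge-leaves while sacrificing at most $\ell$ of the vertex-leaves $u_i$; the sacrificed indices form a vertex cover of size at most $\ell$.

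All of your attempted fixes (dummy windows, $b$ windows each offering all $m$ sets, binary addressing across $\log m$ windows) try to limit how many ``cover'' choices the schedule makes by restricting the schedule space. The paper instead lets the schedule make a cover/non-cover choice in \emph{every} window and penalises each cover choice by one lost leaf in the objective. Once you see that, the construction is short and the star topology is no obstacle.
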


\begin{proof}
    We show this by reduction from \textsc{VertexCover} on graphs with maximum degree three\iflong which is well known to be \NP-complete\fi~\cite{GareyJ79}. In the \textsc{VertexCover} problem, we are given a graph $H = \tuple{V,E}$ and an integer $\ell$, and the question is whether there exists a set of vertices $S$ such that $|S|\le \ell$ and for all $e\in E$ we have $|e\cap S|\ge 1$. 

    Now, let $\tuple{H,\ell}$ be an instance of \textsc{VertexCover} such that the degree of every vertex $h\in V(H)$ is at most three. We construct a temporal graph $\gcal =\tuple{G, \ecal}$ as follows. First, for the sake of presentation of the proof, let $|V(H)|=n$, $|E(H)|=m$, and let us order the vertices and edges of $H$ in an arbitrary but fixed order. That is let $V(H)= \{h_1, h_2,\ldots, h_n\}$ and $E(H)= \{e_1, e_2, \ldots, e_m\}$.

    The vertices of $\gcal$ are as follows: 
    \begin{itemize}
        \item the source vertex $s$;
        \item the set $V_E$ containing $m$ vertices, such that for every $e_j\in E(H)$ there is a vertex $v_j\in V_{E}$;
        \item the set $U$ containing $n$ vertices, such that for every $h_i\in V(H)$, there is a vertex $u_i\in U$.
    \end{itemize}
    The underlying graph $G$ is then a star with center the vertex $s$ and $\tmax = 2\delta\cdot(|V(H)| + 1)$. Every window will consist of $2\delta$ time steps and will be associated with a single vertex of $V(H)$. 
    \ifshort
    So, for every $i \in [n]$ the window from time step $2\delta\cdot(i-1)+1$ to time step $2\delta\cdot i$ is associated with vertex $h_i$.
    \fi
    \iflong
    So, window $1$-$2\delta$ is associated with $h_1$, window $2\delta+ 1$-$4\delta$ is associated with $h_2$,$\ldots$, window $2\delta(n-1)+ 1$-$2\delta\cdot n$ is associated with $h_n$. 
    \fi
    The last window is a ``dummy'' window that is only necessary in the case of $\delta=1$.
    In this case, we might want to transmit at the time step $2\delta\cdot|V(H)|$ to activate some vertices, which we can only do if there is one more time step to activate the vertices adjacent to $s$ at time step $2\delta|V(H)|$. Consider now the window associated with $h_i$. At the first step inside this window, i.e.  at $2\delta(i-1)+1$, there will be edges between $s$ and every $v_j\in V_E$ such that $h_i$ is incident with the edge $e_j$. Note that since the degree of $h_i$ is at most $3$, at most $3$ edges have label $2\delta(i-1)+1$. Moreover, in $(\delta+1)$-th time step inside the window ($2\delta(i-1)+\delta + 1$) there is an edge between $s$ and $u_i$. That is exactly one edge -- the edge $su_i$ -- has the label $2\delta(i-1)+\delta + 1$. 

Given the above construction of the temporal graph $\gcal$, we let $k = n+m-\ell$ and we claim that there exists a window constraint transmission schedule $\tcal$ such that $\left|\bigcup_{t\in[\tmax]}(\delta,\tcal)-\activeVertex_t(\gcal,s)\right| \geq k$ if and only if $H$ admits a vertex cover with at most $\ell$ vertices. To see this, we only need to show that it only makes sense to transmit in $1$st or $\delta+1$st step in each window. Given this, we then observe that transmitting in the $1$st time steps in the windows associated with the vertices of a vertex cover $S$ and in $\delta+1$st windows in the remaining time steps activates all but $|S|$ many vertices in $U$ (precisely the vertices associated with the vertices in $S$).
\iflong

    First let us assume that $S$ is a vertex cover of size at most $\ell$. Then we construct $\tcal$ as follows. 
    \begin{itemize}
        \item For every $h_i\in S$, we let $2\delta(i-1)+1\in \tcal$;
        \item For every $h_i\in V(H)\setminus S$, we let $2\delta(i-1)+\delta+1\in \tcal$;
        \item We let $2\delta\cdot|V(H)| + 1\in \tcal$.
    \end{itemize}
    Note that since $S$ is a vertex cover, then for $e_j\in E(H)$, there exists $h_i\in V(H)$ such that $h_i\in e_j$ (i.e., $h_i$ is incident with $e_j$). Hence, $v_j\in V_E$ gets influenced at the time step  $2\delta(i-1)+2\in \tcal$, because $2\delta(i-1)+1\in \tcal$ and the edge $sv_j$ has label $2\delta(i-1)+1$. It follows that all vertices in $V_E$ get activated at some time step. 
    On the other hand, for every $h_i\in V(H)\setminus S$ we transmit in the time step $2\delta(i-1)+\delta+1\in \tcal$ and the edge $su_i$ has the label $2\delta(i-1)+\delta+1$ so $u_i$ gets influenced in the consecutive step. Therefore, all $u_i\in U$ with $h_i\in V(H)\setminus S$ are activated. It follows that overall $|V_E| + |V(H)\setminus S| = m + n - |S|\ge m + n - \ell=k$ vertices get activated in total. 

    On the other hand, let $\tcal$ be a window constraint transmission schedule such that $\Bigl|\bigcup_{t\in[\tmax]}(\delta,\tcal)-\activeVertex_t(\gcal,s)\Bigr|\geq k$. First let us make few observations about $\tcal$ and $\gcal$. The difference between any two distinct time steps $\tau_1$, $\tau_2$ such that both $E_{\tau_1}$ and $E_{\tau_2}$ are nonempty is at least $\delta$ and hence one transmission can only affect one time step. We would like to assume that the transmission that affects time step in the window associated with $h_i$ also happened in that window. That is not always possible. If we transfer in time step $2\delta(i-1)+\delta+2$ (or any time later than that before $2\delta i$), the transmission is still active at time step  $2\delta i+1$. However, note that in this case we could not transmit between $2\delta(i-1)+2$ and $2\delta(i-1)+\delta+1$, as we already transmitted within that window. So $u_i$ never gets influenced. On the other hand, if we change our transmission schedule such that we transmit at $2\delta(i-1)+\delta+1$ and at $2\delta i+1$, then $u_i$ gets influenced, all edges that are influenced by transmission $2\delta(i-1)+\delta+2$ in the time step $2\delta i+1$ still get influenced, we only also removed original transmission of the window $2\delta i+1$-$2\delta (i+1)$. However, if we apply this argument always on the maximum $i$ such that we transmit in a time step after $2\delta(i-1)+\delta+2$ within the $i$-th window, then only possibly the vertex $u_{i+1}$ does not get influenced, but this is compensated by the fact that we now influence $u_i$ that we did not originally. It follows that this does not decrease the number of vertices we influence and we can assume that in every window we transmit between $1$-st and $\delta+1$-st step inside the window. Furthermore, notice that if for some $e_j = h_{i_1}h_{i_2}$ the vertex $v_j\in V_E$ is not influenced at all, then we can replace transmission $\tau_{i_1} = 2\delta(i_1-1)+r$ for some $r$ between $2$ and $\delta+1$ by  $\tau_{i_1}'= 2\delta(i_1-1)+1$ that remove $u_{i_1}$ from the influenced vertices, but adds $e_j$, so it does not decrease the number of influenced vertices by $\tcal$ overall. By repeating this argument, it follows that we can always assume that all vertices in $V_E$ are influenced at some point. Since, $\left|\bigcup_{t\in[\tmax]}(\delta,\tcal)-\activeVertex_t(\gcal,s)\right| \geq k = n+m-\ell$, it follows that at least $n-\ell$ vertices of $U$ are influence. Therefore there is a set $\scal\subseteq \tcal$ of are at most $\ell$ transmissions $\tau_i$ such that $\tau_i = 2\delta(i_1-1)+1$ for some $i\in [n]$. It is easy to verify that $S = \{h_i\mid 2\delta(i-1)+1\in \scal\}$ is a vertex cover of $G$ with at most $\ell$ vertices. 
\fi
\end{proof}

Similarly as in the previous section, we can rather straightforwardly modify the above proof to obtain the \NP-hardness for \problemtwo and \problemthree.

\begin{theorem}\label{thm:fixedWindowProblemTwoThree}

    \problemtwo and \problemthree with fixed window constraints are both \NP-hard even when the window size is $2\delta$, for every $\delta \ge 1$.
\end{theorem}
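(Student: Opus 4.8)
The plan is to recycle the reduction behind Theorem~\ref{thm:fixedWindowProblemOne} almost verbatim and bolt on a ``memory'' mechanism so that every vertex that is ever activated stays active (up to a factor depending only on $\delta$) until the end of the lifetime; then the number of vertices active at the last step equals, up to an additive constant, the total number of activated vertices, which is exactly the quantity the vertex cover controls. Concretely, I would start from the temporal graph $\gcal$ built in Theorem~\ref{thm:fixedWindowProblemOne} from a cubic \textsc{VertexCover} instance $\tuple{H,\ell}$ -- source $s$, a vertex $v_j$ for each edge $e_j\in E(H)$, a vertex $u_i$ for each vertex $h_i\in V(H)$, windows of size $2\delta$, the edge $sv_j$ active at the first step of the windows of both endpoints of $e_j$, and the edge $su_i$ active at the $(\delta+1)$-st step of the $i$-th window -- and subdivide each of the edges $sv_j$ and $su_i$ once (equivalently, attach a pendant $v_j'$ to $v_j$ and a pendant $u_i'$ to $u_i$), keeping the old labels on the edges incident to $s$ and making each new edge $v_jv_j'$ and $u_iu_i'$ active at \emph{every} time step. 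The underlying graph is now a subdivision of a star with centre $s$, and all window sizes are still $2\delta$; one extends $\tmax$ by a constant so that all activations settle before the end, and for \problemthree one sets the designated step to $t^*=\tmax$.

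The memory gadgets are the only genuinely new ingredient. Once $s$ reaches $v_j$ (respectively $u_i$), the always-present edge $v_jv_j'$ makes the pair $\{v_j,v_j'\}$ self-sustaining, and no spurious activation can occur because this edge only starts to matter after $s$ has reached $v_j$. If $\delta\ge 2$, both vertices of an activated pair are active at every later time step; if $\delta=1$, a pendant cannot keep two adjacent vertices simultaneously active (that would require an odd cycle, and a subdivided star has none), but a short direct computation shows that activity then alternates between $v_j$ and $v_j'$, so that at \emph{every} step after activation exactly one of the two is active. In either case an activated gadget contributes a fixed positive constant $c_\delta$ ($=2$ for $\delta\ge 2$, $=1$ for $\delta=1$) to $\left|(\delta,\tcal)-\activeVertex_t(\gcal,s)\right|$ for all sufficiently large $t$, and this contribution is monotone non-decreasing in $t$.

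Both directions then mirror Theorem~\ref{thm:fixedWindowProblemOne} with $k=c_\delta(m+n-\ell)+1$. For the forward direction, a vertex cover $S$ of $H$ with $|S|\le\ell$ gives the same transmission schedule as before -- transmit at the first step of the windows indexed by $S$, at the $(\delta+1)$-st step of the remaining windows, and once in the dummy window, the latter chosen so that $s$ is itself active at $\tmax$ -- and since $S$ covers every edge, every $v_j$ and every $u_i$ with $h_i\notin S$ gets activated, so at $t^*=\tmax$ there are $c_\delta(m+(n-|S|))+1\ge k$ active vertices, which for \problemtwo is also a lower bound on $\max_t\left|(\delta,\tcal)-\activeVertex_t(\gcal,s)\right|$. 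For the reverse direction, monotonicity of the gadget contribution means the maximum over $t$ (what \problemtwo measures; for \problemthree it is just the value at the fixed $t^*$) is attained at $\tmax$ and equals $c_\delta$ times the number of activated gadgets plus an additive $0$ or $1$ for $s$, the same harmless constant as in Theorem~\ref{thm:fixedWindowProblemOne}. The normalization argument of that proof then carries over unchanged: we may assume every window transmits at its first or $(\delta+1)$-st step, and swapping a $u_i$-gadget transmission for a $v_j$-gadget transmission changes the count by $-c_\delta+c_\delta=0$; hence reaching $k$ forces all $m$ edge-gadgets and at least $n-\ell$ vertex-gadgets to be activated, so at most $\ell$ windows transmit at their first step and the corresponding vertices of $H$ form a vertex cover of size $\le\ell$.

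I expect the only real work, beyond transcribing Theorem~\ref{thm:fixedWindowProblemOne}, to be the $\delta=1$ memory gadget and the attendant bookkeeping: verifying the ``ping-pong'' behaviour of $\{v_j,v_j'\}$, confirming it contributes exactly $1$ to the count regardless of the parity of its activation step, and pinning down $k$ and the dummy-window transmission so that the $\pm 1$ coming from whether $s$ is active at the counted step lands on the correct side in both directions. The extra pendants cannot disturb the normalization moves, since those moves only edit the transmissions of $s$ and never touch the gadget edges, and containment in \NP, the \problemthree case, and the claims about the star structure and window sizes are immediate or inherited from the previous proof.
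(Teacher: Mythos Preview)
Your proposal is correct and takes essentially the same approach as the paper: attach a pendant with an always-active edge to every leaf of the star so that each activated vertex is kept alive by a ping-pong with its pendant, then set $t^*=\tmax$ for \problemthree and adjust $k$ by the factor $c_\delta\in\{1,2\}$ according to whether $\delta=1$ or $\delta\ge 2$. Your write-up is more explicit about the $\delta=1$ alternation and the $+1$ contribution of $s$, but the construction and the correctness argument are the same as the paper's.
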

\iflong
\begin{proof}
    Let $\tuple{\gcal, s, b, \delta, k}$ be an instance of \problemone with window size $2\delta$, such that the underlying graph of $\gcal$ is a star with source $s$. Since, $s$ cannot get influenced by vertices in $V(G)\setminus \{s\}$, it suffices to make sure that once we influence a vertex in $V(G)\setminus \{s\}$ it will stay influenced forever. This can be done by adding a pendant vertex to every leaf of the star and make the edge between this new vertex and the original leaf active in every time step (have all the labels in $[\tmax]$. It is rather straightforward to see that once a neighbor $v$ of $s$ is influenced, then it will influence the pendant $p_v$ adjacent to it and these two vertices will be influencing each other forever.
    The only thing we need to pay attention to is that if $\delta=1$, then these two vertices alternate in which is influenced and else they are both influenced all the time. 
    Hence to get instance $\tuple{\gcal', s, b, \delta, k'}$ of \problemtwo we need to set $k'=k$ if $\delta=1$ and $k'=2k$ otherwise. Finally, for \problemthree we just ask for the number of active vertices to be maximized at the time step $\tmax$.
\end{proof}
\fi

\subsection{Shifting Window Schedules}
The hardness results for the shifting windows case follow rather easily from the proofs in the previous section. For \problemone, we follow the same reduction as in Theorem~\ref{thm:fixedWindowProblemOne}, but: we add additional $\delta$ time steps between any two consecutive windows; each time step will be empty, i.e. it will not contain any edges; we ask for $(2\delta, 4\delta)$-shifting window. This guarantees that we can always transmit in every window of size $3\delta$, either at time step $3i\delta + 1$ or at time step $3i\delta + \delta +1$, which are the only two time steps within the window than contain some edges. Moreover, the lower value of the window, $2\delta$, guarantees that $s$ cannot be active during both of the time steps within one window. 
\begin{theorem}\label{thm:shiftingWindowProblemOne}
    \problemone with $(2\delta, 4\delta)$-shifting window constraints is \NP-hard, for every $\delta \ge 1$, even when in every time step there are at most $3$ active edges, every edge is active at most twice, and the underlying graph is a star with the center in the source $s$.     
\end{theorem}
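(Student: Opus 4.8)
The plan is to reuse the reduction from \textsc{VertexCover} on maximum-degree-three graphs from the proof of Theorem~\ref{thm:fixedWindowProblemOne}, ``stretching'' the timeline so that the rigid fixed windows can be realized by a $(2\delta,4\delta)$-shifting schedule. Given $\tuple{H,\ell}$ with $V(H)=\{h_1,\dots,h_n\}$ and $E(H)=\{e_1,\dots,e_m\}$, I keep the same vertices -- the source $s$, a vertex $v_j$ for each edge $e_j$, and a vertex $u_i$ for each $h_i$ -- and the same star underlying graph centered at $s$. I split the lifetime into length-$3\delta$ blocks $B_i=[3(i-1)\delta+1,3i\delta]$, with $B_i$ associated to $h_i$ for $i\in[n]$: each edge $sv_j$ with $h_i\in e_j$ gets the label $3(i-1)\delta+1$, the edge $su_i$ gets the label $3(i-1)\delta+\delta+1$, and the remaining $\delta$ steps of $B_i$ carry no edges -- these are the padding steps inserted between consecutive windows of the original reduction. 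I append one trailing empty ``dummy'' block, leave the budget non-binding, choose $k=n+m-\ell$ as in Theorem~\ref{thm:fixedWindowProblemOne}, and ask for a $(2\delta,4\delta)$-shifting schedule. The claimed restrictions are then immediate: the label $3(i-1)\delta+1$ carries $\deg_H(h_i)\le 3$ edges and every other label at most one; each $sv_j$ is active exactly at the (at most two) labels of its endpoints and each $su_i$ exactly once; and $G$ is a star with center $s$.

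For the ``if'' direction, given a vertex cover $S$ with $|S|\le\ell$, I transmit at time $3(i-1)\delta+1$ in $B_i$ whenever $h_i\in S$, and at time $3(i-1)\delta+\delta+1$ in $B_i$ whenever $h_i\notin S$. Consecutive transmissions lie in consecutive blocks, so their gap is one of $3\delta$, $4\delta$ (a ``first'' followed by a ``second''), or $2\delta$ (a ``second'' followed by a ``first''), all inside $[2\delta,4\delta]$; the schedule is thus feasible. A transmission at $3(i-1)\delta+1$ keeps $s$ active while the edges $sv_j$ incident to $h_i$ are present, so every $v_j$ gets activated (as $S$ is a cover), and a transmission at $3(i-1)\delta+\delta+1$ keeps $s$ active while $su_i$ is present, so every $u_i$ with $h_i\notin S$ gets activated; this meets the target $k$.

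For the ``only if'' direction, take any feasible $(2\delta,4\delta)$-shifting schedule activating at least $k$ vertices, and let $A\subseteq[n]$ be the set of $i$ with $s$ active at time $3(i-1)\delta+1$ and $B\subseteq[n]$ the set of $i$ with $s$ active at time $3(i-1)\delta+\delta+1$; here I use that $s$ is active at a time $t$ exactly when it transmits in $[t-\delta+1,t]$. Since the $v_j,u_i$ are leaves of the star and $s$ is never influenced by other vertices, the activated non-source vertices are exactly the $v_j$ such that some endpoint $h_i$ of $e_j$ has $i\in A$ (as $sv_j$ is present only at its two endpoint labels) together with the $u_i$ such that $i\in B$. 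The crucial point is $A\cap B=\emptyset$: a transmission making $s$ active at $3(i-1)\delta+1$ lies in $[3(i-1)\delta-\delta+2,\,3(i-1)\delta+1]$ and one making $s$ active at $3(i-1)\delta+\delta+1$ lies in the disjoint interval $[3(i-1)\delta+2,\,3(i-1)\delta+\delta+1]$, so any such pair differs by at most $2\delta-1$, forcing them to be consecutive transmissions and violating the window lower bound $2\delta$. Letting $c\le m$ be the number of activated $v_j$, the choice of $k$ gives $c+|B|\ge n+m-\ell$ (the source is always active and accounted for as in Theorem~\ref{thm:fixedWindowProblemOne}); hence $|B|\ge n-\ell$, and since $B\subseteq[n]\setminus A$ we get $|A|\le\ell$ and $c\ge m-\ell+|A|$, i.e.\ at most $\ell-|A|$ edges of $H$ have no endpoint in $A$. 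Adding one endpoint of each such edge to $A$ yields a vertex cover of $H$ of size at most $\ell$, completing the equivalence.

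The step demanding the most attention is matching the $(2\delta,4\delta)$ bounds to the $3\delta$-block geometry: $4\delta$ must be large enough that the honest schedule above is feasible, while $2\delta$ must be small enough to keep it feasible yet large enough to force $A\cap B=\emptyset$, i.e.\ to prevent $s$ from being active at both distinguished time steps of a single block -- exactly the trade-off flagged just before the statement. The only genuinely delicate corner is $\delta=1$, where the wave from a late transmission needs one extra time step to reach an adjacent leaf; this is what the trailing dummy block is for, handled exactly as in the proof of Theorem~\ref{thm:fixedWindowProblemOne}.
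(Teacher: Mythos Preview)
Your proposal is correct and follows essentially the same approach as the paper: you insert $\delta$ empty padding steps after each $2\delta$-window of the Theorem~\ref{thm:fixedWindowProblemOne} construction to obtain $3\delta$-blocks, and then argue that the $(2\delta,4\delta)$-shifting constraint both permits one transmission per block at either distinguished position and forbids $s$ from being active at both positions of a single block. The only cosmetic difference is in the ``only if'' direction: the paper implicitly reuses the normalization argument from Theorem~\ref{thm:fixedWindowProblemOne}, whereas you bypass normalization by working directly with the sets $A,B$ and extending $A$ to a vertex cover---a clean alternative that avoids having to re-argue that an arbitrary feasible schedule can be shifted to the two distinguished positions.
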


Following a similar argument to Theorem~\ref{thm:fixedWindowProblemTwoThree} \iflong of adding pendants to the leaves of the star in order to make vertices active forever (or alternating between original leaf and pendant in case of $\delta=1$) once activated\fi , we get\iflong the following\fi .  

\begin{theorem}\label{thm:shiftingWindowProblemTwoThree}
    \problemtwo and \problemthree $(2\delta, 4\delta)$-shifting window constraints is \NP-hard for every $\delta \ge 1$.
\end{theorem}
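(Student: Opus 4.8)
The plan is to mirror the fixed-window reductions from Theorems~\ref{thm:fixedWindowProblemOne} and~\ref{thm:fixedWindowProblemTwoThree}, transported to the $(2\delta,4\delta)$-shifting window setting via the device already used for \problemone in Theorem~\ref{thm:shiftingWindowProblemOne}. Concretely, I would start from the temporal graph $\gcal$ built in Theorem~\ref{thm:fixedWindowProblemOne} (a star with center $s$, one block of $2\delta$ time steps per vertex $h_i$ of the \textsc{VertexCover} instance, with the edge-vertices $v_j$ attached at the first step of each block and the vertex-gadgets $u_i$ attached at the $(\delta+1)$-st step), insert $\delta$ empty time steps between consecutive blocks so each block has span $3\delta$, and then apply the ``pendant'' trick from Theorem~\ref{thm:fixedWindowProblemTwoThree}: attach to every leaf $v_j$ and $u_i$ a private pendant vertex joined by an edge carrying all labels in $[\tmax]$. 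As observed there, once a leaf is activated it and its pendant keep reactivating each other forever (with the usual caveat that for $\delta=1$ they alternate), so a vertex being ``influenced at some point'' becomes ``influenced at every sufficiently late time step''.

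The key steps, in order, are: (1) state the reduction from \textsc{VertexCover} on max-degree-$3$ graphs, producing the shifting-window instance $\gcal'$ with pendants; (2) invoke the structural analysis of Theorem~\ref{thm:shiftingWindowProblemOne} to argue that in every $3\delta$-span block the source can transmit only at step $3i\delta+1$ or step $3i\delta+\delta+1$ (the only two steps with incident edges), and that the lower bound $2\delta$ forbids transmitting at both steps of one block; (3) as in the fixed-window proof, argue a normal-form: w.l.o.g.\ each transmission affecting a given block happens inside that block and at one of those two positions, and w.l.o.g.\ all $v_j$ get activated (swapping a $u_i$-transmission for a $v_j$-transmission never hurts); (4) conclude that a schedule reaching $k=n+m-\ell$ distinct active vertices corresponds exactly to a vertex cover of size $\le\ell$, choosing the ``first-step'' transmission for blocks of cover vertices and the ``$(\delta+1)$-st-step'' transmission otherwise, together with the dummy trailing window for the $\delta=1$ case. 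For \problemtwo I then count active vertices at a fixed late time step $t^\ast$ near $\tmax$: thanks to the pendants every activated leaf is active at $t^\ast$, so the maximum simultaneous count equals (roughly) twice the number of ever-activated leaves, giving threshold $k'=n+m-\ell$ for $\delta=1$ and $k'=2(n+m-\ell)$ otherwise; \problemthree is the same with $t^\ast$ named explicitly in the input.

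The main obstacle I expect is bookkeeping around $\delta=1$ and the block boundaries: with only $2\delta=2$ as the minimum gap and empty separators, a transmission near the end of one block may ``spill'' its activation into the first step of the next block, exactly the subtlety handled in the fixed-window proof by the ``apply the swap to the largest offending $i$'' argument. I would re-use that argument verbatim, checking that inserting the $\delta$ empty steps does not break it (it only widens the slack), and that the alternating behavior of leaf--pendant pairs for $\delta=1$ is accounted for in the choice of $k'$ and of $t^\ast$ (picking $t^\ast$ of the right parity so that, say, all original leaves rather than all pendants are active). Everything else — containment in \NP, the degree bound, the ``at most $3$ active edges per step, each edge active at most twice'' claims for the \problemone version — carries over directly from the cited theorems.

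\begin{proof}
    We only sketch the modifications to the proofs of Theorems~\ref{thm:fixedWindowProblemOne}, \ref{thm:fixedWindowProblemTwoThree}, and~\ref{thm:shiftingWindowProblemOne}. Starting from the temporal graph constructed in Theorem~\ref{thm:fixedWindowProblemOne}, we insert $\delta$ empty time steps after each $2\delta$-block, so that every block has span $3\delta$, and we set the shifting-window parameters to $(2\delta,4\delta)$. As argued in Theorem~\ref{thm:shiftingWindowProblemOne}, the only time steps carrying edges inside the $i$-th block are $3i\delta+1$ (with the up-to-three edges $sv_j$ for $e_j\ni h_i$) and $3i\delta+\delta+1$ (with the single edge $su_i$), and the constraint that consecutive transmissions differ by at least $2\delta$ forbids transmitting at both. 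To reduce to \problemtwo and \problemthree we additionally attach, as in Theorem~\ref{thm:fixedWindowProblemTwoThree}, a private pendant to every $v_j$ and every $u_i$, joined by an edge active in every time step; once a leaf is activated, it and its pendant reactivate each other until $\tmax$ (alternating when $\delta=1$). Choosing a fixed target time step $t^\ast$ close to $\tmax$ of the appropriate parity, the number of vertices active at $t^\ast$ equals (up to the $\delta=1$ factor) twice the number of ever-activated leaves of $\gcal'$. Exactly as in the proof of Theorem~\ref{thm:fixedWindowProblemOne}, using the same normal-form argument (each transmission can be assumed to act inside its own block and at one of the two admissible positions, and all $v_j$ can be assumed activated), a schedule activating $n+m-\ell$ leaves corresponds to a vertex cover of size at most $\ell$ of the max-degree-$3$ \textsc{VertexCover} instance, and conversely. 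Hence \problemtwo is \NP-hard with threshold $k'=n+m-\ell$ if $\delta=1$ and $k'=2(n+m-\ell)$ otherwise, and \problemthree is \NP-hard with the same threshold and $t^\ast$ given in the input.
\end{proof}
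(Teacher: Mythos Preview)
Your proposal is correct and follows essentially the same route as the paper: start from the shifting-window instance of Theorem~\ref{thm:shiftingWindowProblemOne} (the $3\delta$-blocks with $\delta$ empty separators), attach a pendant to every leaf with an always-available edge as in Theorem~\ref{thm:fixedWindowProblemTwoThree}, and set $k'=k$ for $\delta=1$ and $k'=2k$ otherwise, with $t^\ast=\tmax$ for \problemthree. The paper's own ``proof'' is a single sentence pointing to exactly this pendant trick, so your more explicit bookkeeping (the parity choice of $t^\ast$ for $\delta=1$, the normal-form argument carried over verbatim) only adds detail, not a different idea.
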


\section{Schedules on Periodic Graphs}
\label{sec:periodic}
In this section we investigate the complexity of our three problems on periodic temporal graphs. 
A periodic temporal graph is given to us in the exactly same way as the temporal graph defined in Section~\ref{sec:prelims}. The only difference is that after reaching the time step $t_{\max}$ instead of 
stopping the time, the temporal graph repeats itself from time step $1$. More precisely, given a {\em temporal graph} $\gcal := \tuple{G, \ecal}$, where $\ecal = (E_1, E_2, \ldots, E_{\tmax})$, if we say that $\gcal$ is a periodic graph, then edge with label $i \in [\tmax]$ is available not only in time step $i$ but in every time step $i +j\cdot\tmax$, where $j\in \mathbb{N}$. In this case, we call $t_{\max}$ the \emph{period} of $\gcal$. 

Note that while in previous sections we only needed to simulate the spreading process until time step $t_{\max}$, this is now not the case and the spreading process continues infinitely. However, our questions still make sense and we can even upper bound the number of steps we need to simulate. 
It is not so difficult to see that there are at most $(\delta+1)^{|V(G)|}$ different combinations of counter values on the vertices of the graph $G$.
Thus, after simulating at most $(\delta+1)^{|V(G)|}\cdot t_{\max}$ steps of the spreading process, some combination of counters will appear at time steps $i +j\cdot\tmax$ and $i +j'\cdot\tmax$ for some $i \in [\tmax]$.
Since the process is deterministic, if there is no new transmission between these two time steps, we end up repeating exactly the same sequence of combinations of counter values (and hence active vertices), as we have between these two time steps.
Therefore, the time between any two consecutive transmissions in a solution to any of our problems, does not need to be more than $(\delta+1)^{|V(G)|}\cdot t_{\max}$ steps apart. 
Thus, for any such solution we need to simulate $(\delta+1)^{|V(G)|}\cdot t_{\max}\cdot (b+1)$ steps, where $b$ is the budget we are given. 

In what follows, we consider the situation when the period $t_{\max}$ is small, i.e., we study the parameterized complexity of the problem with respect to the parameter $t_{\max}$. 
Before we prove our results, we would like to highlight that restricting the period is necessary if we hope to get any positive results for the problem. 
This is because adding $\delta+1$ new labels $E_{t_{\max}+1}, E_{t_{\max}+2},\ldots, E_{t_{\max}+\delta+1}$ where $E_j=\emptyset$ for all $j$ between $t_{\max}+1$ and $t_{\max}+\delta+1$, gives us a periodic graph in which all vertices (other than $s$) are always inactive when the period starts.
Hence, the idea described above effectively reduces the problem on periodic temporal graphs to the problem when we are given non-periodic temporal graphs. 

Let us now consider \problemone. Since we only care about which vertices are activated at least once, but do not care about synchronicity, we immediately get from the application of Lemma~\ref{lem:ApproxMain} that we only need to consider transmitting within the first period.
So, if $b\ge t_{\max}$, then clearly an optimal solution is to just transmit $t_{\max}$ times in total -- in every time step within the first period.
On the other hand, if $b< t_{\max}$, then we can enumerate all  $\binom{t_{\max}}{b} < 2^{t_{\max}}$ possibilities to transmit in $b$ different time steps within the first period. However, there is still one issue to resolve: our upper bound $(\delta+1)^{|V(G)|}\cdot t_{\max}\cdot (b+1)$ on the number of steps we need to simulate is too large. In the next theorem we prove that, if we only care about whether a vertex has been activated or not, then we can overcome this large upper bound and show that the number of steps we need to simulate is actually only at most $(|V(G)|+1)\cdot t_{\max}\cdot \delta$.

\begin{theorem}\label{thm:PeriodicSpread1}

    \problemone can be solved in $2^{t_{\max}}\cdot \poly(|V(G)|)$ time on a periodic temporal graph $\gcal=\tuple{G,\ecal}$ with period $t_{\max}$.
\end{theorem}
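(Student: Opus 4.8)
The plan is to reduce \problemone on a periodic temporal graph to an instance of \textsc{MaximumCoverage} that has only $\tmax$ sets, and then solve that instance by brute force. The first observation is that transmissions outside the first period are never needed: by periodicity, the spreading process triggered by a single transmission at a time step $\tau+j\tmax$ is exactly the $j\tmax$-shift of the one triggered by a single transmission at $\tau$, so both activate the same set of vertices at some point. Together with Lemma~\ref{lem:ApproxMain}, which yields $\bigcup_{t\ge 1}(\delta,\tcal)-\activeVertex_t(\gcal,s)=\bigcup_{\tau\in\tcal}\bigcup_{t\ge 1}(\delta,\tau)-\activeVertex_t(\gcal,s)$, this shows that the set of vertices ever activated by a schedule $\tcal$ depends only on the set of residues $\{\tau \bmod \tmax : \tau\in\tcal\}$. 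Hence we may assume $\tcal\subseteq[\tmax]$: if $b\ge\tmax$ the best schedule simply transmits in every time step of the first period, and otherwise $|\tcal|\le b<\tmax$.

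For each $\tau\in[\tmax]$, let $R_\tau:=\bigcup_{t\ge1}(\delta,\tau)-\activeVertex_t(\gcal,s)$ be the set of vertices ever made active by a single transmission at $\tau$. By the previous paragraph, the given instance of \problemone is equivalent to the \textsc{MaximumCoverage} instance with universe $V(G)\setminus\{s\}$ (the source is active whenever $b\ge1$ and can be accounted for separately, essentially as in Corollary~\ref{cor:MaxSpreadApproxRed}), set system $\{R_\tau:\tau\in[\tmax]\}$, budget $b$, and target $k$. Since this instance has only $\tmax$ sets, we can decide it by iterating over all $\binom{\tmax}{\min\{b,\tmax\}}\le 2^{\tmax}$ sub-collections and computing the size of each union. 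This takes $2^{\tmax}\cdot\poly(|V(G)|)$ time, \emph{provided} every $R_\tau$ can be computed in polynomial time.

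Computing $R_\tau$ means simulating the (now infinite) process started by one transmission at $\tau$, so the remaining task is to bound how many steps of the simulation are needed. The generic bound $(\delta+1)^{|V(G)|}\cdot\tmax$ coming from the number of distinct counter configurations is far too large; instead I would show that no vertex first becomes active after time step $\tau+(|V(G)|+1)\cdot\tmax\cdot\delta$. Let $B_t$ denote the set of vertices active at some time in $[\tau,t]$; it is non-decreasing and bounded by $|V(G)|$, so it strictly grows at most $|V(G)|$ times. The key lemma is: \emph{if $B$ does not grow during $\tmax\cdot\delta$ consecutive time steps, then it never grows again.} Granting this, at most $\tmax\cdot\delta$ steps lie between two consecutive growths, and one final window of length $\tmax\cdot\delta$ certifies that nothing more is activated, so $R_\tau$ is fully revealed -- and hence computable by plain simulation -- within $(|V(G)|+1)\cdot\tmax\cdot\delta$ steps.

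The main obstacle is proving this ``stall implies permanent'' lemma. Once the source has become permanently inactive (which happens $\delta$ steps after $\tau$, as the source is never refreshed by a neighbour), the counter configuration evolves under a deterministic update operator that is $\tmax$-periodic in time and, crucially, \emph{monotone}: raising some counters cannot decrease any counter at the next step. Suppose that after a stall of length $\tmax\cdot\delta$ an active vertex $u$ were adjacent via $E_{t^{*}}$ to some vertex $v\notin B_{t^{*}}$. The edge $uv$ reappears $\tmax$ steps earlier, and since $u$ can stay active for at most $\delta-1$ consecutive steps without fresh contact, one can trace $u$'s chain of refreshes backwards through the stall window and, using periodicity together with monotonicity of the update operator, locate an earlier time inside the window at which $u$ is active and the edge $uv$ is present -- which would have forced $v$ into $B$ already then, a contradiction. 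Turning this backward-tracing into a precise argument, and in particular confirming that a window of length exactly $\tmax\cdot\delta$ suffices, is the technical core; once the lemma is in hand, the algorithm and its analysis are routine.
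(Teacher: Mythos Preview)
Your high-level plan coincides with the paper's: both restrict attention to schedules $\tcal\subseteq[\tmax]$ (via Lemma~\ref{lem:ApproxMain} and periodicity), enumerate the at most $2^{\tmax}$ choices, and simulate each for polynomially many steps. The only real work is the simulation bound, and here your route diverges from the paper's.

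The paper does \emph{not} prove a stall lemma. Instead it argues globally: if a vertex $v$ is first activated at time $i$, trace a chain $(s,i_0),(u_1,i_1),\ldots,(u_q,i_q)=(v,i)$ where each $u_j$ was activated at $i_j$ because $u_{j-1}$ was an active neighbour at $i_j-1$. Since consecutive times differ by at most $\delta$, it suffices to bound $q$. If $q>|V(G)|\cdot\tmax$, pigeonhole on the pair (vertex, $i_j\bmod\tmax$) produces $j<j'$ with $u_j=u_{j'}$ and $i_j\equiv i_{j'}\pmod{\tmax}$; splicing out the segment between them yields a valid chain witnessing that $v$ is already active at $i-(i_{j'}-i_j)$, contradicting minimality of $i$. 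Hence $q\le|V(G)|\cdot\tmax$ and $i\le(|V(G)|+1)\tmax\delta$ directly.

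Your ``stall of length $\tmax\delta$ is permanent'' lemma is strictly stronger than what is needed, and the sketch you give does not establish it. If you run your own backward-tracing inside a stall window, the relevant pigeonhole is over pairs (vertex in $B_T$, phase), of which there are $|B_T|\cdot\tmax$; forcing a repeat therefore requires on the order of $|B_T|\cdot\tmax\delta$ steps inside the window, not $\tmax\delta$. Monotonicity of the one-period update map would help only if you knew $c_{t_0}\ge c_{t_0+\tmax}$ componentwise for some $t_0$ in the window, but a short stall does not force the counter vectors to become comparable. So either your lemma needs a window depending on $|B_T|$ (still giving a polynomial bound, just a looser one), or it needs an argument you have not supplied. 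The paper's global chain-shortening sidesteps the issue entirely and delivers the same $(|V(G)|+1)\tmax\delta$ bound with no detour through stalls.
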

\iflong
\begin{proof}
    \iflong Given two transmissions times $\tau_1 < \tau_2$ such that $\tau_2-\tau_1 = 0 \mod t_{\max}$, 
    it is straightforward to see that $(\delta,\tau_1)-\activeVertex_t(\gcal,s) = (\delta,\tau_2)-\activeVertex_{t+\tau_2-\tau_1}(\gcal,s)$. Therefore, $\bigcup_{t\in \mathbb{N}}(\delta,\tau_1)-\activeVertex_t(\gcal,s) = \bigcup_{t\in \mathbb{N}}(\delta,\tau_2)-\activeVertex_t(\gcal,s)$. Therefore, if we are looking for transmission schedule $\tcal$ that maximizes $\bigcup_{t\in \mathbb{N}}(\delta,\tcal)-\activeVertex_t(\gcal,s)$, it is a straightforward application of Lemma~\ref{lem:ApproxMain} (by replacing the order in which we do the union), we can always in $\tcal$ replace $\tau_1$ by $\tau_2$ and vice versa. Moreover, if $\tau_1$ is already in $\tcal$, adding $\tau_2$ does not change the set of activated vertices. Hence, we can assume that we only transmit within the first $t_{\max}$ time steps. Given this restriction, there are at most $2^{t_{\max}}$ possible transmission schedules and to determine whether we can activate at least $k$ vertices, it suffice to compute the set of activated vertices for each of the schedules of size $\min\{b,t_{\max}\}$. 
    It \fi 
    \ifshort As argued above the theorem, it \fi
    remains to show that given a fixed transmission schedule we can compute the set of activated vertices in polynomial time. To do so, we show that it suffices to simulate $(|V(G)|\cdot t_{\max}\delta$
    time steps after last time the source transmitted and after this step, no new vertex will be activated\iflong (there can still be new combinations of vertices activated at the same time, but they will consist of only the vertices that have already been active before)\fi . 
    
    From now on consider a fixed $\tcal\subseteq [t_{\max}]$ and a vertex $v$ that have been for the first time activated at time step $i$. We show that $i\le (|V(G)|+1)\cdot t_{\max}\cdot \delta$. The reason that $v$ got activated at time step $i$ is that there is a vertex $u$ that is active and a neighbour of $v$ at time step $i-1$ (i.e., $uv\in E_j$ for 
    $j=(i-2 \mod t_{\max})+1$). For that to be case, vertex $u$ was activated at time step $j \ge i-\delta$ either by its neighbour $w$ in time step $j-1$ or $u=s$ and it transmitted at $j$. If we continue this reasoning until we reach $s$ at some time step $i_0$, we can construct a sequence $(s, i_0), (u_1, i_1), (u_2, i_2), \ldots, (u_q, i_q) = (v,i)$ such that 
    for all $j\in [q]$ the vertex $u_j$ has been activated at the time $i_{j}$ because the vertex $u_{j-1}$ was its neighbor and at the same time active at the time step $i_{j}-1$.  
    Note that for all $j\in [q]$ we have $i_j - i_{j-1} \le \delta$ and $i_0\le t_{\max}$. Hence it suffices to show that $q\le |V(G)|\cdot t_{\max}$. Now assume, for the sake of contradiction, that $q > |V(G)|\cdot t_{\max}$, then there are two pairs $(u_j, i_j)$ and $(u_{j'}, i_{j'})$ such that $j < j'$, $u_j = u_{j'}$, and $i_j-i_{j'} = 0 \mod t_{\max}$. However, then the sequence $(s, i_0), (u_1, i_1), (u_2, i_2), \ldots, (u_j, i_j), (u_{j'+1}, i_{j'+1}- (i_j-i_{j'})), (u_{j'+2}, i_{j'+2}- (i_j-i_{j'})),\ldots,   (u_q, i_q - (i_j-i_{j'})) = (v,i-(i_j-i_{j'}))$ \ifshort contradicts the fact that $v$ has been activated for the first time in the time step $i$. \fi\iflong also give us a sequence such that for all $\ell\in [q]$ the vertex $u_\ell$ has been activated at the time $i_{\ell}$ because the vertex 
    $u_{\ell-1}$ was its neighbor and at the same time active at the time step $i_{\ell}-1$.  This is because the vertex $u_j=u_{j'}$ has been activated at the time step $i_j$ and because $i_{j'}- i_j$ is divisible by $t_{\max}$, it follows that edge $u_{j}u_{j'+1}$ is also active at time step $i_{j'+1}- (i_j-i_{j'}) - 1$, as it is active at time step $i_{j'+1}-1$. So, $u_{j'+1}$ gets indeed activated and repeating this argument inductively, $v$ gets activated at the time step $i-(i_j-i_{j'})$ which contradicts $i$ being the first time step when $v$ gets activated. It follows that $q\le |V(G)|\cdot t_{\max}$ and since $i_j - i_{j-1} \le \delta$ and $i_0\le t_{\max}$ it follows that  $(|V(G)|+1)\cdot t_{\max}\cdot \delta$. The theorem follows by trying all transmission schedules than transmit only in the first $t_{\max}$ steps and running the spreading process for $(|V(G)|+1)\cdot t_{\max}\cdot \delta$ steps for each. Each time we check if at least $k$ vertices have been activated.\fi 
\end{proof}
\fi

Notice that if $\delta > t_{\max}$, then every vertex with at least one neighbor in the underlying graph will remain active forever after being activated. Therefore, in this case, both \problemtwo and \problemthree always require budget at most two: transmitting at time step $1$ will make every connected component of $G -\{s\}$ active by time step $|V(G)|\cdot t_{\max}$; then transmit one more time to activate the singletons in $G -\{s\}$. On the other hand, perhaps surprisingly, we will show that the problem is \NP-hard whenever $\delta < t_{\max}$.

\begin{figure}
    \centering
    \includegraphics[width=\textwidth]{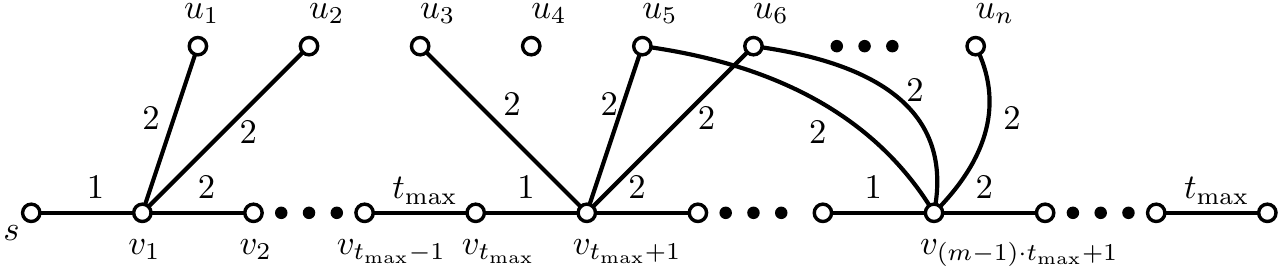}
    \caption{Illustration of the reduction described in Theorem~\ref{thm:PeriodicSpread2}. There is a path $s$, $v_1,\ldots,v_{m\tmax}$ with cycling labels $1,2,\ldots,\tmax,1,2\ldots$ and vertices $u_1,\ldots, u_n$. Vertex $v_1$ is adjacent to $u_1$ and $u_2$ with label $2$ on the two edges representing $S_0 = \{1,2\}$. Similarly, $v_{\tmax+1}$ is adjacent to $u_3,u_5,u_6$ with label $2$, representing $S_1=\{u_3,u_5,u_6\}$ and so on.} 
    \label{fig:periodic_hardness}
\end{figure}

\begin{theorem}\label{thm:PeriodicSpread2}

    \problemtwo and \problemthree are \NP-hard and \Wtwo-hard parameterized by the budget $b$. Moreover, this holds for every pair of $\delta, t_{\max}$, such that $\delta < t_{\max}$, $t_{\max}\ge 2$ and $\delta\ge 1$.
\end{theorem}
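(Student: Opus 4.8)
The plan is to reduce from \setcover, which by the preliminaries is \Wtwo-hard parameterized by its budget $b$, via a polynomial reduction that leaves $b$ unchanged, so that $b$ remains the parameter for \problemtwo and \problemthree. Given a \setcover instance $\tuple{\{S_1,\dots,S_m\}, N=[n], b}$ I will build a periodic temporal graph of period $\tmax$ essentially as in Figure~\ref{fig:periodic_hardness}: a ``wave path'' $s, v_1, v_2, \dots, v_{(m+1)\tmax}$ in which the edge $v_{i-1}v_i$ (with $v_0=s$) gets the single label $((i-1)\bmod \tmax)+1$, so by periodicity it is available exactly at the steps $\equiv ((i-1)\bmod\tmax)+1 \pmod{\tmax}$; then $n$ leaves $u_1,\dots,u_n$, where $u_i$ is joined to the ``slot vertex'' $v_{j\tmax+1}$ by an edge of label $2$ whenever $i\in S_j$ (note $v_1$ itself carries no slot). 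For \problemthree I set the query time $t^* = (m+1)\tmax+3$; in both problems I set the target $k = n + b\delta$ and keep the budget $b$.

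The correctness argument rests on three observations, each using $\delta<\tmax$. (1)~\emph{Normalization.} The only edge at $s$ is $sv_1$, carrying only label $1$, so a transmission, whose active window $[\tau,\tau+\delta-1]$ has length $\delta<\tmax$, meets this edge at most once, and does so iff that window contains a step $\equiv 1 \pmod{\tmax}$; hence every useful transmission is equivalent to one fired at some $\tau'\equiv 1\pmod{\tmax}$, and we may assume a schedule transmits only at such steps. (2)~\emph{Waves.} Such a transmission launches a wave moving one vertex per step, phase-locked with the cyclic labels, so $v_i$ is active precisely during $[\tau'+i,\tau'+i+\delta-1]$; for two distinct starts $\tau'\equiv 1\pmod{\tmax}$ the corresponding blocks of active path vertices are, at every step, disjoint and of size $\le\delta$ (they are $\ge\tmax>\delta$ apart). (3)~\emph{Slots.} When the wave reaches $v_{j\tmax+1}$ it activates each attached $u_i$ exactly $j\tmax+2$ steps after it activated $v_1$; since $u_i$ then stays active only $\delta<\tmax$ steps there is no back-propagation into the path and activations from different slots never overlap in time. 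Combining these: at any step $t\equiv 3\pmod{\tmax}$ the set of active $u$-vertices is exactly $\bigcup_{j\in J}S_j$, where $J$ contains, for each transmission, the unique $j$ with $\tau'+j\tmax+2=t$ (unique because a length-$\delta$ window, $\delta<\tmax$, hits at most one residue class), and at every step $t\not\equiv 3\pmod{\tmax}$ no $u$-vertex is active.

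Now the two reductions follow. For \problemthree at $t^*=(m+1)\tmax+3$: at most $b$ transmissions give at most $b$ disjoint path-blocks of size $\le\delta$, hence $\le b\delta$ active path vertices, and the active $u$-vertices are covered by $\le b$ of the $S_j$; so reaching $k=n+b\delta$ forces exactly $b$ transmissions whose blocks are full $\delta$-blocks (which happens for every slot $j\in[m]$, since $v_1$ is a non-slot buffer) and whose $b$ sets cover $N$ — a size-$b$ set cover. Conversely, from a cover of size $\le b$ transmit, for each chosen $S_j$, at time $t^*-j\tmax-2=(m+1-j)\tmax+1\equiv 1\pmod{\tmax}$, padding with transmissions for unused slots if the cover is smaller (possible since we may assume $b\le m$, as \setcover is trivial otherwise); this makes $b\delta+n$ vertices active at $t^*$. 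For \problemtwo the only addition is that, since no $u$-vertex is ever active at a step $\not\equiv 3\pmod{\tmax}$, the maximum over all steps of the number of active vertices is attained at some $t\equiv 3\pmod{\tmax}$, where the same count applies; so a schedule achieving $k=n+b\delta$ at some step again corresponds exactly to a size-$b$ set cover. Both reductions are polynomial and preserve $b$, yielding \NP-hardness and \Wtwo-hardness parameterized by $b$, and they go through verbatim for every $\delta\ge 1$, $\tmax\ge 2$ with $\delta<\tmax$.

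I expect the normalization step (Observation~1) to be the main obstacle: one must show that a transmission fired at an arbitrary time behaves like one fired at a time $\equiv 1\pmod{\tmax}$, which is exactly where $\delta<\tmax$ is essential — a window of $\delta$ consecutive steps cannot meet the lone label-$1$ edge $sv_1$ twice, so no wave can be re-launched mid-stream and the one-wave-per-transmission picture survives (if $\delta\ge\tmax$ this collapses, matching the easy case noted before the theorem). A secondary nuisance is the exact accounting of path vertices active at $t^*$ — ensuring it equals $b\delta$ and not merely at most $b\delta$ — which is handled by leaving $v_1$ slot-free so that every genuine set occupies an interior slot with a full $\delta$-block and by pushing $t^*$ far enough down the timeline; the case $\delta=1$ needs no special care since $b\delta=b$ still behaves correctly.
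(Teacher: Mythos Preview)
Your construction and argument mirror the paper's almost exactly: both build a path $s,v_1,v_2,\ldots$ with cyclically increasing labels so that each transmission at a step $\equiv 1\pmod{\tmax}$ launches a single forward wave, attach the element-vertices $u_i$ to designated slot vertices via label-$2$ edges, set $k=n+b\delta$, and argue that hitting $k$ active vertices at one time step forces the $b$ waves to cover all of $N$. The indexing differences (your extra buffer block and $t^*=(m+1)\tmax+3$ versus the paper's $t=(m-1)\tmax+\delta+2$) are cosmetic.

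The only substantive slip is in your \problemtwo reduction: the claim that ``no $u$-vertex is ever active at a step $t\not\equiv 3\pmod{\tmax}$'' is false whenever $\delta\ge 2$, since a $u$-vertex activated at time $\tau'+j\tmax+2$ remains active through $\tau'+j\tmax+\delta+1$, hitting residues $3,4,\ldots,\delta+2$. So you cannot conclude that the maximum is attained at a step $\equiv 3\pmod{\tmax}$ on those grounds. The fix is immediate and is essentially what the paper does: at \emph{any} time $t$, each normalized transmission still contributes at most $\delta$ path vertices (disjoint across distinct normalized starts since those are $\ge\tmax>\delta$ apart) and the $u$-vertices of at most one slot (the activity intervals $[\tau'+j\tmax+2,\tau'+j\tmax+\delta+1]$ for different $j$ are $\tmax>\delta$ apart), so the bound on active non-source vertices is $\le b\delta+n$ regardless of the residue of $t$, and reaching $k=n+b\delta$ forces a size-$\le b$ cover just as in your \problemthree analysis.
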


\begin{proof}
    Let $\tuple{\scal, N, b}$ be an instance of \setcover such that $\scal = \{S_0, S_1,\ldots, S_{m-1}\}$. Without loss of generality let us assume that $N= [n]$ for some $n\in \mathbb{N}$.
    Given any $\delta, t_{\max}$, such that $\delta < t_{\max}$, $t_{\max}\ge 2$, $\delta\ge 1$, we are going to construct an input $\tuple{\gcal, s, \delta, b, k}$ for \problemtwo and \problemthree such that if $\scal$ admits a set cover of size at most $b$, then there exists a transmission schedule $\tcal$ of size at most $b$ such that
    $\left|(\delta,\tcal)-\activeVertex_t(\gcal,s)\right| \geq k$, where $t=(m-1)\cdot t_{\max}+\delta+2$ and         
    $\max_{t \in [\tmax]}\left|(\delta,\tcal)-\activeVertex_t(\gcal,s)\right| < k$ otherwise. We set $k = n + b\cdot \delta$. 
    
    The temporal graph $\gcal$ is constructed as follows. See also Figure~\ref{fig:periodic_hardness} for the illustration of the reduction.
    The vertex set of $\gcal$ consists of: \ifshort 1) source vertex $s$; 2) $q = m\cdot t_{max}$ many vertices $V_1 = \{v_1, v_2, \ldots, v_q\}$; 3) vertices $V_2 = \{u_1, u_2, \ldots, u_n\}$, and the edges of \gcal are: 1) edge $sv_1$ with label $1$;
        2) for every $i\in [q-1]$, edge $v_{i}v_{i+1}$ with label $(i\mod t_{\max})+1$;
        3) for every set $S_j\in \scal$ and every $i\in S_j$, there is an edge between $v_(j\cdot t_{\max})+1$ and $u_i$ with label $2$.  \fi 
    \iflong
    \begin{itemize}
        \item source vertex $s$;
        \item $q = m\cdot t_{max}$ many vertices $V_1 = \{v_1, v_2, \ldots, v_q\}$;
        \item vertices $V_2 = \{u_1, u_2, \ldots, u_n\}$.
    \end{itemize}
     
    The temporal edges of $\gcal$ are as follows: 
    \begin{itemize}
        \item edge $sv_1$ with label $1$;
        \item for every $i\in [q-1]$, edge $v_{i}v_{i+1}$ with label $(i\mod t_{\max})+1$;
        \item for every set $S_j\in \scal$ and every $i\in S_j$, there is an edge between $v_(j\cdot t_{\max})+1$ and $u_i$ with label $2$. 
    \end{itemize}
    \fi

    \ifshort
        Since  edges $v_ju_i$ have all label $2$ and $\delta<\tmax$, it is easy to see that vertices in $V_2$ can only be activated in time step $\ell\cdot t_{\max} + 3$ for some $\ell\in \mathbb{N}$ and it will be active in time steps $\ell\cdot t_{\max} + 3, \ell\cdot t_{\max} + 4, \ldots, \ell\cdot t_{\max} + \delta + 2$. Similarly, $v_i$ for $i\in [q]$ can only be activated in time step $\ell\cdot t_{\max} + (i\mod t_{\max})+1$ from $v_{i-1}$ (where $v_0=s$). It is also straightforward to verify that we can assume that the source $s$ transmits only at time steps $\tau$ such that $\tau = \ell\cdot \tmax+1$. Now, if $s$ transmits at $\tau = \ell\cdot \tmax+1$, then at time step $t$, only the vertices $v_{t-\tau-\delta}, \ldots, v_{t-\tau-1}$ in $V_1$ are active. Moreover, if $t-\tau = j\cdot \tmax + r$, where $r\in \{2, 3, \ldots, \delta + 1\}$, then at time step $t$, the vertices $u_i\in V_2$ such that $i\in S_j$ are also active at the time step $t$ and these are the only vertices of $V_2$ that are activated at time step $t$ by the transmission at the time step $\tau$.
    \fi
 \iflong    
    Note that edges between $u_i$'s and $v_j$'s have all label $2$. If follows that any of $u_i$'s can only be activated in time step $\ell\cdot t_{\max} + 3$ for some $\ell\in \mathbb{N}$ and it will be active in time steps $\ell\cdot t_{\max} + 3, \ell\cdot t_{\max} + 4, \ldots, \ell\cdot t_{\max} + \delta + 2$. In particular, since $\delta < t_{\max}$, it will not be active in time step $(\ell+1)\cdot t_{\max} + 3$ anymore. Therefore, a vertex $v_j$, $j\in [q]$ will never be activated by $u_i$, $i\in [n]$. 
    Now, note that $(s, v_1, v_2, \ldots, v_q)$ forms a path such that the labels on the two consecutive edges always increase by one up to $t_{\max}$ and the label after $t_{\max}$ is $1$ again. This means that $v_i$ for $i\in [q]$ can only be activated in time step  $\ell\cdot t_{\max} + (i\mod t_{\max})+1$ from $v_{i-1}$ (or $s$ in case of $v_1$) and because $\delta < t_{\max}$ $v_i$ will not be anymore active at time step $(\ell+1)\cdot t_{\max} + (i\mod t_{\max})$, when the edge $v_{i-1}v_i$ is active 
    again. Given the above discussion, it is rather straightforward to verify that for the source $s$ it only make sense to transmit at the time step $\tau$ such that $\tau = \ell\cdot \tmax+1$ for some $\ell\in \mathbb{N}$, else either its counter runs out before spreading to $v_1$ or it only spreads to $v_1$ at the next time step $\tau'$ such that $\tau'\mod \tmax = 1$. Now, if $s$ transmits at $\tau = \ell\cdot \tmax+1$, then at the time step $t$, only the vertices $v_{t-\tau-\delta}, \ldots, v_{t-\tau-1}$ in $V_1$ are active. Moreover, if $t-\tau = j\cdot \tmax + r$, where $r\in \{2, 3, \ldots, \delta + 1\}$, then at the time step $t$, the vertices $u_i\in V_2$ such that $i\in S_j$ are also active at the time step $t$ and these are the only vertices of $V_2$ that are activated at time step $t$ by the transmission at the time step $\tau$. More precisely, if $t-\tau = j\cdot \tmax + r$, where $r\in \{2, 3, \ldots, \delta + 1\}$, then $(\delta,\tau)-\activeVertex_t(\gcal,s) = \{v_{t-\tau-\delta}, \ldots, v_{t-\tau-1}\}\cup \bigcup_{i\in S_j}\{u_i\}$, else $(\delta,\tau)-\activeVertex_t(\gcal,s) = \{v_{t-\tau-\delta}, \ldots, v_{t-\tau-1}\}$. 
\fi 

    Given the above discussion and Lemma~\ref{lem:ApproxMain}, we can quite easily argue that a transmission schedule $\tcal= \{\tau_1, \ldots, \tau_b\}$ that activates at least $k = n + b\cdot \delta$ many vertices at time step $t$, then we can construct a set cover of $\scal$ of size $b$. 
    Namely, it follows that if we transmit at most $b$ times, then at most $b\cdot (\delta - 1)$ many vertices in $V_1$ can be active at the same time and in order for $k = n + b\cdot \delta$ many vertices to be active at time step $t$, all the vertices in $V_2$ have to be active. It follows that $\tcal= \{\tau_1, \ldots, \tau_b\}$ such that 
    for all $i\in [b]$, $t-\tau_i = j_i\cdot \tmax + r_i$, where $r_i\in \{2, 3, \ldots, \delta + 1\}$, and $\bigcup_{i\in [b]}S_{j_i}=[n]$, therefore $\scal$ admits a set cover of size at most $b$. 
    
\ifshort
On the other hand, we can easily revert the above construction and postpone the first transmission such that the active vertices on the path are in the blocks representing set cover exactly at time step  $t = (m-1)\cdot \tmax+\delta+2$, at this time step all vertices in $U$ activated by the first vertex in these blocks are still active.
\fi
\iflong  
    On the other hand, the above discussion implies that if $t=(m-1)\cdot \tmax+\delta+2$, then for every $\tau$ such that $\tau = (m-j)\tmax + 1$ 
    it holds that $(\delta,\tau)-\activeVertex_t(\gcal,s) = \{v_{j\tmax+2},v_{j\tmax+3},\ldots,v_{j\tmax+\delta+1}\}\bigcup_{i\in S_j}\{u_i\}$. Hence, it follows that if $S_{j_1}, \ldots, S_{j_b}$ is a set cover, then by $\tcal = \{ (m-j_b)\tmax + 1, (m-j_{b-1})\tmax + 1, (m-j_1)\tmax + 1 \}$ is a transmission schedule that actives all the vertices in $V_2$ and $b\cdot \delta$ many vertices in $V_1$ at the time $t$, that is  $\left|(\delta,\tcal)-\activeVertex_t(\gcal,s)\right| = k$. Hence, if $\scal$ admits a set cover of size at most $b$, then we also have a transmission schedule such that $\left|(\delta,\tcal)-\activeVertex_t(\gcal,s)\right| \ge k$ satisfying both \problemthree at time step $t = (m-1)\cdot \tmax+\delta+2$ and \problemtwo.

    Note that this reduction is polynomial time reduction and the size of set cover translates to the size of the transmission schedule. Hence the statement of the theorem follows from the fact that \setcover is \NP-hard and \Wtwo-hard by the size of the sought set cover.
\fi
\end{proof}
\section{Discussion}
\label{sec:discuss}

In this paper we have explored the complexity of influence maximization on temporal graphs with a single fixed source. We have focused on four objectives, \problemone, \problemtwo, \problemthree, and \problemfour under three different settings, which are naturally motivated by real life scenarios. We have proved that in almost every case, the problem is intractable. 

In this section we discuss the connections our model has with some other problems arising in temporal graphs; we compare our spreading dynamics with the ``standard'' SIS model; and we highlight some open questions that deserve extra study.

\medskip
\noindent {\bf Comparison with the SIS model.} As we have explained in Section~\ref{sec:prelims}, the spreading dynamics we consider allow, ``renewal'' of the influence. In other words, a vertex will reset its counter to $\delta$ every time it is adjacent to an active vertex, even if it is {\em already} active. In contrast, the original SIS model does not allow this; a vertex $u$ becomes active at time step $t+1$ only if at time step $t$, vertex $u$ is (a) inactive and (b) adjacent to an active vertex. 
This difference makes the two processes behave very differently. In fact, under the SIS model, the size of active vertices does not monotonically increase with the number of transmissions; Figure~\ref{fig:sis-comparison} demonstrates this.
\begin{figure}
    \centering
    \includegraphics[width=0.90\textwidth]{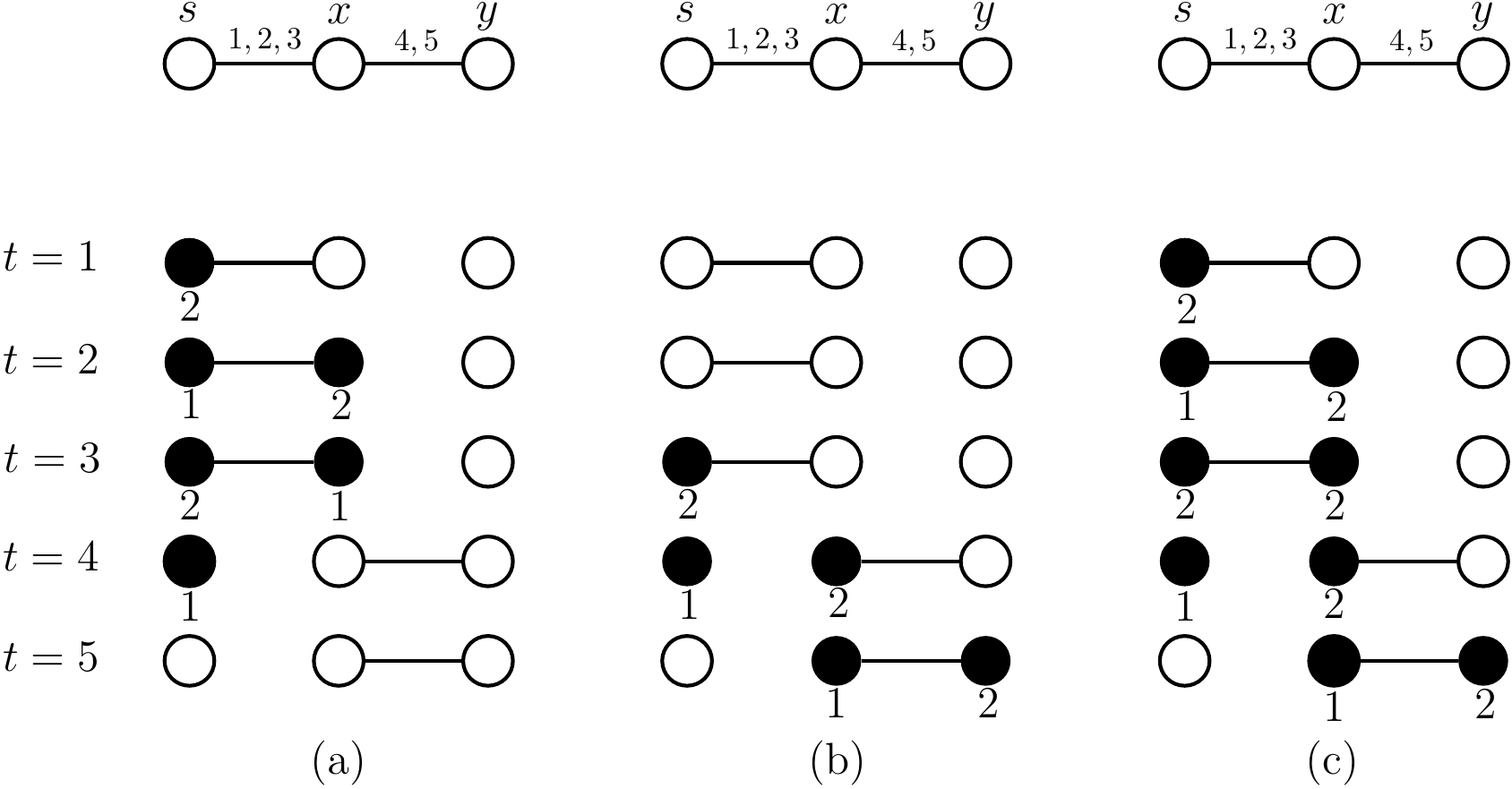}
    \caption{Comparison between transmission schedules for SIS and our spreading dynamics where $\delta=2$, on the temporal path with vertices $s, x, y$. (a) SIS dynamics with (a) $\tcal= (1,3)$; (b) SIS dynamics with $\tcal= (3)$; (c) our dynamics with $\tcal= (1,3)$. Solid vertices depict active vertices. The number below each vertex is its counter at the specific timestep.}
    \label{fig:sis-comparison}
\end{figure}
\noindent
It is not too difficult to verify that all of our hardness results, both \NP-hardness and \Wtwo-hardness, apply under the SIS spreading model. This is because the structure of our instances is designed in a way that does not allow for renewal. On the other hand though, our positive result for periodic graphs is no longer valid under SIS model. The complexity of the problem is an interesting question, mainly because of the non-monotonicity of the active vertices when SIS is used.

\medskip
\noindent {\bf Open Problem \showmycounter.} What is the complexity of \problemone for periodic graphs under the SIS spreading model?

\medskip
\noindent {\bf Connection to restless temporal walks.}
A {\em temporal walk} in $\tuple{G, \ecal}$ from vertex $v_1$ to vertex $v_w$ is a sequence of edges $W = (v_iv_{i+1}, t_i)_{i=1}^{w-1}$ such that for every $i \in [w]$ it holds that $v_iv_{i+1} \in E_{t_i}$, i.e. $v_iv_{i+1}$ is available at time step $t_i$ and time steps are strictly increasing, i.e. if $i < j$ then $t_i < t_j$.  Observe that a vertex $v$ can appear multiple times in a temporal walk; in a temporal path this is {\em not} allowed. 
A {\em temporal walk} $W = (v_iv_{i+1}, t_i)_{i=1}^{w-1}$ is called $\delta$-restless if $t_i<t_{i+1}<t_i+\delta$, for every $i \in [w]$.
Then observe that $(\delta, \tau)$-$\activeVertex_t(\gcal,s)$ is equal to the set of vertices for which there exists a $(\delta, \tau)$-restless temporal walk from $s$ with arrival time between $t$ and $t-\delta$. 
Restless walks have been studied in the past~\cite{bentert2020efficient}, albeit  from a different point of view compared to ours.

\medskip
\noindent {\bf Open questions.}
We have resolved the complexity of the four objectives for almost every class of graphs. Though, there exist some intriguing questions that will complete the complexity-landscape of the problem. 

\medskip
\noindent {\bf Open Problem \showmycounter.} Do \problemtwo and \problemthree on periodic graphs belong to \NP, or are they complete for some other class, like \PSPACE? 

Observe that the infinite lifetime of the graph makes the problem of verifying objectives \problemtwo and \problemthree non trivial. An intermediate question is the following, since our hardness reduction cannot be trivially extended in order to resolve it.

\medskip
\noindent {\bf Open Problem \showmycounter.} Are \problemtwo and \problemthree on periodic graphs \NP-hard when $\delta = \tmax$, i.e. when $\delta$ equals the period of the graph?

The next question is concerned about the case where the underlying graph is a path. Some initial observations show that \problemone in the unconstrained setting is tractable. However, for the remaining combinations of objectives and transmission schedules, the problem remains wide open.

\medskip
\noindent {\bf Open Problem \showmycounter.} What is the complexity of the four objectives when the underlying graph is a path?

\medskip

The last question concerns with studying \problemfour in the last two transmission schedules.

\medskip
\noindent {\bf Open Problem \showmycounter.} What is the complexity of \problemone for window constrained schedules and periodic graphs?





\newpage

\bibliographystyle{plainurl} 
\bibliography{references}


\end{document}
